\documentclass[]{article}

\usepackage[margin=2.5cm]{geometry}
\usepackage{authblk}
\usepackage{standalone}
\usepackage[hidelinks]{hyperref}
\usepackage{doi}
\usepackage[shortlabels]{enumitem}
\usepackage{physics}
\usepackage{nicematrix}
\usepackage{tikz}
\usetikzlibrary{positioning, arrows.meta, shapes.geometric}

\usepackage{amsthm}
\usepackage{amsmath}
\usepackage{amsfonts}
\usepackage{amssymb}

\theoremstyle{plain}
\newtheorem{theorem}{Theorem}
\newtheorem{lemma}[theorem]{Lemma}
\newtheorem{corollary}[theorem]{Corollary}
\newtheorem{proposition}[theorem]{Proposition}

\theoremstyle{definition}
\newtheorem{definition}{Definition}
\theoremstyle{remark}
\newtheorem{remark}{Remark}

\numberwithin{theorem}{section}
\numberwithin{remark}{section}
\numberwithin{equation}{section}

\newcommand{\email}[1]{\footnotesize\texttt{#1}}

\title{Computational Cryptography from Pseudoentanglement}
\author[1]{Ilia Ryzov${}^*$}
\author[2]{Manuel Goulão${}^*$}
\author[1]{Faedi Loulidi}
\author[1]{David Elkouss}

\affil[1]{{\small Okinawa Institute of Science and Technology Graduate University, Japan\authorcr \email{\{ilia.ryzov,faedi.loulidi,david.elkouss\}@oist.jp} }}
\affil[2]{{\small INESC-ID, Instituto Superior Técnico, Universidade de Lisboa, Portugal\authorcr \email{manuel.goulao@inesc-id.pt}}}

\date{}

\begin{document}

\maketitle

\let\svthefootnote\thefootnote
\let\thefootnote\relax
\footnotetext{$^*$Equal contribution.}
\let\thefootnote\svthefootnote
\setcounter{footnote}{0}

\begin{abstract}
    The advent of pseudoentanglement and computational entanglement theory bootstrapped a wave of research at the intersection of computer science and information theory.
    In parallel, computational cryptography has undergone substantial development, prompted by the introduction of pseudorandom states and followed by the establishment of a baseline for the computational hardness required for quantum cryptography, from which EFI pairs emerge as a central primitive.

    We study the connection between pseudoentanglement and computational cryptography through EFI pairs.
    Our goal is to enable the use of resources arising from computational entanglement theory in the field of cryptography.
    For this, we establish the relation between operational instances of pseudoentanglement and the hierarchy of minimal assumptions for computational cryptography.
    We show that the existence of pseudoentanglement under two different operational definitions, with efficient state generation, is a sufficient condition for the existence of EFI pairs.
    Combined with a previously established result that the converse also holds under the second definition, this allows us to also demonstrate their equivalence.
    This places pseudoentanglement alongside other minimal assumptions in cryptography, not only offering an alternative perspective on this fundamental problem, but also building a bridge that allows insights from either area to inform the other.
    While proving these theorems, we introduce and demonstrate technical lemmas in quantum information and computational entanglement theory, relating the computational entanglement measures to the distance between states, establishing distinguishing conditions for mixtures of two families given pairwise distances between their states, and demonstrating the first continuity relation for a computational entanglement measure.
\end{abstract}

\clearpage
\section{Introduction}
A central question in theoretical cryptography is the study of the computational hardness assumptions that are required for different, more or less powerful, cryptographic constructions and enable their functionalities.
In classical cryptography, a concept on which all computational cryptography (when parties run in Probabilistic Polynomial Time (PPT)) relies is the existence of One-Way Functions (OWFs) --- functions that are efficiently computable in the forward direction but hard to invert.
Yet, the question of the existence of OWFs is known to be hard, as an affirmative answer would imply that P$\neq$NP.
 
However, when one considers quantum cryptography, and parties run in Quantum Polynomial Time (QPT), there exist computational primitives that are weaker than OWFs, i.e., there exist computational hardness assumptions that enable cryptographic constructions that can be built even if OWFs do not exist, as first demonstrated with pseudorandom quantum states~\cite{C:JLS18,AQY22}. 
Particularly relevant for this work is the concept of EFI pairs~\cite{BCQ23} --- families of quantum states that are efficiently preparable and statistically far while computationally indistinguishable. 
EFI pairs are arguably the current best candidate to serve as the minimal assumption in quantum computational cryptography. This is since they are implied by most computational cryptography primitives, such as pseudorandom states, and is equivalent to commitment schemes, oblivious transfer, and secure multiparty computation~\cite{BCQ23}.

In a separate line of research, in~\cite{ABF24,ABV23}, the concept of pseudoentanglement --- families of states with low entanglement that are computationally indistinguishable from families with higher entanglement --- was independently introduced.
While the definition of~\cite{ABF24} first introduced this concept to relate pseudorandom states~\cite{C:JLS18} to physical applications such as matrix product state testing or the AdS/CFT correspondence, the definition of~\cite{ABV23} focuses on establishing an \emph{operational framework} for the study of interactive protocols under Local Operations and Classical Communication (LOCC), and is hence a natural fit for cryptography.
In what follows, we will work in the setting of the second, operational definition, in particular targeting the open question posed in~\cite{ABV23} relating pseudoentanglement and cryptography.

The study of relating pseudoentanglement and computational cryptography was started in~\cite{GE24}, where it is established that an adaptation of the definition of pseudoentanglement, inefficiently-distillable pseudoentanglement, is necessary for the existence of EFI pairs. 
Subsequently,~\cite{GY25} showed that yet another adaptation of the definition of pseudoentanglement, regularized relative pseudoentanglement, constitutes a sufficient condition for the existence of EFI pairs.
Still, the relationship between the original definition of pseudoentanglement in~\cite{ABV23} and quantum cryptography remained unknown, and given its operational interpretation, it stands as the prime candidate for the definition to work under in the setting of the intrinsic operational field of computational cryptography.
In this work, we address this question and study the relation between the different operational definitions of pseudoentanglement and cryptography.
In particular, we focus on the broad open question:
\begin{center}\em
    What is the relation between pseudoentanglement and cryptography?
\end{center}
We answer this question by exhibiting multiple connections and placing different definitions of pseudoentanglement in relation to EFI pairs (and thus to most computational cryptography).
We present our main conceptual contributions in Sections~\ref{sec: pseudo abv to efi} and~\ref{sec: equiv}, building on technical results presented in Section~\ref{sec: error continuity and gap} --- a continuity bound on the computational entanglement cost, a relation between gaps in computational entanglement and trace distance, and an amplification result for the trace distance between mixtures of states.
Since the goal of this work is to enable the use of operational notions of pseudoentanglement for quantum cryptography, we assume throughout that the families underlying pseudoentanglement are efficiently preparable.
First, in Section~\ref{sec: pseudo abv to efi}, we show that pseudoentanglement as defined in~\cite{ABV23}, which we call \emph{fully-computational pseudoentanglement}, is sufficient to construct EFI pairs, and give an explicit construction.
Second, in Section~\ref{sec: equiv}, we build on~\cite{GE24} and prove the equivalence between their version of pseudoentanglement, which we call \emph{inefficiently-distillable pseudoentanglement}, and EFI pairs.

Our results have strong implications not only in the field of quantum cryptography but also in the new area of computational entanglement theory.
First, our work is the first to connect the pseudoentanglement definition of~\cite{ABV23} to EFI pairs and hence to cryptography.
It offers yet another perspective on the role of EFI pairs as a candidate for minimal assumption.
As observed in~\cite{GY25}, no cryptographic primitive has been constructed directly from the original operational definition of~\cite{ABV23}.
Explicitly, our results imply that if EFI pairs do not exist, then fully-computational pseudoentanglement does not exist.
Given the proposed physical interpretations of pseudoentanglement~\cite{ABV23}, this result provides a potential experimental platform for constructing EFI pairs and other cryptographic primitives from physical phenomena.
Second, combined with the results of~\cite{GE24}, our work opens a new direction to explore computational hardness relations in cryptography within the framework of computational entanglement. Precisely, it gives another completely different but equivalent primitive to EFI pairs, and provides a new resource for building this weaker level of minimal assumptions in computational cryptography. 
We illustrate our main conceptual contributions in Figure~\ref{fig:main work ideas}.

\begin{figure}[tb]
    \centering
    \includegraphics[width=.66\linewidth]{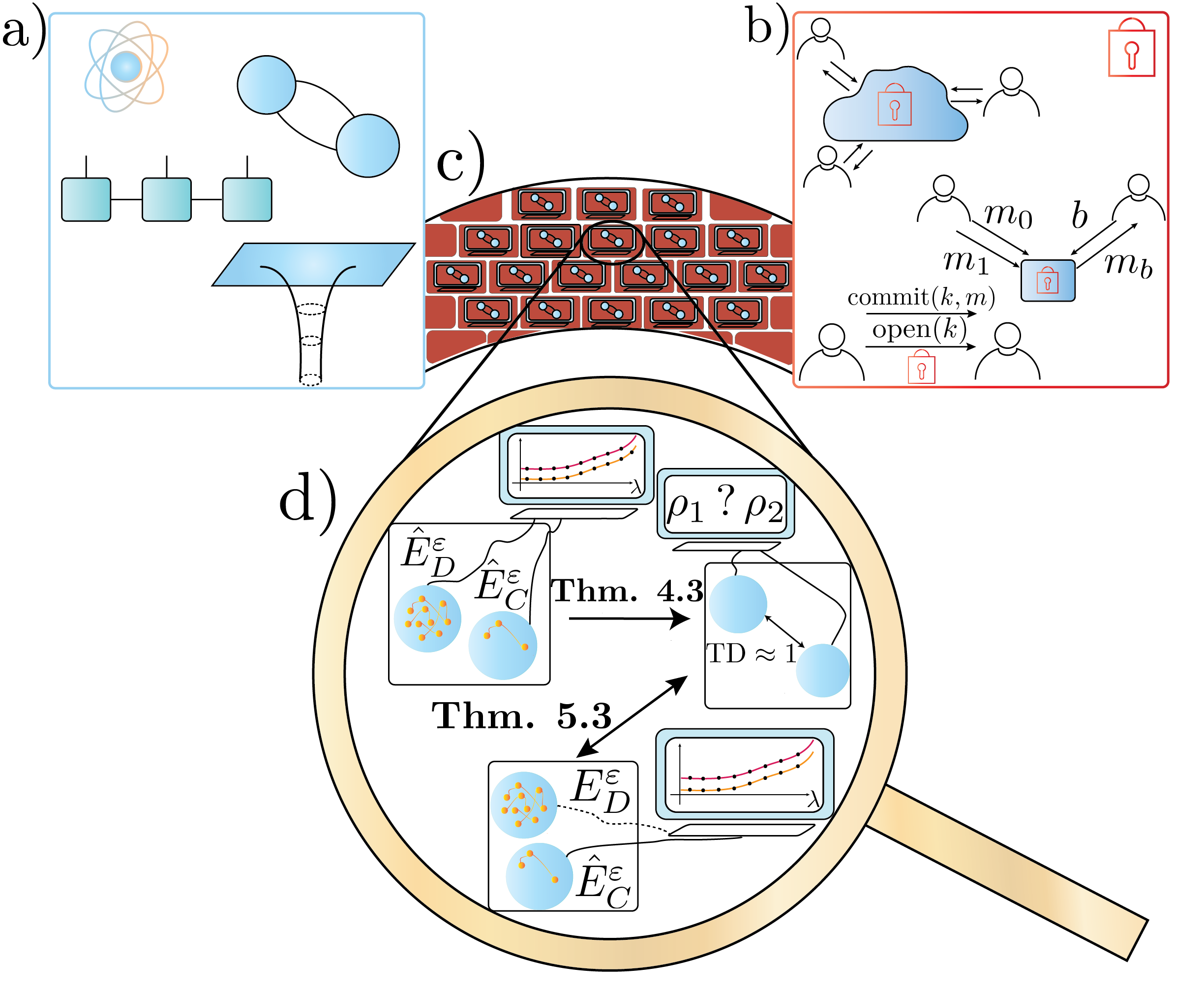}
    \caption{Depiction of the main conceptual results of this work, relating the worlds of physics and computational cryptography. a) Physical world populated by (top to bottom) entanglement, matrix product states, AdS/CFT correspondence; b) Computational cryptography world with (top to bottom) multiparty computation, oblivious transfer and commitments; c) Bridge between physical and computational cryptography worlds; d) Zoom in on the building block of the bridge, with relations between (clockwise from the upper-left corner) fully-computational pseudoentanglement, EFI pairs, and inefficiently-distillable pseudoentanglement.}
    \label{fig:main work ideas}
\end{figure}

\subsection{Contributions}
   In this section, we give an overview of the main contributions of this work.
   We separate our results into two groups: conceptual and technical.
   While the conceptual results are technical in themselves, we believe they have an impact on computational cryptography that goes beyond what is inherent in the mathematical techniques.
   On the same note, we highlight some of our technical results used to prove the main theorems, as they may be of independent interest.
   We highlight our conceptual results as relations in the field of minimal assumptions for cryptography in Figure~\ref{fig:relations}.
    \begin{figure}[tb]
        \centering
        	\begin{tikzpicture}[
		mynode/.style={
			rectangle,
			rounded corners=6pt,
			draw,
			align=center,
			minimum width=3.2cm,
			minimum height=1.4cm
		},
		myarrow/.style={-{Implies}, double, double distance=2pt},
		myarrow2/.style={{Implies}-{Implies}, double, double distance=2pt},
		myarrow3/.style={{Implies}-, double, double distance=2pt}
		]
		
		\node[mynode] (b) at (0,0) {Regularized Relative\\ \textit{Pseudoentanglement}};
		\node[mynode] (c) at (6,0) {Fully-Computational\\ \textit{Pseudoentanglement}};
		\node[mynode] (d) at (12,0) {Pseudorandom\\ States};
		
		\node[mynode] (a) at (9,3.5) {One-Way\\Functions};
		\node[circle, draw, dashed, minimum size=1.5cm] (h) at (5.5,3.5) {P\(\neq\)NP};
		
		\node[mynode] (e) at (0,-3.5) {Inefficiently-Distillable\\ \textit{Pseudoentanglement}};
		\node[mynode] (f) at (6,-3.5) {EFI Pairs};
		\node[mynode] (g) at (12,-3.5) {Commitments,\\ Oblivious-Transfer,\\ MPC};
		
		\draw[myarrow] (a) -- node[above left] {\cite{ABV23}} (c);
		\draw[myarrow] (a) -- node[above right] {\cite{C:JLS18}} (d);
		\draw[myarrow] (b) -- node[above right] {\cite{GY25}} (f);
		\draw[myarrow] (d) -- node[right] {\cite{AQY22}} (g);
		\draw[myarrow] (c) -- node[right] {\textbf{Thm. \ref{thm:pe_to_efi}}} (f);
		
		\draw[myarrow] (a) -- node[] {} (h);
        \draw[myarrow] ([yshift=3pt]e.east) -- node[above] {\textbf{Lemma  \ref{lemma: pse imply efi GE}}} ([yshift=3pt]f.west);
        \draw[myarrow] ([yshift=-3pt]f.west) -- node[below] {\cite{GE24}} ([yshift=-3pt]e.east);		\draw[myarrow2] (f) -- node[below] {\cite{BCQ23}} (g);

    \draw[dashed, thick] (-1.75,2.5) -- (14,2.5);
    \node[above, align=center] at (-1,2.6) {Classical\\ World};    \node[below, align=center] at (-1,2.4) {Quantum\\ World};        	

\end{tikzpicture}
        \caption{Relations of pseudoentanglement with computational cryptography.\protect\footnotemark}         
        \label{fig:relations}
    \end{figure}

\medskip
\noindent\textbf{Conceptual contributions}

    \smallskip
    \textit{Fully-computational pseudoentanglement implies EFI:}
    We show that the existence of EFI pairs is a necessary condition for the existence of pseudoentanglement as introduced in~\cite{ABV23}, presented in Section~\ref{sec: pseudo abv to efi}.
    One of the consequences is that computational cryptography may be constructed from pseudoentanglement, as our proof provides an explicit construction of EFI pairs from pseudoentanglement.
    EFI pairs can then be used to construct other cryptographic primitives, such as commitment schemes, oblivious transfer, or secure multiparty computation~\cite{BCQ23}, and so, one can also efficiently build these from pseudoentanglement.
    Since this connection enables the creation of EFI pairs from physical properties related to computational hardness~\cite{ABV23}, it offers a new path for constructing computational cryptography from emergent physical phenomena.
    Hence, this result further establishes EFI pairs as a basal primitive in the study of minimal assumptions for computational cryptography.

    \smallskip
    \textit{Inefficiently-distillable pseudoentanglement implies EFI:}
    We demonstrate that the existence of inefficiently-distillable pseudoentanglement (as introduced in~\cite{GE24}, extending~\cite{ABV23}) implies the existence of EFI pairs.
    This result, combined with the previous results of~\cite{GE24}, establishes that this version of pseudoentanglement is equivalent to EFI pairs, meaning that it is both necessary and sufficient for their existence.
    This result, presented in Section~\ref{sec: equiv}, has two consequences.
    First, it extends the base level of current minimal assumptions for computational cryptography with a new primitive, pseudoentanglement, that has a different structure, framing, and motivation from the previous primitives at this level.
    Second, it introduces an emerging physical concept into the study of computational hardness, opening new possibilities for connecting other primitives or obtaining impossibility results by bridging known facts from other areas, relevant in either direction (using physics results for cryptography or cryptography results for physics).

\footnotetext{For a broader overview of relations in computational cryptography (albeit without pseudoentanglement), see \url{https://sattath.github.io/microcrypt-zoo/}.}

\medskip
\noindent \textbf{Other technical results}

    \smallskip
    \textit{Error-continuity of $\hat E^\varepsilon_C$:}
    We show that if two families of states are close to each other in trace distance, then if the first family has a valid upper bound on the computational entanglement cost $\hat E^\varepsilon_C$, the other must also have the same valid upper bound.
    Still, the new dilution error of the states becomes a function of the previous one and of the trace distance between the states.
    Moreover, if both the error of the first family and its trace distance from the second family go to zero, then the error of the second family also goes to zero.
    This establishes the first continuity relation on the recently introduced computational entanglement measures.
    These results are stated in Lemma~\ref{lemma: continuity cost to the same epr} and Corollary~\ref{cor: continuity cost uniform}.
    
    \smallskip
    \textit{Computational entanglement vs.\ trace distance:} 
    Building on top of the previous result, we derive a relation between the existence of an entanglement gap between two families of states, and the trace distance between states in those families (Lemma~\ref{lemma: connection of td and cd}).
    We show that if states in two families are close to each other in trace distance, then the entanglement gap quantified by the computational one-shot distillable entanglement $\hat E_D^\varepsilon$ and computational entanglement cost $\hat{E}^\varepsilon_C$ cannot exist. 
    As an immediate consequence, the contrapositive of this argument establishes that the existence of a gap in entanglement implies that the states in the corresponding families are far from each other in trace distance. 

    \smallskip
    \textit{Distinguishing between mixtures of distant states:}
    In another direction, we show that, given polynomially many copies of an unknown (mixed) states whose pairwise trace distances across the two families are non-negligible, one can identify the source family with a probability exponentially close to one in the number of copies (Lemma~\ref{lemma: td of mixtures}).
    This result is consistent with the usual polynomial amplification techniques in cryptography. However, it is directly applicable to mixtures of states, making it potentially useful for tasks such as state discrimination.

\subsection{Technical Overview}
Here, we provide an overview of the central concepts and technical results established in this work. 
We first recall the notions that we deal with in this work: pseudoentanglement and EFI pairs. 
We then provide a summary of our main technical results: 
the error-continuity of computational entanglement cost $\hat E_C^\varepsilon$, the relation between the gap in computational entanglement and the trace distance, and near-perfect distinguishability between uniform mixtures of states.
Finally, we outline our conceptual results, which consist of relations between the existence of different notions of pseudoentanglement, fully-computational pseudoentanglement~\cite{ABV23} and inefficiently-distillable pseudoentanglement~\cite{GE24}, and the existence of EFI pairs.

\medskip\noindent
\textbf{Background concepts}

\smallskip
\textit{Pseudoentanglement:}
Informally, a family of states $\{\psi_{AB}^\lambda\}_\lambda$ is said to be pseudoentangled if it seems to possess entanglement from the perspective of a QPT observer, while not actually having it.
That is, $\{\psi_{AB}^\lambda\}_\lambda$ possesses a ``low'' amount of entanglement (\(c\)), but for an observer  limited to efficient computations, this family is indistinguishable from another family $\{\phi_{AB}^\lambda\}_\lambda$ that has a ``high'' amount of entanglement (\(d\)), with \(c < d\). 
Since the states may be mixed, this notion depends on the choice of entanglement measure, leading to multiple incompatible definitions of pseudoentanglement.
In this work, we focus on two versions of pseudoentanglement that use operational entanglement measures.
In particular, we work with the one-shot entanglement cost --- states can be reconstructed with $c(\lambda)$ EPR pairs and error at most $\varepsilon(\lambda)$ ---, and the one-shot distillable entanglement --- states can be used to produce $d(\lambda)$ EPR pairs with error at most $\varepsilon(\lambda)$ ---, and their computational versions (LOCC is performed by QPT-restricted parties). 
These two versions are:
\begin{itemize}
    \item Fully-computational pseudoentanglement~\cite{ABV23}: two families of states $\{k,\psi_{AB}^k\}_k$, $\{k,\phi_{AB}^k\}_k$, with entanglement quantified by the uniform computational one-shot entanglement cost ($\hat E_C^\varepsilon(\{k,\psi^k_{AB}\}) \leq c$) and uniform computational one-shot distillable entanglement ($\hat E_D^\varepsilon(\{k,\phi^k_{AB}\}) \geq d$).
    \item Inefficiently-distillable pseudoentanglement~\cite{GE24}: two families of states $\{\psi_{AB}^\lambda\}_\lambda$, $\{\phi_{AB}^\lambda\}_\lambda$, with entanglement quantified by the one-shot computational entanglement cost ($\hat E_C^\varepsilon(\{\psi^\lambda_{AB}\}) \leq c$) and information-theoretic one-shot distillable entanglement ($E_D^\varepsilon(\{\phi^\lambda_{AB}\}_\lambda) \geq d$).
\end{itemize}
We note that computational indistinguishability takes essentially the same form in both definitions and states that the states from both families cannot be distinguished by any QPT adversary, even given polynomially many copies.
In~\cite{GE24}, inefficiently-distillable pseudoentanglement was proven to be a necessary condition for EFI pairs. 

\medskip
\textit{EFI pairs~\cite{BCQ23}:}
Two families of states \(\{\rho_0^\lambda\}_\lambda,\{\rho_1^\lambda\}_\lambda\) are an EFI pair if they are 
\begin{itemize}
    \item Efficiently preparable: a uniform QPT algorithm \(\mathcal{A}(1^\lambda, b)\) returns \(\rho_b^\lambda\);
    \item Statistically far: \(\frac{1}{2}\|\rho_0^\lambda - \rho_1^\lambda\|_1\) is noticeable;
    \item Computationally indistinguishable: for all adversaries, \(\{\rho_0^\lambda\}_\lambda \approx \{\rho_1^\lambda\}_\lambda\).
\end{itemize}
EFI pairs are particularly interesting because they can be used to construct various computational cryptographic primitives, such as quantum commitments or secure multiparty computation, and are implied by (i.e., are weaker than) almost all other candidate assumptions (e.g., OWFs, pseudorandom states)~\cite{BCQ23,AQY22}, making them a prime candidate for minimal assumption in quantum cryptography.

\medskip
\noindent\textbf{Main contributions}

\smallskip
\textit{Error-continuity of \(\hat E_C^\varepsilon\) and entanglement gaps:}
Throughout this work, we introduce and use a notion of continuity of the computational entanglement cost, $\hat E_C^\varepsilon$, that we refer to as \emph{error-continuity}. 
Informally, we show that if one can construct states $\{\rho^\lambda\}_\lambda$ from less than $n(\lambda)$ EPR pairs with precision $\varepsilon(\lambda)$, then one may do the same for any family $\{\sigma^\lambda\}_\lambda$ whose states are close to those in $\{\rho^\lambda\}_\lambda$, paying a penalty in the error $\varepsilon'(\lambda)$ that depends on their closeness.
\begin{lemma}[Error-continuity of $\hat E_C^\varepsilon$. Informal Lemma~\ref{lemma: continuity cost to the same epr}]
    Let $\{\rho^\lambda\}_\lambda$ and $\{\sigma^\lambda\}_\lambda$ be two families with states at most $\delta$-close to each other in trace distance, $\frac{1}{2}\|\rho^\lambda - \sigma^\lambda\|_1\leq\delta(\lambda)$, such that $\sqrt{\varepsilon(\lambda)}+\delta(\lambda)\leq 1$.
    Assume that  $\hat E_C^\varepsilon(\{\rho^\lambda\}) \leq n $. Then
    \(
        \hat E_C^{\varepsilon^\prime}(\{\sigma^\lambda\}) \leq n,
    \)
    for $\varepsilon^\prime$ polynomial in $(\delta,\sqrt{\varepsilon})$.
\end{lemma}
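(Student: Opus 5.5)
The plan is to reuse the dilution protocol for $\{\rho^\lambda\}$ unchanged and let $\{\sigma^\lambda\}$ absorb the extra error through the triangle inequality.

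By the definition of $\hat E_C^\varepsilon(\{\rho^\lambda\})\leq n$, there is a uniform family of QPT LOCC protocols $\Lambda_\lambda$. Each acts on $n(\lambda)$ EPR pairs $\Phi^{\otimes n}$ and outputs $\Lambda_\lambda(\Phi^{\otimes n})$ with distance at most $\varepsilon(\lambda)$ from $\rho^\lambda$. I will use the same protocol $\Lambda_\lambda$ as the candidate dilution protocol for $\sigma^\lambda$. The resource count $n$ and the QPT/LOCC constraints are then inherited automatically, so only the error needs to be controlled.

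The main technical point is that the error measure in the definition of $\hat E_C^\varepsilon$ is presumably fidelity-based (purified distance, or $1-F$). This is why $\sqrt{\varepsilon}$ appears in the statement. I will handle it in three steps.
\begin{enumerate}
\item Convert the dilution error into trace distance via Fuchs--van de Graaf: $\frac12\|\Lambda_\lambda(\Phi^{\otimes n})-\rho^\lambda\|_1\leq\sqrt{\varepsilon}$ (up to the exact convention, $\sqrt{1-F}$).
\item Apply the triangle inequality to get $\frac12\|\Lambda_\lambda(\Phi^{\otimes n})-\sigma^\lambda\|_1\leq\sqrt{\varepsilon}+\delta$.
\item Convert back with the other Fuchs--van de Graaf inequality, $1-F\leq\frac12\|\cdot\|_1$, or with $F\geq(1-T)^2$ when a squared fidelity is used. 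This gives an error $\varepsilon'$ such as $\sqrt{\varepsilon}+\delta$ or $1-(1-\sqrt{\varepsilon}-\delta)^2=2(\sqrt{\varepsilon}+\delta)-(\sqrt{\varepsilon}+\delta)^2$.
\end{enumerate}
The hypothesis $\sqrt{\varepsilon}+\delta\leq1$ keeps $1-\sqrt{\varepsilon}-\delta\geq0$, so the bound is meaningful and $\varepsilon'\in[0,1]$.

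Since $\varepsilon'$ is a polynomial in $(\delta,\sqrt{\varepsilon})$, we obtain $\hat E_C^{\varepsilon'}(\{\sigma^\lambda\})\leq n$. If the definition instead measures error with the purified distance $P$, I would use its triangle inequality directly together with $P\leq\sqrt{2T}$ for the $\rho$--$\sigma$ term. That step still yields a bound polynomial in $\sqrt{\varepsilon}$ and $\sqrt{\delta}$.

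The main obstacle is bookkeeping the precise error convention: whether the bound holds for every $\lambda$ or only asymptotically, and whether the error is a fidelity or a distance. Both Fuchs--van de Graaf directions must be applied in the right order for the final $\varepsilon'$ to have the claimed form. A second point to check is that, in the uniform setting, the protocol may depend on a key $k$. Since the same $\Lambda$ is reused for each key and the closeness $\delta$ is assumed per state, the argument applies pointwise in $k$ without change.
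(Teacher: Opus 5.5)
Your proposal is correct and follows the paper's proof: reuse the dilution channel for $\{\rho^\lambda\}$, bound $\frac{1}{2}\|\rho^\lambda-\hat\Gamma^\lambda(\Phi^{\otimes n})\|_1\le\sqrt{\varepsilon}$ via Fuchs--van de Graaf, apply the triangle inequality, then use $1-\sqrt{\operatorname{F}}\le \operatorname{TD}$. The paper's convention is the squared fidelity $\operatorname{F}=\|\sqrt{\rho}\sqrt{\sigma}\|_1^2$ with error $1-\operatorname{F}\le\varepsilon$, so your second branch is the relevant one. It gives exactly the paper's $\varepsilon'=(\delta+\sqrt{\varepsilon})(2-\delta-\sqrt{\varepsilon})$, with $\delta+\sqrt{\varepsilon}\le 1$ used just as you describe.
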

This result follows from applying the Fuchs-van de Graaf inequality to translate the condition on the error of dilution of $\hat E_C^\varepsilon$ to a condition on the trace distance between the diluted state and the dilution target.
The same result holds for uniform computational entanglement cost as well.
This error-continuity of $\hat E_C^\varepsilon$ allows us to relate the different pseudoentanglement notions to EFI pairs.
In particular, it is necessary to show that a small enough trace distance between states of two different families implies a vanishing gap between their entanglement, when $\hat E_C^\varepsilon$ and $\hat E_D^\varepsilon$ are used to quantify this gap.
\begin{lemma}[TD and entanglement gap. Informal Lemma~\ref{lemma: connection of td and cd}]\label{td2gap-informal}
    Let $\{\rho^\lambda\}_\lambda$ and $\{\sigma^\lambda\}_\lambda$ be two families of states with $\hat E_C^\varepsilon(\{\rho^\lambda\})\leq c$, $\hat E_D^\varepsilon(\{\sigma^\lambda\})\geq d$, and $\varepsilon$ vanishing at infinity.
    If the trace distance between the states is upper-bounded by a vanishing function \(\delta\), $\frac{1}{2}\|\rho^\lambda - \sigma^\lambda\|_1\leq\delta(\lambda)$, then, for large enough $\lambda$, $c(\lambda)\geq d(\lambda)$.
\end{lemma}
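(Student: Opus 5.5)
We argue by contradiction. Suppose that for infinitely many $\lambda$ we have $c(\lambda) < d(\lambda)$; since both are integers (numbers of EPR pairs), this means $c(\lambda)\le d(\lambda)-1$ along that subsequence. The idea is to compose the efficient dilution protocol for $\{\rho^\lambda\}$ with the efficient distillation protocol for $\{\sigma^\lambda\}$. The composite would be an efficient LOCC protocol turning $c$ EPR pairs into $d>c$ EPR pairs with vanishing error, which is impossible because LOCC cannot increase entanglement.

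\textbf{Step 1: transfer the cost bound to $\sigma$.} By the error-continuity lemma (Lemma~\ref{lemma: continuity cost to the same epr}), $\hat E_C^\varepsilon(\{\rho^\lambda\})\le c$ together with $\frac12\|\rho^\lambda-\sigma^\lambda\|_1\le\delta(\lambda)$ gives $\hat E_C^{\varepsilon'}(\{\sigma^\lambda\})\le c$. Here $\varepsilon'$ is polynomial in $(\delta,\sqrt{\varepsilon})$, so $\varepsilon'\to 0$ as $\lambda\to\infty$. The hypothesis $\sqrt\varepsilon+\delta\le1$ holds for large $\lambda$ because both quantities vanish.

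Concretely, there is a QPT LOCC map $\Lambda_\lambda$ with $\Lambda_\lambda(\Phi^{\otimes c})\approx_{\varepsilon'}\sigma^\lambda$. The closeness is in fidelity or purified distance, depending on the definition. Via Fuchs--van de Graaf this yields a trace-distance bound $\eta(\lambda)\le\sqrt{\varepsilon'}$ (or similar), which also vanishes.

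\textbf{Step 2: compose with distillation.} Let $\Gamma_\lambda$ be the QPT LOCC distillation map with $\Gamma_\lambda(\sigma^\lambda)\approx_\varepsilon \Phi^{\otimes d}$. Contractivity of trace distance under CPTP maps and the triangle inequality give
\[
\tfrac12\big\|\Gamma_\lambda\circ\Lambda_\lambda(\Phi^{\otimes c})-\Phi^{\otimes d}\big\|_1\le \eta(\lambda)+\mu(\lambda),
\]
where $\mu$ is the trace-distance version of the distillation error. Both terms vanish.

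\textbf{Step 3: LOCC monotonicity bound.} Any LOCC map from $c$ EPR pairs, i.e.\ Schmidt rank $2^{c}$, produces a state of Schmidt number at most $2^c$. The fidelity of such a state with $\Phi^{\otimes d}$ is at most $2^{c-d}\le 1/2$. Converting to trace distance, the output is at least a constant (e.g.\ $1-\sqrt{1/2}$ via Fuchs--van de Graaf) away from $\Phi^{\otimes d}$. This contradicts the vanishing bound of Step 2 for large $\lambda$ in the subsequence. Hence $c(\lambda)\ge d(\lambda)$ for all large enough $\lambda$.

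\textbf{Main obstacle.} The delicate part is the bookkeeping between error notions: the definitions of $\hat E_C^\varepsilon$ and $\hat E_D^\varepsilon$ may measure error in fidelity or purified distance rather than trace distance, and the families may be indexed by keys $k$ (the uniform version). The latter needs the bound to hold uniformly in $k$ so that the composed protocol's error is controlled for every key. I would handle this by stating all errors in trace distance via Fuchs--van de Graaf once, up front. Efficiency plays no role in the contradiction, since Step 3 holds for arbitrary LOCC.
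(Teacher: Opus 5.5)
Your proof is correct. It shares its first step with the paper's proof: both transfer the cost bound from $\rho$ to $\sigma$ via error-continuity. It closes the argument by a different route, though.

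\textbf{How the paper closes.} The paper never composes the two protocols itself. It relaxes the distillation error from $\varepsilon$ to $\varepsilon'$, so that both quantities are evaluated at the same vanishing error. It then invokes the chain $\hat E_C^{\varepsilon'}\ge E_C^{\varepsilon'}\ge E_D^{\varepsilon'}\ge \hat E_D^{\varepsilon'}$. The middle link is the cited relation $E_D^{\varepsilon_1}\le E_C^{\varepsilon_2}+\log_2\frac{1}{1-(\sqrt{\varepsilon_1}+\sqrt{\varepsilon_2})^2}$, combined with the integer-rounding remark.

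\textbf{How you close.} You compose dilution and distillation explicitly. You then prove the no-entanglement-gain step directly, using Schmidt-number monotonicity under LOCC and the bound $F(\omega,\Phi^{\otimes d})\le 2^{c-d}$ for states of Schmidt number at most $2^c$. This is essentially the proof of the cited relation, written out inline.

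\textbf{What your route buys.} It is self-contained and avoids the logarithmic correction and the error-matching and rounding bookkeeping. It also gives an explicit threshold: $c(\lambda)\ge d(\lambda)$ whenever $\delta(\lambda)+2\sqrt{\varepsilon(\lambda)}<1-2^{-1/2}$. You do not even need the error-continuity lemma. Fuchs--van de Graaf plus the triangle inequality give $\eta\le\delta+\sqrt{\varepsilon}$ directly, which also sidesteps that lemma's hypothesis $\delta+\sqrt{\varepsilon}\le 1$ at small $\lambda$.

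\textbf{What the paper's route buys.} It is modular, reusing a standard relation between information-theoretic one-shot measures. Corollary~\ref{cor: connection of td and cd_in} then follows by deleting one link of the chain. As you observe, though, your argument never uses efficiency, so it covers that corollary equally well.

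\textbf{Minor point.} The keyed, uniform bookkeeping you flag as the main obstacle does not arise here, since this lemma concerns unkeyed families.
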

To show this lemma, we use the lower bound $\hat E^\varepsilon_D(\{\sigma^\lambda\})\geq d$ to get a corresponding $\hat{E}_C^\varepsilon(\{\sigma^\lambda\})\geq d$, since 
\( \hat{E}^\varepsilon_C\geq E_C^\varepsilon\geq E^\varepsilon_D\geq \hat E_D^\varepsilon, \)
which holds for any small enough $\varepsilon(\lambda)$.
Then, we consider the error-continuity of $\hat{E}_C^\varepsilon$ to show that
\( \hat E_C^{\varepsilon^\prime}(\{\sigma^\lambda\})\leq c, \)
which, together with $\hat E_C^{\varepsilon^\prime}(\{\sigma^\lambda\})\geq d$ because $\varepsilon^\prime(\lambda)$ and $ \varepsilon(\lambda)$ are both sufficiently small, ultimately yields
\( d\leq\hat E_C^{\varepsilon^\prime}\leq c ,\)
and hence \(c(\lambda) \geq d(\lambda)\).

\medskip
\textit{Distinguishing between mixtures of distant states:}
While taking a mixture of states can render the trace distance to zero, we show that one can amplify the trace distance between mixtures of states by allowing the use of multiple copies of such states.
We consider a binary quantum state discrimination task in which a source emits \(q\) copies of a state \({\rho_b^k}{}^{\otimes q}\) sampled from one of two families, $\{k,\rho_0^k\}_k$ and $\{k,\rho_1^k\}_k$, for \(b\) and \(k\) uniformly random.
Assuming that the pairwise trace distance between states in the families is not negligible, we show that one can discriminate \(b\) up to negligible probability.
\begin{lemma}[Distinguishing between mixtures of distant states. Informal Lemma~\ref{lemma: td of mixtures}]\label{lemma: td mixtures. informal}
    Let $\{k,\rho_0^k\}_k$ and $\{k,\rho_1^k\}_k$ be two families of mixed states and classical keys.
    If the pairwise trace distance between states from different families is lower-bounded by an inverse-polynomial $1/p(\lambda)$, then there exists another polynomial $q(\lambda)$ such that, for large enough $\lambda$ and for a negligible \(\varepsilon\),
    \(
        \frac{1}{2}\left\|\mathbb E_k\left[{\rho_0^k}{}^{\otimes q(\lambda)}\right] -\mathbb E_{k}\left[{\rho_1^{k}}{}^{\otimes q(\lambda)}\right] \right\|_1 \geq 1-\varepsilon(\lambda).
    \)
\end{lemma}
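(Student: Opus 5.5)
We want to show that the two mixtures $\bar\rho_b := \mathbb E_k[{\rho_b^k}^{\otimes q}]$ are nearly perfectly distinguishable. The plan is to exhibit a single (possibly inefficient) two-outcome measurement $\{M,\mathbb 1-M\}$ on $q$ copies, i.e.\ a statistical test, that accepts ${\rho_0^k}^{\otimes q}$ with probability at least $1-\varepsilon$ for \emph{every} key $k$. The same measurement must also accept ${\rho_1^{k'}}^{\otimes q}$ with probability at most $\varepsilon$ for every $k'$. By linearity this gives $\mathrm{Tr}[M\bar\rho_0]-\mathrm{Tr}[M\bar\rho_1]\ge 1-2\varepsilon$, and the variational characterization of trace distance then yields the claim (after renaming $\varepsilon$).

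The main obstacle is that the test cannot depend on $k$, because the key is hidden inside the mixture. A naive Helstrom measurement per pair $(k,k')$ therefore does not directly work. I would handle this with a tomography-style approach, working at fixed $\lambda$ where the Hilbert space dimension $D$ is fixed.

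\textbf{Step 1: estimate the state.} Use the $q$ copies to run full state tomography and produce an estimate $\hat\rho$. Standard concentration bounds (e.g.\ Pauli-basis tomography with Hoeffding) give $\frac12\|\hat\rho-\rho\|_1<\frac{1}{3p(\lambda)}$ with probability at least $1-e^{-\Omega(q/\mathrm{poly}(D,p))}$. This holds uniformly over all input states $\rho$.

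\textbf{Step 2: classify.} Output $0$ if $\hat\rho$ is within $\frac{1}{3p}$ of some $\rho_0^k$, and output $1$ otherwise. The pairwise separation $\frac12\|\rho_0^k-\rho_1^{k'}\|_1\ge 1/p$, combined with the triangle inequality, shows that when tomography succeeds the answer is correct for every key. This gives the uniform per-key guarantee needed above.

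The concern is that $D$ may be exponential in $\lambda$, making the required $q$ superpolynomial. The statement only asks for some polynomial $q$ and sets no efficiency requirement, but tomography costs $\mathrm{poly}(D)$ copies. So I would instead try to avoid dimension dependence. One route is to use the finitely many (at most $2^{\mathrm{poly}}$) keys: first run a hypothesis-selection tournament among the candidate states $\{\rho_b^k\}$. Scheffé/minimum-distance style selection on quantum states (e.g.\ via quantum hypothesis selection or threshold search with Helstrom measurements for each pair) needs only $O(\log(\#\text{candidates})\cdot p^2)$ copies, which is polynomial since keys are $\mathrm{poly}(\lambda)$-bit strings. Once a candidate close to the true state is selected, it lies within $<\frac{1}{2p}$ of it and hence in the correct family.

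I expect this hypothesis-selection step to be the crux. If the quantum version is too delicate, a simpler alternative is to note that, for the statement as posed, it suffices to bound the trace distance between the mixtures from below. Here I would use a fidelity argument: $F(\rho_0^k,\rho_1^{k'})\le 1-1/p^2$ by Fuchs--van de Graaf, so $F({\rho_0^k}^{\otimes q},{\rho_1^{k'}}^{\otimes q})\le(1-1/p^2)^q$. I would then use joint concavity of $\sqrt F$ over the mixtures, with uniform weights $2^{-|k|}$, giving $\sqrt{F(\bar\rho_0,\bar\rho_1)}\le\sum_{k,k'}2^{-|k|}\sqrt{F({\rho_0^k}^{\otimes q},{\rho_1^{k'}}^{\otimes q})}$. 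This bound needs care: it must be applied in the form $F(\sum_i p_i\rho_i,\sum_j q_j\sigma_j)^{1/2}\le\sum_{i,j}\sqrt{p_iq_j}\,F(\rho_i,\sigma_j)^{1/2}$, obtained from the triangle-inequality property of the unnormalized fidelity. This yields a bound of at most $2^{|k|}(1-1/p^2)^{q/2}$. Choosing $q=\Theta(p^2(|k|+\lambda))$ makes it negligible, and Fuchs--van de Graaf then gives trace distance at least $1-\mathrm{negl}(\lambda)$. I would pursue this fidelity route first, since it is the cleanest.
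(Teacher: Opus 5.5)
Your fidelity route is correct, and it is a more direct argument than the paper's. The paper never bounds the fidelity of the mixtures explicitly. It reduces to discrimination via Holevo--Helstrom, runs the pretty good measurement on the union of both families, and coarse-grains it into a binary POVM. It then bounds the cross-family error with the Barnum--Knill block lemma $\|\sqrt G_{10}\|_2^2\le\|G_{10}\|_1$, followed by $\|G_{10}\|_1\le\sum_{i,j}\|G^{(ij)}\|_1=\sum_{i,j}\sqrt{\operatorname{F}}(\psi^i,\phi^j)$.

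That last triangle-inequality step is your inequality in disguise. Write $G=A^\dagger A$ with $A_iA_i^\dagger=\sigma^i$, let $S_b$ be the unnormalized sum over family $b$, and set $\kappa=|k|$. Then $\|G_{10}\|_1=\|\sqrt{S_1}\sqrt{S_0}\|_1=2^{\kappa}\sqrt{\operatorname{F}}(\bar\rho_0,\bar\rho_1)$. So the paper effectively shows that the coarse-grained PGM errs with probability at most $\sqrt{\operatorname{F}}(\bar\rho_0,\bar\rho_1)$, and then loses a factor $2$ when converting back to trace distance.

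You skip the PGM and the block lemma entirely. You get $\operatorname{TD}(\bar\rho_0,\bar\rho_1)\ge 1-2^{\kappa}(1-1/p^2)^{q/2}$ straight from the lower Fuchs--van de Graaf bound, with the same choice $q=2p^2(\kappa+\lambda)$. What the paper's route buys in return is an explicit discriminating measurement, in keeping with its state-discrimination framing.

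Two fixes to your write-up. First, drop the appeal to joint concavity, which gives the opposite inequality: the root fidelity of mixtures is at least the average. The upper bound you need is subadditivity, and since it is the one nonstandard step, you should prove it:
\begin{itemize}
\item Write $\bar\rho_b=X_bX_b^\dagger$, with $X_b$ the block row of the operators $2^{-\kappa/2}\sqrt{(\rho_b^k)^{\otimes q}}$.
\item Polar decomposition gives $\|\sqrt{\bar\rho_1}\sqrt{\bar\rho_0}\|_1=\|X_1^\dagger X_0\|_1$.
\item Each of the $2^{2\kappa}$ blocks of $X_1^\dagger X_0$ has trace norm $2^{-\kappa}\sqrt{\operatorname{F}}((\rho_1^{k'})^{\otimes q},(\rho_0^k)^{\otimes q})$, so the triangle inequality over the blocks yields your bound.
\end{itemize}
Second, remove the tomography and hypothesis-selection detours. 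The first needs a number of copies polynomial in the dimension, which is exponential in the number of qubits, as you noticed. The second is unnecessary once the fidelity bound is in place.
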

We demonstrate this result using, first, the Holevo-Helstrom bound  (Proposition~\ref{prop: holevo-helstrom}) to translate between the trace distance and the distinguishability of two states and, second, a Pretty Good Measurement (PGM)~\cite{HW94} that provides a sufficiently strong lower bound on the probability of guessing the wrong family.
We then analyze the error probability using Gram matrix techniques~\cite{M07,M19}, and show that the number of copies $q$ required to render the probability of guessing the wrong family negligible can be taken to be polynomial in \(\lambda\).

\medskip
\textit{Fully-computational pseudoentanglement implies EFI pairs:}
We show the first connection between fully-computational pseudoentanglement~\cite{ABV23} and EFI pairs by demonstrating that the existence of the first is a sufficient condition for the existence of the second.
This result tackles an open question in~\cite{ABV23}.
\begin{theorem}[Fully-computational pseudoentanglement $\Rightarrow$ EFI pairs. Informal Theorem~\ref{thm:pe_to_efi}]\label{thm:pe2efi-informal}
    Let $\{k,\psi^k\}_k$ and $\{k,\phi^k\}_k$ be the families in fully-computational pseudoentanglement~\cite{ABV23} (\(c<d\)). Assume that there exist QPT algorithms generating $\psi^k$ and $\phi^k$, given $k$. Then, there exists an EFI pair, \((\rho_0^\lambda,\rho_1^\lambda)\).
\end{theorem}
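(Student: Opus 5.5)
The EFI pair will be the two key-averaged states
\[
\rho_0^\lambda=\mathbb E_k\bigl[(\psi^k)^{\otimes q(\lambda)}\bigr],\qquad \rho_1^\lambda=\mathbb E_k\bigl[(\phi^k)^{\otimes q(\lambda)}\bigr],
\]
for a suitable polynomial $q$, with $k$ sampled as in the pseudoentanglement definition. Since the keys are hidden, this is a genuine pair of mixed states indexed only by $\lambda$. Computational indistinguishability of these states must then come from the pseudoentanglement indistinguishability, and statistical farness from the entanglement gap $c<d$. I will check the three EFI properties in turn.

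\emph{Efficient preparability} is immediate. The QPT algorithm samples $k$ uniformly, runs the assumed QPT generator $q(\lambda)$ times on input $k$, and outputs the copies, discarding $k$.

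\emph{Computational indistinguishability.} The pseudoentanglement definition says no QPT adversary given polynomially many copies of $\psi^k$ versus $\phi^k$, for random unknown $k$, can distinguish them. A QPT distinguisher for $(\rho_0^\lambda,\rho_1^\lambda)$, even one given $\mathrm{poly}$ copies of $\rho_b^\lambda$, is such an adversary. Here I must treat copies of the mixture carefully: $\mathrm{poly}$ copies of $\rho_b^\lambda$ equal independent keys $k_1,\dots,k_t$ each with $q$ copies. I will handle this either by a hybrid argument over the $t$ blocks, replacing one block at a time and using efficient preparability to simulate the others, or by noting that the definition already covers independent fresh samples.

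\emph{Statistical farness.} This is the main obstacle, and I will argue by contradiction at the level of individual keys. First, I show that for all large $\lambda$ and all pairs $k,k'$, the pairwise distance $\tfrac12\|\psi^k-\phi^{k'}\|_1$ exceeds some $1/p(\lambda)$. Suppose instead that for infinitely many $\lambda$ some pair is $\delta$-close with $\delta$ vanishing. Lemma~\ref{lemma: connection of td and cd} (via the error-continuity of Lemma~\ref{lemma: continuity cost to the same epr}) then forces $c\ge d$ on those $\lambda$, contradicting $c<d$.

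Two difficulties arise in this step. First, that lemma is stated for families, whereas I need it for the subfamily of closest pairs. The uniform $\hat E_C^\varepsilon$ and $\hat E_D^\varepsilon$ bounds hold per key through the same uniform LOCC protocol, so I will restrict to that subfamily, possibly with nonuniform advice fixing the pair. Second, the contrapositive only yields non-vanishing distance, not an inverse-polynomial bound. I plan to strengthen it using the explicit $\varepsilon'$ from Lemma~\ref{lemma: continuity cost to the same epr}. Any $\delta$ small enough that $\varepsilon'$ stays below the threshold where $E_D^\varepsilon\le E_C^\varepsilon$ holds still gives the contradiction, so the distance must be bounded below by a constant for large $\lambda$. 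This is stronger than inverse-polynomial.

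Given this pairwise gap, Lemma~\ref{lemma: td of mixtures} supplies a polynomial $q$ with $\tfrac12\|\rho_0^\lambda-\rho_1^\lambda\|_1\ge 1-\mathrm{negl}(\lambda)$. This completes the argument; I expect the delicate part to be the uniformity and quantifier bookkeeping in the contradiction step.
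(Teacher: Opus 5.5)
Your proposal is correct and follows the paper's proof essentially step for step: the same key-averaged $q$-copy construction, indistinguishability read off from property (iii), and statistical farness via a contrapositive through Lemma~\ref{lemma: connection of td and cd} followed by Lemma~\ref{lemma: td of mixtures}. The one difference is bookkeeping, and your version is cleaner and slightly stronger: the paper builds auxiliary stepwise families $\{{\psi^\star}^\lambda\}_\lambda,\{{\phi^\star}^\lambda\}_\lambda$ with a vanishing $\delta=1/p^\star$, whereas you note that the contradiction only needs $\varepsilon'$ below the fixed threshold where $\log_2\bigl(1/(1-(\sqrt{\varepsilon}+\sqrt{\varepsilon'})^2)\bigr)<1$, which lets the argument run at each fixed large $\lambda$ and gives a constant pairwise lower bound (roughly $\delta<1-1/\sqrt2$).
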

To prove this theorem, we provide an explicit construction of an EFI pair \((\rho_0^\lambda,\rho_1^\lambda)\), using states from the two pseudoentanglement families, $\{k,\psi^k\}_k$ , $\{k,\phi^k\}_k$, given the assumption that they are efficiently preparable.
Our construction takes uniform mixtures of polynomially many copies of states in the respective families:
\begin{equation*}
    \left(\rho_0^\lambda,\rho_1^\lambda\right)_\lambda :=\left(\mathbb E_k\left[{\psi^k}^{\otimes q(\lambda)}\right], \mathbb E_k\left[{\phi^k}^{\otimes q(\lambda)}\right]\right)_\lambda.
\end{equation*}
For the proposed construction, we verify the requirements of an EFI pair:
\begin{itemize}
    \item Efficient preparation is given by assumption. We remark that while this assumption is not present in the definition of pseudoentanglement, it is natural when one wants to use the states in an efficient protocol.
    This property is verified in all current constructions of pseudoentanglement~\cite{ABV23,GE24,ABF24,BF24}.
    \item Computational indistinguishability follows directly from the definition of pseudoentanglement, as it is the same statement in both definitions.
    \item Statistical distance requires a more complex reasoning. We leverage the connection between trace distance and entanglement gap (Lemma~\ref{td2gap-informal}) to ensure that the two elements of the EFI pair, each corresponding to the uniform mixture over the keys of a different family, are arbitrarily far. 
\end{itemize}

\medskip
\textit{Inefficiently-distillable pseudoentanglement is equivalent to EFI pairs:}
We then demonstrate that the existence of inefficiently-distillable pseudoentanglement~\cite{GE24} is both necessary and sufficient for the existence of EFI pairs.
While the first condition has been shown in~\cite{GE24}, we prove the reverse direction, which we intuitively explain next.

\begin{lemma}[Inefficiently-distillable pseudoentanglement $\Rightarrow$ EFI pairs. Informal Lemma~\ref{lemma: pse imply efi GE}]
    Let $\{\psi^\lambda\}_\lambda$ and $\{\phi^\lambda\}_\lambda$ be the families in inefficiently-distillable pseudoentanglement families~\cite{GE24} (\(c<d\)). 
    Assume there exist uniform QPT algorithms generating $\psi^\lambda$ and $\phi^\lambda$.
    Then, there exists an EFI pair, \((\rho_0^\lambda,\rho_1^\lambda)\).
\end{lemma}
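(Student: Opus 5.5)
The construction is the most direct one: take $(\rho_0^\lambda,\rho_1^\lambda):=(\psi^\lambda,\phi^\lambda)$ and check the three EFI properties in turn. Unlike the fully-computational case, there is no key, so no mixture over $k$ appears and Lemma~\ref{lemma: td mixtures. informal} should not be needed.

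Two of the properties follow immediately. Efficient preparation is exactly the assumption that uniform QPT algorithms generate $\psi^\lambda$ and $\phi^\lambda$. Computational indistinguishability, even given polynomially many copies, is part of the definition of inefficiently-distillable pseudoentanglement, so $\{\psi^\lambda\}_\lambda\approx\{\phi^\lambda\}_\lambda$.

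The substantive step is statistical farness. I will argue by contrapositive, using a version of Lemma~\ref{td2gap-informal} adapted to the information-theoretic distillable entanglement. We have $\hat E_C^\varepsilon(\{\psi^\lambda\})\le c$ and $E_D^\varepsilon(\{\phi^\lambda\})\ge d$ with $c<d$. The chain $\hat E_C^{\varepsilon}\ge E_C^{\varepsilon}\ge E_D^{\varepsilon}$ for small $\varepsilon$ gives $\hat E_C^{\varepsilon'}(\{\phi^\lambda\})\ge d$ for every sufficiently small $\varepsilon'$. The proof of Lemma~\ref{td2gap-informal} only ever uses the distillation bound through this chain, so it goes through with $E_D^\varepsilon$ in place of $\hat E_D^\varepsilon$.

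Now suppose $\delta(\lambda):=\frac12\|\psi^\lambda-\phi^\lambda\|_1$ were not noticeable. Then there is an infinite set of $\lambda$ on which $\delta(\lambda)$ is smaller than any inverse polynomial. On that set, Lemma~\ref{lemma: continuity cost to the same epr}, which is the error-continuity of $\hat E_C^\varepsilon$, gives $\hat E_C^{\varepsilon'}(\{\phi^\lambda\})\le c$ with $\varepsilon'$ polynomial in $(\delta,\sqrt\varepsilon)$, hence vanishing. Combined with the lower bound this yields $d\le c$ for large such $\lambda$, contradicting $c<d$.

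The main obstacle is reconciling the asymptotic quantifiers. Lemma~\ref{td2gap-informal} is stated for a $\delta$ that vanishes everywhere, whereas the negation of ``noticeable'' only provides smallness along a subsequence. I will handle this by restricting all families to that infinite subsequence. The required bounds ($\hat E_C^\varepsilon\le c$, $E_D^\varepsilon\ge d$, and $\varepsilon\to0$) are pointwise in $\lambda$, so they survive restriction, and the contradiction on the subsequence suffices. A secondary point is that the definition of $c<d$ may only require a gap eventually. If the definition needs a margin, I would first amplify the gap by taking tensor copies: $\psi^{\otimes m}$ versus $\phi^{\otimes m}$ keeps both efficiency and indistinguishability. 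The farness conclusion is unchanged, because by the argument above the trace distance is at least noticeable. If a constant gap is desired, one can also amplify the trace distance with polynomially many copies via Lemma~\ref{lemma: td mixtures. informal}, applied with a trivial key.
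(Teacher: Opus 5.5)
Your proposal is correct and follows essentially the same route as the paper. Both take the pair $(\psi^\lambda,\phi^\lambda)$ itself, and both get farness from the contrapositive of the trace-distance/entanglement-gap argument adapted to the information-theoretic $E_D^\varepsilon$ (the paper's Corollary~\ref{cor: connection of td and cd_in}) together with the error-continuity of $\hat E_C^\varepsilon$. Your restriction to a subsequence where $\delta\to0$ does the job of the paper's step-wise padded families, and it is valid because every inequality you use is pointwise in $\lambda$; a single fixed non-constant polynomial already gives such a subsequence, so you do not need $\delta$ to be below every inverse polynomial on one common set.
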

We provide an explicit construction of EFI pairs from pseudoentanglement,
\begin{equation*}
    \left(\rho_0^\lambda,\rho_1^\lambda\right)_\lambda :=(\psi^\lambda, \phi^\lambda)_\lambda.
\end{equation*}
We leverage Lemma~\ref{td2gap-informal}, which connects trace distance and entanglement gap, to show that these states in the candidate EFI pair are far from each other in trace distance; otherwise, it could not be that $c < d$ (i.e., the families would not have pseudoentanglement).
The remaining required properties of the EFI pair follow in the same way as above in Theorem~\ref{thm:pe2efi-informal}.

Combined with the results of~\cite{GE24}, we get that the existence of inefficiently-distillable pseudoentanglement is equivalent to the existence of EFI pairs.
\begin{theorem}[Inefficiently-distillable pseudoentanglement  \(\Leftrightarrow\) EFI pairs. Informal Theorem~\ref{thm:ipe_equiv_efi}]
   The existence of inefficiently-distillable $(\varepsilon,c,d)$-pseudo\-entanglement --- with $c < d$, $\varepsilon(\lambda)\in O(2^{-\lambda})$, and both families efficiently preparable --- is both necessary and sufficient for the existence of EFI pairs.
\end{theorem}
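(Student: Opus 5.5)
The equivalence splits into its two directions, and I would treat them separately and then combine them.

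For necessity (EFI pairs $\Rightarrow$ inefficiently-distillable pseudoentanglement), I would invoke the result of~\cite{GE24}. Their construction builds, from any EFI pair, two efficiently preparable families $\{\psi^\lambda\}_\lambda$ and $\{\phi^\lambda\}_\lambda$ with the following properties: $\hat E_C^\varepsilon(\{\psi^\lambda\})\leq c$ and $E_D^\varepsilon(\{\phi^\lambda\})\geq d$ with $c<d$, and the two families are computationally indistinguishable. The only work here is bookkeeping. I would check that the parameters of that construction fit the regime fixed in the statement: exponentially small error $\varepsilon(\lambda)\in O(2^{-\lambda})$, a strict gap $c<d$, and efficient preparability of both families. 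If the error in~\cite{GE24} is only negligible, I would amplify it to $O(2^{-\lambda})$ by reparametrizing the security parameter. The natural tool is to pass to polynomially many tensor copies of the EFI states before applying their construction, which only polynomially rescales $\lambda$.

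For sufficiency, I would apply Lemma~\ref{lemma: pse imply efi GE} directly with the candidate pair $(\rho_0^\lambda,\rho_1^\lambda):=(\psi^\lambda,\phi^\lambda)$. Efficient preparability is assumed. Computational indistinguishability is exactly the indistinguishability clause in the pseudoentanglement definition.

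Statistical farness needs an argument. Suppose, for contradiction, that $\frac12\|\psi^\lambda-\phi^\lambda\|_1$ is negligible, or at least bounded by some vanishing $\delta(\lambda)$ along a subsequence. I would use the chain $\hat E_C^{\varepsilon}\geq E_C^\varepsilon\geq E_D^\varepsilon$, which is valid since $\varepsilon(\lambda)\in O(2^{-\lambda})$ is small. This converts the information-theoretic bound $E_D^\varepsilon(\{\phi^\lambda\})\geq d$ into a lower bound on $\hat E_C$ for $\phi$. Error-continuity (Lemma~\ref{lemma: continuity cost to the same epr}) then transfers the upper bound $c$ from $\psi$ to $\phi$ at error $\varepsilon'(\delta,\sqrt\varepsilon)\to 0$. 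Together these give $d\leq c$ for large $\lambda$, contradicting $c<d$. This is exactly the argument of Lemma~\ref{lemma: connection of td and cd}, used with $E_D$ in place of $\hat E_D$.

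The main obstacle is turning "not vanishing" into the EFI requirement of a noticeable trace distance at every large $\lambda$. The contradiction argument alone only excludes trace distances that vanish along a subsequence, and it needs the lower bound on $\hat E_C$ to hold at the perturbed error $\varepsilon'$, not just at $\varepsilon$. I would handle both points by fixing a concrete threshold. Choose $\delta_0$ small enough that $\varepsilon'(\delta_0,\sqrt{\varepsilon})$ stays below the error regime in which the one-shot inequality $E_C^{\varepsilon'}\geq E_D^{\varepsilon}$ holds. The argument then shows that the trace distance exceeds the constant $\delta_0$ for all large $\lambda$, which is certainly noticeable.

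Combining the two directions gives the claimed equivalence under the stated parameter regime.
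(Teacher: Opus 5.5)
Your proposal is correct and follows the paper's proof. Necessity is exactly Lemma~\ref{lemma:efi implies pe ge} from \cite{GE24}, which already gives $c=0<d=1$ with $\varepsilon\in O(2^{-\lambda})$, so your amplification hedge is unnecessary. Sufficiency is Lemma~\ref{lemma: pse imply efi GE} with the pair $(\psi^\lambda,\phi^\lambda)$, where farness comes from error-continuity combined with $E_C\geq E_D$, as in Corollary~\ref{cor: connection of td and cd_in}.

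You deviate from the paper only in the final quantifier step. The paper uses a contrapositive and builds auxiliary stepwise families along the ``bad'' indices, so that the asymptotic Lemma~\ref{lemma: connection of td and cd} applies. You instead fix a constant threshold $\delta_0$ and run the continuity argument pointwise in $\lambda$. Any $\delta_0<1-1/\sqrt{2}$ works, because for small $\varepsilon$ it keeps $(\sqrt{\varepsilon}+\sqrt{\varepsilon'})^2<1/2$, so the integer rounding in Proposition~\ref{Th: ent distil and cost} goes through. Your version is shorter and even gives a constant lower bound on the trace distance.
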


\subsection{Related Work}
We provide an overview of related work aimed at connecting different notions of pseudoentanglement and cryptography.
To date, multiple constructions of pseudoentanglement of mixed and pure states have been proposed~\cite{ABF24,ABV23,GB23}.
Still, these exhibit a dependence on OWFs, which are not known to exist and are too strong for quantum cryptography.

Our work builds on and advances a line of research~\cite{ABV23,GE24,GY25}, originating from an open question raised in~\cite{ABV23} regarding the study of connections between pseudoentanglement and computational cryptography.
In this line, the use of pseudoentanglement as a basis for computational cryptography has been explored in two works~\cite{GE24,GY25}, which we analyze below.
We further progress this study by establishing new links and completing open relations between computational hardness and physical resources under operational settings.

\smallskip
\textit{EFI pairs imply inefficiently-distillable pseudoentanglement~\cite{GE24}.}
In this work, a construction of pseudoentanglement is proposed that relies only on EFI pairs. 
To show this, first, a new definition of pseudoentanglement (inefficiently-distillable pseudoentanglement) is introduced, which is an adaptation of the definition in~\cite{ABV23} (fully-computational pseudoentanglement). 
The main difference is that the uniform computational distillable-entanglement condition is replaced by its one-shot information-theoretic counterpart.
Intuitively, it reflects the idea that entanglement in the reference family need not be efficiently accessible and need only exist, motivated by the fact that this family may not even be efficiently preparable.

This definition of pseudoentanglement enables the construction of two computationally indistinguishable families of states, one of which is efficiently preparable by LOCC channels without shared entanglement, therefore $\hat E_C^0 = 0$, while, at the same time, the other possesses one ebit of distillable entanglement (with negligible distillation error), $E_D^{2^{-\lambda}}=1$.
Then, it is shown that an entanglement gap can be amplified by taking any polynomial number of copies of the aforementioned states without affecting computational indistinguishability.

Lastly, inefficiently-distillable pseudoentanglement is proposed as a new minimal assumption for the existence of computational cryptography, while providing an explicit construction of pseudoentanglement from EFI pairs with both families efficiently preparable.
The same work raises the question of whether the reverse implication holds, i.e., whether the existence of inefficiently-distillable pseudoentanglement implies the existence of EFI pairs.

\smallskip
\textit{Pseudoresources imply EFI pairs~\cite{GY25}.}
While pseudoentanglement is the most well-established computational physical
resource, other physical pseudoresources have also been introduced, such as
pseudomagic and pseudocoherence. The recent work of~\cite{GY25} studies the
relation between general pseudoresources and computational cryptography through
EFI pairs. To this end, the authors introduce an alternative notion of mixed-state
pseudoentanglement --- regularized relative pseudoentanglement --- and show that it
implies the existence of EFI pairs. This notion quantifies entanglement by the
asymptotic i.i.d.\ regularized relative entropy of
entanglement, rather than by the one-shot distillable entanglement and
entanglement cost of~\cite{ABV23,GE24}. The asymptotic measure offers
structural properties, notably continuity, that are useful for constructing
EFI pairs. However, it is not tied to efficient LOCC protocols, and
hence does not directly address the operational settings that motivated the
original definition~\cite{ABV23}.

Our objective is complementary, and we aim to enable the use of pseudoentanglement in
quantum cryptography, and therefore we work with operational notions such as
those of~\cite{ABV23,GE24}. Their advantage lies in the computational
accessibility of the underlying entanglement, since applications that involve
entanglement manipulation require the relevant operations to be efficiently
implementable. Fully-computational and inefficiently-distillable
pseudoentanglement guarantee a separation in the entanglement that the parties
can actually operate over (for preparation and distillation in the former case,
and for preparation only in the latter), whereas regularized relative
pseudoentanglement does not by construction provide such a guarantee.
The work of~\cite{GY25} shows that
computational cryptography can be built from pseudoentanglement even when the
underlying entanglement need not be efficiently accessible to the parties, 
establishing the connection from an information-theoretic perspective. We, 
in turn, establish this connection from operational entanglement measures, which support
concrete applications such as cryptographic protocols. 
This also has a bearing on protocol-level verifiability, as our operational
setting admits explicit witnesses, with the relevant keys and
descriptions of the efficient distillation and dilution maps, for any arbitrary entanglement gap. 
These can provide a blueprint for certifying entanglement in cryptographic and LOCC settings. By
contrast, a gap in the regularized relative entropy of entanglement does not by
itself yield manipulation or testing procedures of this kind, and require a gap above $2 + 1/\mathrm{Poly}(\lambda)$.

The two works are therefore complementary, and both advance the connection
between quantum resources and computational hardness: \cite{GY25} strengthens
the evidence for an intrinsic link between pseudoentanglement and computational
cryptography through an information-theoretic lens, while our results extend
this insight to operational scenarios, including the original definition~\cite{ABV23}.

\section{Background}
We first describe some common objects and notation that we will use. We denote natural numbers by $\mathbb N$, the real numbers as $\mathbb R$, and the complex numbers by \(\mathbb{C}\).
We denote the first $N$ natural numbers by $[N]:=\{1,2,\dots,N\}$. 
We mainly work with quantum states in a finite-dimensional Hilbert space $\mathcal{H}\cong \mathbb C^d$.
A quantum state $\rho$ is given by a self-adjoint, unit trace, positive semidefinite operator. The space of all quantum states on an underlying Hilbert space $\mathcal{H}$ is denoted as $\mathcal D(\mathcal H):=\{\rho\in\mathcal L(\mathcal H): \operatorname{Tr}(\rho)=1,\,\rho\geq0,\,\rho^\dagger=\rho\}$, where $\mathcal L(\mathcal H)$ is the space of linear operators acting on the Hilbert space $\mathcal H$.
A particular class of states is the class of pure states, which are given as rank-1 projectors in the form $\psi = \ketbra{\psi}{\psi}$, where $\ket{\psi}$ is a normalized vector in the Hilbert space $\mathcal H$. 
A relevant pure bipartite (across \(A:B\)) state is the maximally entangled state or EPR pair, $\Phi_{AB}:= \ketbra{\varphi}{\varphi}_{AB}$ with $\ket{\varphi}_{AB}:=\frac{1}{\sqrt{2}}(\ket{00}_{AB} + \ket{11}_{AB})$.
The most general description of the transformations of states is given in the form of completely positive, trace-preserving superoperators, often referred to as a quantum channel.
We denote the identity operator acting on underlying Hilbert space $\mathcal H$ as $\mathbb I$. 

\smallskip

We now introduce formal definitions of the concepts required in the remainder of this work.
First, we introduce measures that allow us to compare different quantum states.
For a more in-depth review of quantum information theory, we direct the reader to~\cite{W18,KW24}.

\begin{definition}[Schatten \(p\)-norm]\label{def:shattenpnorm}
The \emph{Schatten \(p\)-norm} of a linear operator $X\in\mathcal L(\mathcal H)$ is defined for $1\leq p<+\infty$ as 
\begin{equation*}
    \|X\|_p:=\operatorname{Tr}(|X|^p)^{\frac{1}{p}},
\end{equation*}
where $|X|:=\sqrt{X^\dagger X}$. \\
Special cases of a Schatten norm that are relevant for this work are the \emph{1-norm} (trace norm) and \emph{2-norm} (Frobenius norm).  The trace norm of $X\in\mathcal L(\mathcal H)$ is given by
\begin{equation*}
\|X\|_1=\operatorname{Tr}\left(\sqrt{X^\dagger X}\right)=\sum_{i=1}^rs_i,
\end{equation*}
where $r=\operatorname{rank}(X)$ and $\{s_i\}_{i=1}^r$ are the singular values of $X\in\mathcal L(\mathcal H)$. The Frobenius norm of an operator $X$ is given by 
\begin{equation*}
    \|X\|_2 = {\operatorname{Tr(X^\dagger X)}}^{\frac{1}{2}} = \left(\sum_{i,j=1}^{n}|X_{ij}|^2\right)^{\frac{1}{2}},
\end{equation*}
where $n = \operatorname{dim}\mathcal H$ and $X_{ij}$ are the matrix elements of $X$.
\end{definition}

\begin{definition}[Fidelity] \label{def: fidelity}
Let $\rho,\sigma\in \mathcal D(\mathcal{H})$ be two quantum states. 
The \emph{fidelity} is defined as
\begin{equation*}
    \operatorname{F}(\rho,\sigma) := \|\sqrt{\rho}\sqrt{\sigma}\|_1^2,
\end{equation*}
where $\|\cdot\|_1$ is the Schatten 1-norm.  
\end{definition}

\begin{definition}[Trace Distance (TD)]
Let $\rho,\sigma\in \mathcal D(\mathcal{H})$ be two quantum states. 
The \emph{trace distance} is defined as
\begin{equation*}
    \operatorname{TD}(\rho,\sigma):=\frac{1}{2}\|\rho-\sigma\|_1,
\end{equation*}
where $\|\cdot\|_1$ is the Schatten 1-norm.  
\end{definition}
The trace distance and fidelity possess a number of useful properties.
We recall some that are frequently utilized throughout the work.
\begin{proposition}[Multiplicativity of Fidelity]\label{prop: mult fidelity}
    Let $\rho_1,\sigma_1\in\mathcal D(\mathcal H_1)$ and $\rho_2,\sigma_2\in\mathcal D(\mathcal H_2 )$ be two pairs of quantum states. Then,
    \begin{equation*}
        \operatorname{F}(\rho_1\otimes\rho_2,\sigma_1\otimes\sigma_2) = \operatorname{F}(\rho_1,\sigma_1)\operatorname{F}(\rho_2,\sigma_2).
    \end{equation*}
\end{proposition}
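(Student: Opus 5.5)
The plan is to reduce everything to the fidelity definition $\operatorname{F}(\rho,\sigma)=\|\sqrt{\rho}\sqrt{\sigma}\|_1^2$, so it suffices to show
\begin{equation*}
\left\|\sqrt{\rho_1\otimes\rho_2}\sqrt{\sigma_1\otimes\sigma_2}\right\|_1=\|\sqrt{\rho_1}\sqrt{\sigma_1}\|_1\,\|\sqrt{\rho_2}\sqrt{\sigma_2}\|_1 ,
\end{equation*}
and then square both sides.

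The argument has three steps. The first is the identity $\sqrt{\rho_1\otimes\rho_2}=\sqrt{\rho_1}\otimes\sqrt{\rho_2}$, and likewise for $\sigma_1\otimes\sigma_2$. The operator $\sqrt{\rho_1}\otimes\sqrt{\rho_2}$ is positive semidefinite, since a tensor product of PSD operators is PSD: its eigenvalues are products of nonnegative eigenvalues. Its square is $\rho_1\otimes\rho_2$ by the mixed-product property. Uniqueness of the PSD square root then gives the identity. The second step uses the mixed-product rule again to write the operator as $X_1\otimes X_2$, with $X_i:=\sqrt{\rho_i}\sqrt{\sigma_i}$.

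The third step is multiplicativity of the trace norm under tensor products, $\|X_1\otimes X_2\|_1=\|X_1\|_1\|X_2\|_1$. To see this, take singular value decompositions $X_i=U_i\Sigma_iV_i^\dagger$. Then $X_1\otimes X_2=(U_1\otimes U_2)(\Sigma_1\otimes\Sigma_2)(V_1\otimes V_2)^\dagger$ is a singular value decomposition, because tensor products of unitaries are unitary and $\Sigma_1\otimes\Sigma_2$ is diagonal and nonnegative. The singular values of $X_1\otimes X_2$ are therefore the products $s_i^{(1)}s_j^{(2)}$. Summing them gives $\sum_{i,j}s^{(1)}_is^{(2)}_j=\|X_1\|_1\|X_2\|_1$, using the characterization of the trace norm as the sum of singular values from Definition~\ref{def:shattenpnorm}.

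No step is a real obstacle. The only point that needs care is justifying the square-root factorization through uniqueness of the PSD square root, rather than assuming it.
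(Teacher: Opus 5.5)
Your proof is correct and complete. The square-root factorization via uniqueness of the positive semidefinite square root, the mixed-product rule, and multiplicativity of the trace norm through the tensor-product singular value decomposition give $\|\sqrt{\rho_1\otimes\rho_2}\sqrt{\sigma_1\otimes\sigma_2}\|_1=\|\sqrt{\rho_1}\sqrt{\sigma_1}\|_1\|\sqrt{\rho_2}\sqrt{\sigma_2}\|_1$, and squaring finishes it.

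The paper gives no proof of this proposition; it recalls it as a standard textbook property, so there is nothing to compare against, and your argument is the standard one. Your third step also justifies the trace-norm multiplicativity under tensor products that the paper uses without proof in the proof of Lemma~\ref{lemma: td of mixtures}.
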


\begin{proposition}[Fuchs-van de Graaf inequality~\cite{FvdG99}] \label{prop: fvdg}
Let $\rho,\,\sigma \in \mathcal D(\mathcal{H})$ be two quantum states. Then,
\begin{equation*}
    1-\sqrt{\operatorname{F}(\rho,\sigma)} \leq \operatorname{TD}(\rho,\sigma) \leq\sqrt{1-\operatorname{F}(\rho,\sigma)}.
\end{equation*}
\end{proposition}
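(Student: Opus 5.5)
We prove the two inequalities separately, using the standard characterisation of fidelity through purifications (Uhlmann's theorem) together with monotonicity of the trace distance under quantum channels.

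\emph{Lower bound.} The plan is to go through measurements. By the Helstrom characterisation, $\operatorname{TD}(\rho,\sigma)=\max_{\{M,\mathbb I-M\}}\operatorname{Tr}\bigl(M(\rho-\sigma)\bigr)$. Hence for every POVM $\{E_m\}$ the induced distributions $p_m=\operatorname{Tr}(E_m\rho)$ and $q_m=\operatorname{Tr}(E_m\sigma)$ satisfy $\operatorname{TD}(\rho,\sigma)\geq \frac12\sum_m|p_m-q_m|$.
\begin{enumerate}[(i)]
\item Choose the POVM that attains the fidelity, $\sqrt{\operatorname{F}(\rho,\sigma)}=\sum_m\sqrt{p_mq_m}$. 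This is the Fuchs--Caves result. Concretely, take the eigenprojectors of the operator $\sigma^{-1/2}\bigl(\sigma^{1/2}\rho\,\sigma^{1/2}\bigr)^{1/2}\sigma^{-1/2}$, with the usual limiting argument when $\sigma$ is singular.
\item Prove the classical inequality $\frac12\sum_m|p_m-q_m|\geq 1-\sum_m\sqrt{p_mq_m}$. Start from
\[
|p_m-q_m|=|\sqrt{p_m}-\sqrt{q_m}|\,(\sqrt{p_m}+\sqrt{q_m})\geq(\sqrt{p_m}-\sqrt{q_m})^2 .
\]
Summing over $m$ gives $\sum_m|p_m-q_m|\geq 2-2\sum_m\sqrt{p_mq_m}$.
\end{enumerate}
Combining (i) and (ii) yields $\operatorname{TD}(\rho,\sigma)\geq 1-\sqrt{\operatorname{F}(\rho,\sigma)}$.

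\emph{Upper bound.} The plan is to reduce to pure states.
\begin{enumerate}[(i)]
\item By Uhlmann's theorem, pick purifications $\ket{\psi},\ket{\phi}$ of $\rho,\sigma$ on $\mathcal H\otimes\mathcal H'$ with $|\langle\psi|\phi\rangle|^2=\operatorname{F}(\rho,\sigma)$.
\item The partial trace is a channel, so monotonicity of the trace distance gives $\operatorname{TD}(\rho,\sigma)\leq\operatorname{TD}(\psi,\phi)$.
\item Compute the pure-state trace distance directly. The operator $\psi-\phi$ has rank at most two, is traceless, and lives on $\operatorname{span}\{\ket\psi,\ket\phi\}$, so its eigenvalues are $\pm\mu$. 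Then
\[
2\mu^2=\operatorname{Tr}\bigl((\psi-\phi)^2\bigr)=2-2|\langle\psi|\phi\rangle|^2,
\]
and therefore $\operatorname{TD}(\psi,\phi)=\mu=\sqrt{1-|\langle\psi|\phi\rangle|^2}=\sqrt{1-\operatorname{F}(\rho,\sigma)}$.
\end{enumerate}

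\emph{Main obstacle.} The hardest step is the existence of the optimal measurement in the lower bound, i.e.\ $\sqrt{\operatorname{F}}=\min_{\{E_m\}}\sum_m\sqrt{p_mq_m}$. The direction actually needed, that some POVM attains the minimum, follows from a Cauchy--Schwarz computation with the operator above. Writing $\sqrt{\operatorname{F}}=\operatorname{Tr}\bigl|\sqrt\rho\sqrt\sigma\bigr|$, one checks equality in Cauchy--Schwarz on each eigenspace. Singular $\sigma$ is handled by perturbing to $\sigma+\eta\mathbb I$, normalising, and letting $\eta\to0$, using continuity of both sides.
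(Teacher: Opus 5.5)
Your proof is correct. It is the standard argument for this inequality: the Fuchs--Caves optimal measurement together with the classical bound $\frac12\sum_m|p_m-q_m|\geq 1-\sum_m\sqrt{p_mq_m}$ for the lower inequality, and Uhlmann purifications, monotonicity of the trace distance under partial trace, and the rank-two pure-state computation for the upper one. The paper itself gives no proof and simply imports the result from \cite{FvdG99}, where essentially this same argument appears.

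One thing to tidy is the ``main obstacle'' remark. Attainment by the eigenprojectors $P_m$ of $M:=\sigma^{-1/2}(\sigma^{1/2}\rho\,\sigma^{1/2})^{1/2}\sigma^{-1/2}$, with eigenvalues $\mu_m\geq 0$, is a direct computation rather than a Cauchy--Schwarz equality check. Since $M\sigma M=\rho$, you get $p_m=\mu_m^2 q_m$, and hence $\sum_m\sqrt{p_mq_m}=\operatorname{Tr}(M\sigma)=\operatorname{Tr}\bigl[(\sigma^{1/2}\rho\,\sigma^{1/2})^{1/2}\bigr]=\sqrt{\operatorname{F}(\rho,\sigma)}$. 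Cauchy--Schwarz is only needed for the reverse inequality over all POVMs, which your proof does not use.
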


The ability to distinguish between any two quantum states $\rho,\sigma\in \mathcal D(\mathcal{H})$ by the best possible measurement is related to the trace distance between the states and is quantified by the Holevo-Helstrom bound~\cite{H73,H69}.

\begin{proposition}[Holevo-Helstrom bound~\cite{H73,H69}]\label{prop: holevo-helstrom}
    Let $\rho,\sigma\in\mathcal D(\mathcal H)$ be two quantum states and let \(p_\rho, p_\sigma\) be their prior probabilities, with $p_\sigma+p_\rho=1$. Then
    \begin{equation*}
        p_{\mathrm{succ}}(\rho,\sigma) := \max_{\mathbb I\geq\Lambda\geq 0} \left[ p_\rho \operatorname{Tr}(\Lambda\rho) + p_\sigma \operatorname{Tr}((\mathbb{I}-\Lambda)\sigma )\right] =\frac{1}{2}\left(1+\|p_\rho\rho-p_\sigma\sigma\|_1\right).
    \end{equation*}
\end{proposition}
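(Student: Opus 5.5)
This is the standard Holevo--Helstrom bound, and I will prove it by reducing the optimization over two-outcome POVMs to a linear functional of the single operator $\Lambda$ and then optimizing via the spectral decomposition of a Hermitian difference.

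First I rewrite the objective. Since $p_\sigma\operatorname{Tr}((\mathbb I-\Lambda)\sigma)=p_\sigma-p_\sigma\operatorname{Tr}(\Lambda\sigma)$, the success probability equals
\begin{equation*}
p_\sigma+\operatorname{Tr}\bigl(\Lambda(p_\rho\rho-p_\sigma\sigma)\bigr).
\end{equation*}
Set $X:=p_\rho\rho-p_\sigma\sigma$. This operator is Hermitian with $\operatorname{Tr}X=p_\rho-p_\sigma$. Diagonalize it and split it into its positive and negative parts, $X=X_+-X_-$, where $X_\pm\geq0$ have orthogonal supports. Let $P_+$ be the projector onto the support of $X_+$. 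Then $\|X\|_1=\operatorname{Tr}X_++\operatorname{Tr}X_-$, and $\operatorname{Tr}X_+-\operatorname{Tr}X_-=p_\rho-p_\sigma$. Solving these two equations gives
\begin{equation*}
\operatorname{Tr}X_+=\tfrac12\bigl(\|X\|_1+p_\rho-p_\sigma\bigr).
\end{equation*}

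Next come the upper and lower bounds. For any $0\leq\Lambda\leq\mathbb I$,
\begin{equation*}
\operatorname{Tr}(\Lambda X)\leq\operatorname{Tr}(\Lambda X_+)\leq\operatorname{Tr}X_+ .
\end{equation*}
The first inequality holds because $\operatorname{Tr}(\Lambda X_-)\geq0$, as both operators are positive semidefinite. The second holds because $\mathbb I-\Lambda\geq0$, so $\operatorname{Tr}((\mathbb I-\Lambda)X_+)\geq0$. Choosing $\Lambda=P_+$ attains equality, since $P_+X_-=0$ and $P_+X_+=X_+$. The maximum is therefore
\begin{equation*}
p_\sigma+\tfrac12\bigl(\|X\|_1+p_\rho-p_\sigma\bigr)=\tfrac12\bigl(1+\|X\|_1\bigr),
\end{equation*}
where the last step uses $p_\rho+p_\sigma=1$. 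This is the claimed value.

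There is no genuine obstacle here. The only step requiring care is justifying $\operatorname{Tr}(\Lambda Y)\geq0$ for positive semidefinite $\Lambda$ and $Y$. I will handle it by writing $\operatorname{Tr}(\Lambda Y)=\operatorname{Tr}(\sqrt{Y}\Lambda\sqrt{Y})\geq0$. In the finite-dimensional setting of the paper, the Jordan decomposition used above is immediate from the spectral theorem.
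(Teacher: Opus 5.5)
Your proof is correct. Rewriting the objective as $p_\sigma+\operatorname{Tr}(\Lambda X)$ with $X=p_\rho\rho-p_\sigma\sigma$, bounding $\operatorname{Tr}(\Lambda X)\le\operatorname{Tr}X_+$ with equality at $\Lambda=P_+$, and using $\operatorname{Tr}X_+=\tfrac12(\|X\|_1+p_\rho-p_\sigma)$ gives exactly $\tfrac12(1+\|X\|_1)$. The paper does not prove this proposition but cites it from Helstrom and Holevo, and your Jordan-decomposition argument is the standard proof behind that citation.
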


In this work, algorithms are often restricted to perform polynomial-time operations.
In particular, we consider uniform/nonuniform algorithms that run in QPT, meaning that they run on a quantum Turing machine (in our case, implemented by quantum circuits) whose number of operations is bounded by some polynomial function of the length of its input.
We refer to the class of all positive polynomial functions on their input as $\operatorname{Poly}$.
If $\mathcal A$ is a (possibly randomized) algorithm, we denote by $y \leftarrow\mathcal{A}(x)$ the experiment of running the algorithm \(\mathcal{A}\) on input \(x\), outputting the value \(y\).

\begin{definition}[Negligible function]\label{def: negligible function}
A function $f:\mathbb N\rightarrow\mathbb R$ is said to be \emph{negligible} if 
\begin{equation*}
    \forall p\in\mathrm{Poly}\quad \exists\lambda_0\in\mathbb N \quad\text{such that}\quad \forall\lambda\geq\lambda_0 \quad |f(\lambda)| < \frac{1}{|p(\lambda)|}.
\end{equation*}
We denote the class of such functions as $\operatorname{Negl}$.
\end{definition}

\begin{definition}[Computational indistinguishability]\label{def: comp indist}
Let $\lambda\in\mathbb N$ be the security parameter. 
We say that two families of states $\{\rho^\lambda\}_\lambda$ and $\{\sigma^\lambda\}_\lambda$ are \emph{computationally indistinguishable}, denoted $\{\rho^\lambda\}_\lambda \approx\{\sigma^\lambda\}_\lambda$, if for all non-uniform QPT algorithms $\mathcal D$, there exists a negligible function \(\varepsilon\in\operatorname{Negl}\), such that, for all $\lambda\in\mathbb N$ and all polynomial-size advice states $\alpha_\lambda$, the advantage
\begin{equation*}
    \operatorname{Adv}_{\mathcal D^{\alpha^\lambda}}(\rho^\lambda,\sigma^\lambda) := \left|\operatorname{Pr}[1\leftarrow\mathcal{D}(\rho^\lambda,\alpha^\lambda)] - \operatorname{Pr}[1\leftarrow\mathcal{D}(\sigma^\lambda,\alpha^\lambda)]\right| \leq\varepsilon(\lambda).
\end{equation*}
\end{definition}

\medskip

\begin{remark}[Security parameter]
In cryptography, results are taken in the regime in which the argument of a negligible function, called the \emph{security parameter}, usually denoted by \(\lambda\), goes asymptotically to infinity (according to the definition of negligible function).
In this work, all results hold in this regime whenever a security parameter is considered. 
When clear from the context, this is implicit in the proofs to avoid notational overload.
\end{remark}

\subsection{Entanglement Measures}
We recall some relevant measures of entanglement of quantum states in both the information-theoretic and computational settings.
An entanglement measure is a scalar function defined on $\mathcal{D}(\mathcal{H})$, which does not increase under the action of LOCC channels~\cite{VPRK97}. 
First, we recall the definition of a LOCC channel (Definition~\ref{def:locc}) and the definitions of one-shot distillable entanglement and one-shot entanglement cost (Definitions~\ref{def:Ed} and~\ref{def:Ec}), then state their relation in Proposition~\ref{Th: ent distil and cost}. We then recall the definition of a vanishing function (Definition~\ref{def: vanishing function}) and remark on the behavior of the relation in Proposition~\ref{Th: ent distil and cost} when the error of distillation and dilution are both vanishing functions.
The distillable entanglement~\cite{BD10} is the largest number of EPR pairs one can obtain from a given bipartite quantum state by using only LOCC channels.
The entanglement cost~\cite{BD11} represents its ``converse'' quantity, which gives the least number of EPR pairs one needs to start with to reconstruct a given state using LOCC channels exclusively.

\begin{definition}[LOCC channel~\cite{KW24}]\label{def:locc}
An \emph{LOCC channel} is a map implementable by applying corresponding one-way LOCC channels a finite number of times. In turn, a \emph{one-way LOCC channel from Alice to Bob} $\Gamma^{A\rightarrow B}:\mathcal D(\mathcal H_A\otimes\mathcal H_B)\rightarrow\mathcal{D}(\mathcal H_{\bar{A}}\otimes\mathcal H_{\bar{B}})$ is a map of the form 
\begin{equation*}
    \Gamma^{A\rightarrow B}:=\sum_{x\in X}\mathcal{E}_{A}^x\otimes\mathcal{N}_{B}^x, 
\end{equation*}
from the input spaces $\mathcal H_A,\mathcal H_B$ to the output spaces $\mathcal H_{\bar{A}},\mathcal H_{\bar B}$, for a finite $|X|$, where the maps $\{\mathcal{E}^x_A\}_x$, $\{\mathcal N_B^x\}_x$ are completely positive, with each $\mathcal N_B^x$ and $\sum_{x\in X}\mathcal E_A^x$ being trace preserving. The corresponding one-way LOCC channel from Bob to Alice is defined similarly as
\begin{equation*}
    \Gamma^{B\rightarrow A}:=\sum_{x\in X}\mathcal{N}_{A}^x\otimes\mathcal{E}_{B}^x.
\end{equation*}
Then, an \emph{LOCC channel} $\Gamma: \mathcal D(\mathcal H_A\otimes\mathcal H_B)\rightarrow\mathcal{D}(\mathcal H_{\bar{A}}\otimes\mathcal H_{\bar{B}})$ can be written in the form  
\begin{equation*}
    \Gamma=\sum_{y\in Y}\mathcal{T}_{A}^y\otimes\mathcal{R}_{B}^y,
\end{equation*}
 for a finite $|Y|$, where the $\operatorname{maps}\left\{\mathcal{T}_A^y\right\}_y,\left\{\mathcal{R}_B^y\right\}_y$ are completely positive and $\Gamma$ is trace preserving.
\end{definition}

\begin{definition}[One-shot distillable entanglement \(E_D^\varepsilon\)~\cite{BD10}]\label{def:Ed}
Let $\rho_{AB} \in \mathcal{D}(\mathcal H_A\otimes \mathcal H_B)$ be a bipartite quantum state. 
The \emph{one-shot distillable entanglement}, $E_D^\varepsilon$, of a state $\rho_{AB}$, for $\varepsilon\in[0,1)$, is defined as
\begin{equation*}
   E_D^\varepsilon(\rho_{AB}) := \sup_{d,\Gamma}\left\{d:1-\operatorname{F}(\Gamma\left(\rho_{AB}),\Phi_{AB}^{\otimes d}\right)\leq \varepsilon\right\},
\end{equation*}
where the optimization is performed with regard to LOCC channels $\Gamma$ and the number of EPR pairs $d\in\mathbb N$.
\end{definition}

\begin{definition}[One-shot entanglement cost \(E_C^\varepsilon\)~\cite{BD11}]\label{def:Ec}
Let $\rho_{AB} \in \mathcal{D}(\mathcal H_A\otimes \mathcal H_B)$ be a bipartite quantum state. 
The \emph{one-shot entanglement cost}, $E_C^\varepsilon$, of a state $\rho_{AB}$, for $\varepsilon\in[0,1)$, is defined as
\begin{equation*}
    E_C^\varepsilon(\rho_{AB}) := \inf_{c,\Gamma}\{c:1-\operatorname{F}(\rho_{AB},\Gamma(\Phi_{AB}^{\otimes c}))\leq \varepsilon\},
\end{equation*}
where the optimization is performed with regard to LOCC channel $\Gamma$ and number of EPR pairs $c\in\mathbb N$.
\end{definition}
\begin{definition}[Vanishing at infinity function]\label{def: vanishing function}
    A function $g:\mathbb N\rightarrow \mathbb R$ is said to be \emph{vanishing at infinity} if 
    \begin{equation*}
        \forall\varepsilon>0\quad \exists x_0\in\mathbb N\quad\text{such that}\quad\forall x\geq x_0\quad |g(x)|\leq \varepsilon.
    \end{equation*}
\end{definition}
\begin{proposition}[Relation between $E_D^\varepsilon$ and $E_C^\varepsilon$~{\cite{Wilde21}}]\label{Th: ent distil and cost}
    Let $\rho_{AB}\in\mathcal D(\mathcal H_A\otimes \mathcal H_B)$ be a bipartite quantum state, and let $\varepsilon_1, \varepsilon_2\in[0,1)$ be the errors.
    Then,
    \begin{equation*}
        E_D^{\varepsilon_1} (\rho_{AB})\leq E_C^{\varepsilon_2}(\rho_{AB}) + \operatorname{log}_2\left(\frac{1}{1-\varepsilon^\prime}\right),
    \end{equation*}
    where \( \varepsilon^\prime:=\left(\sqrt{\varepsilon_1} + \sqrt{\varepsilon_2}\right)^2<1\).
\end{proposition}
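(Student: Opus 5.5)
The plan is to compose an approximate dilution protocol with an approximate distillation protocol, and to compare the result with the limited ability of LOCC to create overlap with EPR pairs.

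\emph{Setup.} Fix an integer $c$ and an LOCC channel $\Lambda$ that are feasible for $E_C^{\varepsilon_2}(\rho_{AB})$. Write $\omega:=\Lambda(\Phi_{AB}^{\otimes c})$, so that $1-\operatorname{F}(\rho_{AB},\omega)\leq\varepsilon_2$. Likewise fix $d$ and an LOCC channel $\Gamma$ that are feasible for $E_D^{\varepsilon_1}(\rho_{AB})$, so that $1-\operatorname{F}(\Gamma(\rho_{AB}),\Phi^{\otimes d})\leq\varepsilon_1$. If it suffices to show $d\leq c+\log_2\frac{1}{1-\varepsilon'}$ for every such pair, then taking the supremum over $d$ and the infimum over $c$ gives the claim.

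\emph{Step 1: the composite channel.} Consider the LOCC channel $\Gamma\circ\Lambda$, which maps $\Phi^{\otimes c}$ to $\Gamma(\omega)$. I will use the sine (purified) distance $P(\rho,\sigma):=\sqrt{1-\operatorname{F}(\rho,\sigma)}$ for two reasons. It satisfies the triangle inequality. It is also monotone under quantum channels, because fidelity is. These give
\begin{equation*}
P(\Gamma(\omega),\Phi^{\otimes d})\leq P(\Gamma(\omega),\Gamma(\rho_{AB}))+P(\Gamma(\rho_{AB}),\Phi^{\otimes d})\leq\sqrt{\varepsilon_2}+\sqrt{\varepsilon_1}.
\end{equation*}
Squaring yields $\operatorname{F}(\Gamma\circ\Lambda(\Phi^{\otimes c}),\Phi^{\otimes d})\geq 1-\varepsilon'$. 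The hypothesis $\varepsilon'<1$ makes this bound nontrivial.

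\emph{Step 2: the overlap bound (the main step).} I need to show that for every LOCC channel $\mathcal{M}$,
\begin{equation*}
\langle\varphi^{\otimes d}|\,\mathcal{M}(\Phi^{\otimes c})\,|\varphi^{\otimes d}\rangle\leq 2^{c-d}.
\end{equation*}
Since $\Phi^{\otimes d}$ is pure, this overlap equals the fidelity, so the bound applies directly to Step 1.

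To prove it, I will use that LOCC, and more generally separable channels, cannot increase the Schmidt number. The input $\Phi^{\otimes c}$ has Schmidt rank $2^c$. Writing $\mathcal{M}=\sum_y\mathcal{T}^y_A\otimes\mathcal{R}^y_B$ and expanding each $\mathcal{T}^y_A$ and $\mathcal{R}^y_B$ in Kraus operators, the output is a mixture of pure states $(K_A\otimes L_B)\ket{\varphi}^{\otimes c}$, normalized. Each of these has Schmidt rank at most $2^c$.

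For a pure state $\ket{\chi}=\sum_{i\leq K}\sqrt{p_i}\ket{a_ib_i}$ of Schmidt rank $K$, the overlap with the maximally entangled state of dimension $M=2^d$ satisfies
\begin{equation*}
|\langle\varphi^{\otimes d}|\chi\rangle|^2\leq\frac{1}{M}\Big(\sum_i\sqrt{p_i}\Big)^2\leq\frac{K}{M}.
\end{equation*}
The first inequality is Cauchy--Schwarz on the overlap, using that $\ket{\varphi^{\otimes d}}$ is $\frac{1}{\sqrt M}(\mathbb I\otimes U)\sum_j\ket{jj}$ in any local basis. The second is the concavity bound $\sum_{i\leq K}\sqrt{p_i}\leq\sqrt{K}$. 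Averaging over the mixture gives the bound $2^c/2^d$.

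\emph{Conclusion.} Combining the two steps, $1-\varepsilon'\leq 2^{c-d}$, that is, $d\leq c+\log_2\frac{1}{1-\varepsilon'}$. Optimizing over the feasible pairs then gives the proposition. The only delicate points are the Schmidt-number monotonicity argument in Step 2 and the choice of purified distance in Step 1, which is what produces the form $(\sqrt{\varepsilon_1}+\sqrt{\varepsilon_2})^2$.
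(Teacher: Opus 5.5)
Your proof is correct. Note that the paper gives no proof of this proposition; it imports it from~\cite{Wilde21}, so there is no in-paper argument to match.

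Gluing the dilution map $\Lambda$ and the distillation map $\Gamma$ with the sine distance $\sqrt{1-\operatorname{F}}$ is exactly what produces $\varepsilon'=(\sqrt{\varepsilon_1}+\sqrt{\varepsilon_2})^2$, since that distance is a metric and is monotone under channels. Everything else then reduces to the overlap bound $\langle\varphi^{\otimes d}|\mathcal{M}(\Phi^{\otimes c})|\varphi^{\otimes d}\rangle\le 2^{c-d}$ for LOCC $\mathcal{M}$. You obtain that bound from two facts: product Kraus operators cannot increase Schmidt rank, and $|\langle\varphi^{\otimes d}|\chi\rangle|^2\le K/2^d$ for a pure state of Schmidt rank $K$.

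Your justification of the first inequality in that last estimate is loose. It is cleanest to write $|\varphi^{\otimes d}\rangle=2^{-d/2}\sum_j|e_j\rangle|\bar e_j\rangle$ for an orthonormal basis $\{e_j\}$ extending the Schmidt vectors $\{a_i\}$. Then $\langle\varphi^{\otimes d}|\chi\rangle=2^{-d/2}\sum_i\sqrt{p_i}\langle\bar a_i|b_i\rangle$, and you bound each inner product by $1$.

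A common alternative in the literature gets the same overlap bound from the max-relative entropy of entanglement. One has $\Phi^{\otimes c}\le 2^c\sigma$ for a separable $\sigma$, namely the isotropic state on the separability boundary. LOCC preserves this operator inequality, and $\mathcal{M}(\sigma)$ is still separable. Finally, $\langle\varphi^{\otimes d}|\mathcal{M}(\sigma)|\varphi^{\otimes d}\rangle\le 2^{-d}$. In hypothesis-testing language this is the inequality $D_H^{\varepsilon'}\le D_{\max}+\log_2\frac{1}{1-\varepsilon'}$, which explains the shape of the additive term.

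Your route is more elementary and uses only the separable (product-Kraus) structure of LOCC. The $E_{\max}$ route is more modular and carries over to larger classes of operations such as PPT operations. There the Schmidt number is no longer monotone, but the $2^{-d}$ overlap bound for the reference set of states still holds.
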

\begin{remark}\label{remark: relation for eps negl}
    We will frequently manipulate states such that \(\varepsilon_1,\varepsilon_2\) go to zero as their inputs grow, i.e., are vanishing at infinity in some parameter \(x\).
    In that case, when $\varepsilon_1$ and $\varepsilon_2$ are both equal to some function $\varepsilon$ that is vanishing at infinity, the relation in Proposition~\ref{Th: ent distil and cost} can be stated, for all sufficiently large \(x>x_0\) for some \(x_0\in\mathbb{N}\), as
    \(
        E^{\varepsilon}_D(\rho_{AB}) \leq  E_C^\varepsilon(\rho_{AB}).
    \)
    Since $E^{\varepsilon}_D(\rho_{AB})$ and $E^{\varepsilon}_C(\rho_{AB})$ (Definitions~\ref{def:Ed} and~\ref{def:Ec}) are integer-valued functions and the logarithm term is strictly smaller than 1, taking the floor is sufficient to show this.
\end{remark}

The entanglement measures from Definitions~\ref{def:Ed} and~\ref{def:Ec} consider arbitrary, possibly inefficient, operations that two parties may perform in an LOCC setting.
In~\cite{ABV23}, a new framework is introduced to characterize an operational setting in which parties are restricted to perform efficient LOCC operations, and the computational entanglement of bipartite quantum states is studied therein.
Below, we recall the definitions of one-shot computational distillable entanglement and one-shot computational entanglement cost.
Each definition covers the case in which for each size \(\lambda\in \mathbb N\) there is one state (Definitions~\ref{def: comp dist} and~\ref{def: comp cost}), and a uniform version where for each size \(\lambda\in \mathbb N\) there are \(2^{\kappa(\lambda)}\) states, for some \(\kappa\in\operatorname{Poly}\) (Definitions~\ref{def: uni comp dist} and~\ref{def: uni comp cost}).

Following~\cite{ABV23}, we work with bipartite qubit quantum systems.
We denote the qubit space of Alice as $\mathcal H_A:={(\mathbb C^2)}^{\otimes n_A}$ and the qubit space of Bob as $\mathcal H_B:={(\mathbb C^2)}^{\otimes n_B}$, where $n_A,n_B$ are the number of qubits of Alice and Bob, respectively. We refer to quantum states in joint Hilbert spaces of Alice and Bob as $n(\lambda) =n_A(\lambda)+n_B(\lambda)$-qubit bipartite quantum states.

In this work, the number of qubits of Alice and Bob are polynomial functions of a parameter \(\lambda\in\mathbb{N}\), i.e., \(n_A,n_B: \mathbb{N}\to \mathbb{N}\) with \(n_A,n_B \in \operatorname{Poly}\).
Hence, we denote corresponding spaces of Alice and Bob as $\mathcal{H}_A^\lambda$ and $\mathcal{H}_B^\lambda$.
When considering a property related to security, it is considered with respect to the $\lambda$ parameter (the security parameter), and the results hold for $\lambda$ large enough, i.e., for all \(\lambda > \lambda_0\) for some \(\lambda_0\in\mathbb N\).

\begin{definition}[Circuit representation of LOCC channel~\cite{ABV23}] \label{def:LOCC circ}
    Let $\Gamma$ be an LOCC channel with $r$ denoting the number of rounds such that
    \begin{equation*}
    \Gamma:\mathcal D\left(\mathcal H_A^\lambda\otimes \mathcal H_B^\lambda\right) \longrightarrow\mathcal D\left(\mathcal H^\lambda_{\bar A} \otimes \mathcal H^\lambda_{\bar B}\right),
    \end{equation*}
    where $n_A, n_B$ and $\bar n_A, \bar n_B$ are, respectively, the numbers of input and output qubits of Alice and Bob. We say that $\Gamma$ admits a \emph{circuit representation} if there exist two sets of quantum circuits $\{\mathcal C_A^i\}_{i=1}^r,\{\mathcal C_B^i\}_{i=1}^r$ acting on registers $A,A^\prime, C$ and $B,B^\prime, C$, respectively. $A^\prime,B^\prime$ represent ancilla qubits and $C$ is a shared communication register. 

    Alice and Bob initialize their ancilla qubits to $|0\rangle$ and the shared communication register to $|0\rangle$. At each round $j\in[r]$, Alice applies $\mathcal C_A^j$ on $(A,A^\prime,C)$, measures $C$ in the computational basis, and sends the result to Bob. Based on the outcome of the measurement, Bob applies the circuit $\mathcal C_B^j$ on $(B,B^\prime, C)$, measures $C$ and sends the result back to Alice.
    
    The output of the channel $\Gamma$ is considered to be the final state stored in the registers $\bar A, \bar B$.
\end{definition}
\begin{definition}[Efficient family of LOCC channels~\cite{ABV23}]\label{def: eff locc}
    We say that the family of LOCC channels $\{\hat\Gamma^\lambda\}_{\lambda\in\mathbb{N}}$ is \emph{efficient} if the resources needed to implement the circuit representation of $\hat\Gamma^\lambda$ are upper-bounded by some polynomial function. Formally,
    \begin{equation*}
       \exists \,p \in\operatorname{Poly}\quad \text{such that}\quad \forall \,\lambda\in\mathbb N,\quad t(\lambda) \leq  p(\lambda),
    \end{equation*}
    where $t(\lambda)$ is the total number of gates, measurements in the computational basis, and ancillas needed to realize the circuit representation of $\hat\Gamma^\lambda$.
\end{definition}

\begin{definition}[One-shot computational distillable entanglement \(\hat{E}_D^\varepsilon\)~\cite{ABV23}]\label{def: comp dist}
Let \(\lambda\in\mathbb{N}\), $\varepsilon:\mathbb N\rightarrow[0,1]$ be an arbitrary function, and $n_A,n_B:\mathbb N\rightarrow\mathbb N$ be arbitrary polynomially bounded functions.
Let $\{\rho^\lambda_{AB}\}_\lambda$ be a family of \(n_A(\lambda) + n_B(\lambda)\)-qubit bipartite states with \(\rho^\lambda_{AB}\in\mathcal D\left(\mathcal H_A^\lambda \otimes \mathcal H_B^\lambda\right)\). \\
The \emph{one-shot computational distillable entanglement} of the family $\{\rho^\lambda_{AB}\}_\lambda$ has a valid lower bound $m:\mathbb{N}\to\mathbb{N}$, denoted
\begin{equation*}
    \hat E_D^\varepsilon(\{\rho_{AB}^\lambda\}_\lambda)\geq m,
\end{equation*}
if there exists an efficient family of LOCC channels, $\{\hat \Gamma^\lambda\}_\lambda$, such that, for all \(\lambda\),
\begin{equation*}
    p_{\mathrm{err}}(\hat\Gamma^\lambda,\rho^\lambda_{AB}):= 1 - \operatorname{F}(\hat\Gamma^\lambda(\rho^\lambda_{AB}), \Phi^{\otimes m(\lambda)}_{AB}))\leq \varepsilon(\lambda).
\end{equation*}
\end{definition}

\begin{definition}[One-shot computational entanglement cost \(\hat{E}_C^\varepsilon\)~\cite{ABV23}]\label{def: comp cost}
Let \(\lambda\in\mathbb{N}\), $\varepsilon:\mathbb N\rightarrow[0,1]$ be an arbitrary function, and  $n_A,n_B:\mathbb N\rightarrow\mathbb N$ be arbitrary polynomially bounded functions.
Let $\{\rho^\lambda_{AB}\}_\lambda$ be a family of \(n_A(\lambda) + n_B(\lambda)\)-qubit bipartite states with \(\rho^\lambda_{AB}\in\mathcal D\left(\mathcal H_A^\lambda \otimes \mathcal H_B^\lambda\right)\).\\
The \emph{one-shot computational entanglement cost} of $\{\rho^\lambda_{AB}\}_\lambda$ has a valid upper bound $n:\mathbb{N}\to\mathbb{N}$, denoted
\begin{equation*}
    \hat E_C^\varepsilon(\{\rho_{AB}^\lambda\}_\lambda)\leq n,
\end{equation*}
if there exists an efficient family of LOCC channels $\{\hat \Gamma^\lambda\}_\lambda$, such that, for all \(\lambda\),
\begin{equation*}
    p_{\mathrm{err}}(\hat\Gamma^\lambda,\rho^\lambda_{AB}):= 1 - \operatorname{F}(\hat\Gamma^\lambda(\Phi^{\otimes n(\lambda)}_{AB}), \rho^\lambda_{AB}))\leq \varepsilon(\lambda).
\end{equation*}
\end{definition}
    
\begin{definition}[Uniform one-shot computational distillable entanglement~\cite{ABV23}]\label{def: uni comp dist}
    Let \(\lambda\in\mathbb{N}\), $\varepsilon:\mathbb N\rightarrow[0,1]$ be an arbitrary function, and $n_A,n_B,\kappa:\mathbb N\rightarrow\mathbb N$ be arbitrary polynomially bounded functions.
    Let $\{\rho^k_{AB}\}_{k\in\{0,1\}^{\kappa(\lambda)}}$ be a family of \(n_A(\lambda) + n_B(\lambda)\)-qubit bipartite states $\rho^k_{AB}\in\mathcal D\left(\mathcal H_A^\lambda \otimes \mathcal H_B^\lambda\right)$.\\
    The \emph{uniform one-shot computational distillable entanglement} of  $\{k,\rho^k_{AB}\}_{k\in\{0,1\}^{\kappa(\lambda)}}$ has a valid lower bound \(m:\mathbb{N}\to\mathbb{N}\), denoted 
    \begin{equation*}
        \hat E_D^\varepsilon(\{k,\rho_{AB}^k\}_{k})\geq m,
    \end{equation*}
    if there exists an efficient family of LOCC channels $\{\hat \Gamma^\lambda\}_\lambda$, such that, for all \(\lambda\in \mathbb{N}\) and all \(k\in \{0,1\}^{\kappa(\lambda)}\),
    \begin{equation*}
         p_{\mathrm{err}}(\hat\Gamma^\lambda,k,\rho^k_{AB}):= 1 - \operatorname{F}(\hat\Gamma^\lambda(k,\rho^k_{AB}), \Phi^{\otimes m(\lambda)}_{AB}))\leq \varepsilon(\lambda).    
    \end{equation*}    
\end{definition}

\begin{definition}[Uniform one-shot computational entanglement cost~\cite{ABV23}]\label{def: uni comp cost}
   Let \(\lambda\in\mathbb{N}\), $\varepsilon:\mathbb N\rightarrow[0,1]$ be an arbitrary function, and $n_A,n_B,\kappa:\mathbb N\rightarrow\mathbb N$ be arbitrary polynomially bounded functions.
    Let $\{\rho^k_{AB}\}_{k\in\{0,1\}^{\kappa(\lambda)}}$ be a family of \(n_A(\lambda) + n_B(\lambda)\)-qubit bipartite states $\rho^k_{AB}\in\mathcal D\left(\mathcal H_A^\lambda \otimes \mathcal H_B^\lambda\right)$.\\
    The \emph{uniform one-shot computational entanglement cost} of  $\{k,\rho^k_{AB}\}_{k\in\{0,1\}^{\kappa(\lambda)}}$ has a valid upper bound \(n:\mathbb{N}\to\mathbb{N}\), denoted 
    \begin{equation*}
        \hat E_C^\varepsilon(\{k,\rho_{AB}^k\}_{k})\leq n,
    \end{equation*}
    if there exists an efficient family of LOCC channels $\{\hat \Gamma^\lambda\}_\lambda$, such that, for all \(\lambda\in \mathbb{N}\) and all \(k\in \{0,1\}^{\kappa(\lambda)}\),
    \begin{equation*}
        p_{\mathrm{err}}(\hat\Gamma^\lambda,k,\rho^k_{AB}):= 1 - \operatorname{F}(\hat\Gamma^\lambda(k,\Phi^{\otimes n(\lambda)}_{AB}), \rho^k_{AB}))\leq \varepsilon(\lambda).
    \end{equation*}    
\end{definition}

\smallskip
\noindent\textit{Notation.}
    In Definitions~\ref{def: uni comp dist} and~\ref{def: uni comp cost}, the notation $\hat \Gamma^\lambda(k,\rho^k_{AB})$ stands for $\hat \Gamma^\lambda$ evaluated on the state $\ketbra{k}{k}_{\tilde A}\otimes \rho_{AB}^k \otimes \ketbra{k}{k}_{\tilde B}$ (Definition~\ref{def: uni comp dist})  and $\ketbra{k}{k}_{\tilde A}\otimes \Phi_{AB}^{\otimes n(\lambda)} \otimes \ketbra{k}{k}_{\tilde B}$ (Definition~\ref{def: uni comp cost}), where $\tilde A, \tilde B$ are the key registers of Alice and Bob, respectively. We also make use of a shorthand notation for a keyed family of states $\{k,\rho^k_{AB}\}_{k\in\{0,1\}^{\kappa(\lambda)}}$ and write $\{k,\rho^k_{AB}\}_k$.

\smallskip
\begin{remark}
    For generality of our results, in this work, we use both the concepts of negligible functions (Definition~\ref{def: negligible function}) and vanishing at infinity functions (Definition~\ref{def: vanishing function}). 
    In particular, we utilize the notion of negligible functions when addressing computational indistinguishability, while the concept of vanishing at infinity is used for the results regarding the error values of entanglement measures and trace distances. 
    Note that any negligible function constitutes a vanishing at infinity function, making the latter a weaker assumption.
\end{remark}

\subsection{Pseudoentanglement}
In this section, we recall different definitions of pseudoentanglement for general bipartite mixed states. 
They aim to represent the same idea: a family of states $\{k,\psi^k\}_k$ is said to be pseudoentangled if there exists another computationally indistinguishable reference family $\{k,\phi^k\}_k$ that possesses more entanglement.  

We work with and study the relationship between cryptography and the operational definitions of pseudoentanglement introduced in~\cite{ABV23} and adapted in~\cite{GE24}. 
These two definitions have two main differences.
The first one is that in~\cite{ABV23} the entanglement of the states in the reference family $\{k,\phi^k\}_k$ must be efficiently distillable. In contrast, in~\cite{GE24} it is only considered that the entanglement in the reference family exists, i.e., is accessible to unbounded parties (motivated by the reference family $\{k,\phi^k\}_k$ not being required to be efficiently preparable). The second main difference is that, in~\cite{GE24}, neither family of states has keys (corresponding to the situation where the families consist of a single state for each size parameter \(\lambda\)).
We refer to the original definition~\cite{ABV23} as \emph{fully-computational pseudoentanglement} and to the definition of~\cite{GE24} as \emph{inefficiently-distillable pseudoentanglement}.

\begin{definition}[Fully-computational pseudoentanglement~\cite{ABV23}]\label{def: pseudo-entanglement ABV}
     Let \(\lambda\in\mathbb{N}\) be the security parameter, $n,\kappa:\mathbb N\rightarrow\mathbb N$ be arbitrary polynomially bounded functions, and \(\varepsilon:\mathbb{N}\to [0,1]\) with \(c,d:\mathbb{N}\to\mathbb{N}\) be arbitrary functions.\\
     A family of \(n(\lambda)\)-qubit bipartite states \(\{k,\psi_{AB}^k\}_{k\in{\{0,1\}^{\kappa(\lambda)}}}\) is said to be \emph{\((\varepsilon,c,d)\)-pseudoentangled} if there exists a family of \(n(\lambda)\)-qubit bipartite states  \(\{k,\phi_{AB}^k\}_{k\in{\{0,1\}^{\kappa(\lambda)}}}\) such that
     \begin{enumerate}[(i)]
         \item The uniform computational one-shot entanglement cost of the family $\{k,\psi^k_{AB}\}_{k}$ is upper-bounded by $c$, i.e., $\hat E_C^\varepsilon(\{k,\psi_{AB}^k\}_k)\leq c$.
         \item The uniform computational one-shot distillable entanglement of the family  $\{k,\phi_{AB}^k\}_{k}$ is lower-bounded by $d$, i.e., $\hat E_D^\varepsilon( \{k,\phi_{AB}^k\}_k)\geq d$.
         \item For all polynomial functions \(q:\mathbb{N}\to\mathbb{N}\), 
         $\left\{\overline{\psi^{\otimes q(\lambda)}_{{A}{B}}}^\lambda \right\}_\lambda\approx \left\{  \overline{\phi^{\otimes q(\lambda)}_{ A  B}}^\lambda\right\}_\lambda$, where  $\overline{\psi^{\otimes q(\lambda)}_{{A}{B}}}^\lambda := \mathbb E_k\left[  \psi^{k\,\otimes q(\lambda)}_{AB}\right]$ and $\overline{\phi^{\otimes q(\lambda)}_{ A  B}}^\lambda := \mathbb E_k\left[  \phi^{k\,\otimes q(\lambda)}_{AB}\right]$.
     \end{enumerate}
\end{definition}
The other relevant definition of pseudoentanglement was introduced in~\cite{GE24}.
Instead of requiring the states with more entanglement to be computationally distillable, as in Definition~\ref{def: pseudo-entanglement ABV} \textit{(ii)}, it imposes a requirement on the one-shot information-theoretic distillable entanglement instead (Definition~\ref{def: pseudo-entanglement GE} \textit{(ii)}).

\begin{definition}[Inefficiently-distillable pseudoentanglement~\cite{GE24}]\label{def: pseudo-entanglement GE}
     Let \(\lambda\in\mathbb{N}\) be the security parameter, $n:\mathbb N\rightarrow\mathbb N$ be an arbitrary polynomially bounded function, and \(\varepsilon:\mathbb{N}\to [0,1]\) with \(c,d:\mathbb{N}\to\mathbb{N}\) be arbitrary functions.\\
     A family of \(n(\lambda)\)-qubit bipartite states \(\{\psi_{AB}^\lambda\}_{\lambda}\) is said to be \emph{\((\varepsilon,c,d)\)-pseudoentangled} if there exists a family of \(n(\lambda)\)-qubit bipartite states \(\{\phi_{AB}^\lambda\}_{\lambda}\) such that
     \begin{enumerate}[(i)]
         \item The computational one-shot entanglement cost of the family $\{\psi^\lambda_{AB}\}_{\lambda}$ is upper-bounded by $c$, i.e., $\hat E_C^\varepsilon(\{\psi_{AB}^\lambda\})\leq c$.
         \item The one-shot distillable entanglement of the family  $\{\phi_{AB}^\lambda\}_{\lambda}$ is lower-bounded by $d$, i.e., $E_D^\varepsilon( \phi_{AB}^\lambda)\geq d$, for all \(\lambda\in \mathbb N\).
         \item For all polynomial functions \(q:\mathbb{N}\to\mathbb{N}\), $\left\{{\psi^{\lambda\,\otimes q(\lambda)}_{AB}} \right\}_\lambda\approx \left\{{\phi^{\lambda\,\otimes q(\lambda)}_{AB}}\right\}_\lambda$.
     \end{enumerate}
\end{definition}
As noted in~\cite[Section 5]{ABV23}, the concept of pseudoentanglement, and hence its definitions, is only interesting when there is a gap in entanglement between the reference entangled and the pseudoentangled families of states, i.e., $c(\lambda)<d(\lambda)$.
We further assume that both families in the pseudoentanglement structure are preparable in polynomial time.
While this is an additional assumption, not explicitly required in the original definitions of pseudoentanglement, it is a natural requirement when one wants to leverage this resource in operational scenarios, as customary in the current constructions in the state-of-the-art~\cite{ABV23,GE24,ABF24,BF24}.

\subsection{EFI Pairs}
In this section, we recall the concept of EFI pairs~\cite{BCQ23}.
Intuitively, EFI pairs are two families of quantum states that are efficiently preparable by uniform QPT algorithms, information-theoretically distinguishable (have large trace distance), but indistinguishable to any quantum algorithm that is restricted to run in QPT.

\begin{definition}[EFI pairs~\cite{BCQ23}]\label{def: EFI pair}
    Let \(\lambda\in\mathbb{N}\) be the security parameter.
    The pair of states $(\rho_0^\lambda, \rho_1^\lambda)_\lambda$ is said to be an \emph{EFI pair} if the following conditions hold:
     \begin{enumerate}[(i)]
        \item \emph{Efficient preparation:} There exists a uniform QPT algorithm $\mathcal A$ that on input \((1^\lambda, b)\) outputs $\rho^\lambda_b$, for $b\in\{0,1\}$.
        \item \emph{Noticeable statistical distance:} There exists a positive polynomial $p \in\operatorname{Poly}$, such that, for \(\lambda\) large enough, $\operatorname{TD}(\rho^\lambda_0,\rho^\lambda_1)   \geq \frac{1}{p(\lambda)}$.
        \item \emph{Computational indistinguishability:} \(\{\rho_0^\lambda\}_\lambda \approx \{\rho_1^\lambda\}_\lambda\).
    \end{enumerate}
\end{definition}

\section{Error-Continuity of Computational Entanglement Cost and Gap in Computational Entanglement}\label{sec: error continuity and gap}
In this section, we introduce and show a continuity relation connecting the trace distance of two families to their computational entanglement cost, i.e., the ability to prepare them efficiently from EPR pairs employing efficient LOCC channels.
Furthermore, this allows us to demonstrate that families of states that are statistically close (i.e., have a vanishing trace distance) do not exhibit a gap between their respective computational distillable entanglement and cost.

We start by showing this continuity relation for the one-shot computational entanglement cost $\hat E_C^\varepsilon$ when starting from the same number of EPR pairs, $n(\lambda)$.
\begin{lemma}[Error-continuity of $\hat E_C^\varepsilon$]\label{lemma: continuity cost to the same epr}
    Let \(\lambda\in\mathbb{N}\), \(m:\mathbb{N}\to\mathbb{N}\) be a polynomially bounded function and \(n:\mathbb{N}\to\mathbb{N}\) an arbitrary function.
    Let $\{\rho_{AB}^\lambda\}_\lambda$, $\{\sigma_{AB}^\lambda\}_\lambda$ be two families of $m(\lambda)$-qubit bipartite states.
    If, for all \(\lambda\), $\frac{1}{2}\|\rho_{AB}^\lambda - \sigma_{AB}^\lambda\|_1\leq\delta(\lambda)$, with $ \delta(\lambda)+\sqrt{\varepsilon(\lambda)}\leq 1$, and $\hat E_C^\varepsilon(\{\rho_{AB}^\lambda\}_\lambda)\leq n$, then,
    \begin{equation*}
        \hat E_C^{\varepsilon^\prime}(\{\sigma_{AB}^\lambda\}_\lambda) \leq n,
    \end{equation*}
    where $\varepsilon^\prime(\lambda) = \left(\delta(\lambda)+\sqrt{\varepsilon(\lambda)}\right)\left(2-\delta(\lambda)-\sqrt{\varepsilon(\lambda)}\right)$.
\end{lemma}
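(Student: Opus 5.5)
The plan is to reuse the same efficient family of LOCC channels $\{\hat\Gamma^\lambda\}_\lambda$ that witnesses $\hat E_C^\varepsilon(\{\rho_{AB}^\lambda\}_\lambda)\leq n$, and show that it also witnesses the bound for $\{\sigma_{AB}^\lambda\}_\lambda$ with the larger error $\varepsilon'$. Since the channel family, the number of EPR pairs $n(\lambda)$, and hence efficiency are unchanged, the only thing to prove is a fidelity bound. For each $\lambda$, write $\tau^\lambda := \hat\Gamma^\lambda(\Phi_{AB}^{\otimes n(\lambda)})$. We need to show that $1-\operatorname{F}(\tau^\lambda,\sigma^\lambda_{AB})\leq \varepsilon'(\lambda)$.

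\textbf{Step 1: pass from fidelity to trace distance.} The hypothesis gives $\operatorname{F}(\tau^\lambda,\rho^\lambda_{AB})\geq 1-\varepsilon(\lambda)$. The right-hand Fuchs-van de Graaf inequality (Proposition~\ref{prop: fvdg}) then yields
\[
\operatorname{TD}(\tau^\lambda,\rho^\lambda_{AB})\leq\sqrt{1-\operatorname{F}(\tau^\lambda,\rho^\lambda_{AB})}\leq\sqrt{\varepsilon(\lambda)}.
\]

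\textbf{Step 2: triangle inequality.} Since $\operatorname{TD}$ is a metric, we get
\[
\operatorname{TD}(\tau^\lambda,\sigma^\lambda_{AB})\leq \delta(\lambda)+\sqrt{\varepsilon(\lambda)} =: t(\lambda)\leq 1.
\]

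\textbf{Step 3: pass back to fidelity.} This is the only step needing care. The left Fuchs-van de Graaf inequality gives $\sqrt{\operatorname{F}}\geq 1-\operatorname{TD}\geq 1-t$. The assumption $t\leq 1$ is exactly what makes $1-t$ nonnegative, so squaring is legitimate and gives $\operatorname{F}\geq(1-t)^2$. Therefore
\[
1-\operatorname{F}(\tau^\lambda,\sigma^\lambda_{AB})\leq 1-(1-t)^2=t(2-t),
\]
and $t(2-t)$ is precisely $\varepsilon'(\lambda)$ as stated.

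Since this holds for every $\lambda$ with the same efficient channel family, Definition~\ref{def: comp cost} gives $\hat E_C^{\varepsilon'}(\{\sigma^\lambda_{AB}\}_\lambda)\leq n$. The main potential obstacle is keeping the direction and monotonicity of the Fuchs-van de Graaf inequalities straight, including the sign condition needed for squaring. Apart from that the argument is routine, and it carries over verbatim to the keyed, uniform setting by applying it for each key $k$.
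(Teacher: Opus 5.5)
Your proposal is correct and follows the paper's proof essentially step for step. You reuse the witnessing channel family, apply the upper Fuchs--van de Graaf bound to get $\operatorname{TD}\le\sqrt{\varepsilon}$, use the triangle inequality to reach $\delta+\sqrt{\varepsilon}$, and apply the lower bound to recover $\operatorname{F}\ge(1-\delta-\sqrt{\varepsilon})^2$. You also make explicit that $\delta+\sqrt{\varepsilon}\le 1$ is what justifies squaring, which the paper leaves implicit, and your key-by-key extension matches Corollary~\ref{cor: continuity cost uniform}.
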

\begin{proof}
        Consider the assumption that $\hat E_C^\varepsilon(\{\rho_{AB}^\lambda\}_\lambda)\leq n$.
    Then, there exists an efficient family of LOCC channels \(\{\hat\Gamma^\lambda\}_{\lambda}\), such that, for all \(\lambda\),
    \begin{equation}\label{eq:err-cont-1}
        p_{\text{err}}(\hat\Gamma^\lambda,\rho_{AB}^\lambda) = 1-\operatorname{F} \left(\rho_{AB}^\lambda, \hat\Gamma^\lambda\left(\Phi_{AB}^{\otimes n(\lambda)}\right)\right) \leq \varepsilon(\lambda).
    \end{equation}
    From the Fuchs-van de Graaf inequality (Proposition~\ref{prop: fvdg}),
\begin{equation}\label{eq: 1-dist btw rho and epr}
    \frac{1}{2}\left\|\rho_{AB}^\lambda - \hat\Gamma^\lambda\left(\Phi_{AB}^{\otimes n(\lambda)}\right)\right\|_1\leq\sqrt{1-\operatorname{F}\left(\rho_{AB}^\lambda, \hat\Gamma^\lambda\left(\Phi_{AB}^{\otimes n(\lambda)}\right)\right)}\leq\sqrt{\varepsilon(\lambda)}.
\end{equation}
Also, by the triangle inequality,
    \begin{equation}\label{eq: triangle dilute}
        \frac{1}{2}\left\|\hat\Gamma^\lambda\left(\Phi_{AB}^{\otimes n(\lambda)}\right) - \sigma_{AB}^\lambda\right\|_1 \leq \frac{1}{2}\left\|\hat\Gamma^\lambda\left(\Phi_{AB}^{\otimes n(\lambda)}\right) - \rho_{AB}^\lambda\right\|_1 + \frac{1}{2}\Big\|\rho_{AB}^\lambda-\sigma_{AB}^\lambda\Big\|_1.
    \end{equation}
Hence, combining Equations~\eqref{eq: 1-dist btw rho and epr} and~\eqref{eq: triangle dilute},
\begin{equation}
    \frac{1}{2}\left\|\sigma_{AB}^\lambda - \hat\Gamma^\lambda\left(\Phi_{AB}^{\otimes n(\lambda)}\right)\right\|_1 \leq \delta(\lambda) + \sqrt{\varepsilon(\lambda)},
\end{equation}
since by assumption $\frac{1}{2}\|\rho_{AB}^\lambda - \sigma_{AB}^\lambda\|_1\leq\delta(\lambda)$.

Applying the Fuchs-van de Graaf inequality (Proposition~\ref{prop: fvdg}) again gives
\begin{align}
    1-\sqrt{\operatorname{F}\left(\sigma_{AB}^\lambda, \hat\Gamma^\lambda\left(\Phi_{AB}^{\otimes n(\lambda)}\right)\right)} \leq\frac{1}{2}\left\|\sigma_{AB}^\lambda - \hat\Gamma^\lambda\left(\Phi_{AB}^{\otimes n(\lambda)}\right)\right\|_1 \leq \delta(\lambda) + \sqrt{\varepsilon(\lambda)},
\end{align}
thus,
\begin{align}
    \operatorname{F}\left(\sigma_{AB}^\lambda, \hat\Gamma^\lambda\left(\Phi_{AB}^{\otimes n(\lambda)}\right)\right) \geq\left(1-\delta(\lambda)-\sqrt{\varepsilon(\lambda)}\right)^2. \label{eq: bound on fidelity}
\end{align}

Expanding the definition of the error of dilution, \(p_\mathrm{err}\), for the family $\{\sigma_{AB}^\lambda\}_\lambda$, using LOCC channels $\{\hat\Gamma^\lambda\}_\lambda$, and plugging Equation~\eqref{eq: bound on fidelity} back into~\eqref{eq:err-cont-1} gives
\begin{align}\begin{split}
     p_{\text{err}}(\hat\Gamma^\lambda, \sigma_{AB}^\lambda) 
     &\leq 1-\left(1-\delta(\lambda)-\sqrt{\varepsilon(\lambda)}\right)^2 \\
     &= \left(\delta(\lambda)+\sqrt{\varepsilon(\lambda)}\right)\left(2-\delta(\lambda)-\sqrt{\varepsilon(\lambda)}\right),
\end{split}\end{align}
which establishes the claim of the lemma.
\end{proof}

We remark that the same derivation holds if one substitutes the definition of one-shot computational cost for its uniform counterpart, since it quantifies over all keys \(k\).
Therefore, we get the following corollary.
\begin{corollary}\label{cor: continuity cost uniform}
    Let \(\lambda\in\mathbb{N}\), $m,\kappa:\mathbb N\rightarrow\mathbb N$ be polynomially bounded functions and $n:\mathbb{N}\to\mathbb{N}$ be an arbitrary function.
    Let $\{k,\rho_{AB}^k\}_{k\in \{0,1\}^{\kappa(\lambda)}}$ and $\{k,\sigma_{AB}^k\}_{k\in \{0,1\}^{\kappa(\lambda)}}$ be two families of $m(\lambda)$-qubit bipartite states.
    If, for all \(k,\lambda\), $\frac{1}{2}\|\rho_{AB}^k - \sigma_{AB}^k\|_1\leq\delta(\lambda)$, with $\delta(\lambda)+\sqrt{\varepsilon(\lambda)}\leq 1$, and $\hat E_C^\varepsilon(\{k,\rho_{AB}^k\}_k)\leq n$, then,
    \begin{equation*}
        \hat E_C^{\varepsilon^\prime}(\{k,\sigma_{AB}^k\}_k)\leq n,
    \end{equation*}
    where $\varepsilon^\prime(\lambda):= \left(\delta(\lambda)+\sqrt{\varepsilon(\lambda)}\right)\left(2-\delta(\lambda)-\sqrt{\varepsilon(\lambda)}\right)$.
\end{corollary}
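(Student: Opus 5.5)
The plan is to repeat the argument of Lemma~\ref{lemma: continuity cost to the same epr} key by key, with a single efficient family of LOCC channels serving every key. From the assumption $\hat E_C^\varepsilon(\{k,\rho_{AB}^k\}_k)\leq n$ (Definition~\ref{def: uni comp cost}), fix the efficient family $\{\hat\Gamma^\lambda\}_\lambda$ such that, for all $\lambda$ and all $k\in\{0,1\}^{\kappa(\lambda)}$,
\[
1-\operatorname{F}\!\left(\hat\Gamma^\lambda(k,\Phi_{AB}^{\otimes n(\lambda)}),\rho_{AB}^k\right)\leq\varepsilon(\lambda).
\]
We will show that this same family witnesses the bound $\hat E_C^{\varepsilon'}(\{k,\sigma_{AB}^k\}_k)\leq n$. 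No new channel is built, so efficiency is inherited automatically.

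Fix $\lambda$ and $k$, and write $\tau^k:=\hat\Gamma^\lambda(k,\Phi_{AB}^{\otimes n(\lambda)})$.

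1. The Fuchs-van de Graaf inequality (Proposition~\ref{prop: fvdg}) gives $\frac12\|\tau^k-\rho^k_{AB}\|_1\leq\sqrt{\varepsilon(\lambda)}$.
2. The triangle inequality, together with the hypothesis $\frac12\|\rho^k_{AB}-\sigma^k_{AB}\|_1\leq\delta(\lambda)$, gives $\frac12\|\tau^k-\sigma^k_{AB}\|_1\leq\delta(\lambda)+\sqrt{\varepsilon(\lambda)}$.
3. The lower half of Fuchs-van de Graaf gives $1-\sqrt{\operatorname{F}(\tau^k,\sigma^k_{AB})}\leq\delta(\lambda)+\sqrt{\varepsilon(\lambda)}$.
4. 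Since $\delta(\lambda)+\sqrt{\varepsilon(\lambda)}\leq1$, the quantity $1-\delta(\lambda)-\sqrt{\varepsilon(\lambda)}$ is nonnegative. We may therefore square step 3 and obtain $\operatorname{F}(\tau^k,\sigma^k_{AB})\geq(1-\delta(\lambda)-\sqrt{\varepsilon(\lambda)})^2$.
5. Hence $p_{\mathrm{err}}(\hat\Gamma^\lambda,k,\sigma^k_{AB})\leq 1-(1-\delta-\sqrt{\varepsilon})^2=(\delta+\sqrt\varepsilon)(2-\delta-\sqrt\varepsilon)=\varepsilon'(\lambda)$.

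The key point is uniformity. Both $\varepsilon$ and $\delta$ are functions of $\lambda$ alone, and they bound the corresponding quantities for every $k$. So the error bound $\varepsilon'(\lambda)$ is also independent of $k$, and it holds simultaneously for all keys with the single channel family $\{\hat\Gamma^\lambda\}_\lambda$. This is exactly the requirement in Definition~\ref{def: uni comp cost}.

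There is no real obstacle here. The only points needing care are the two just mentioned: checking the sign condition before squaring in step 4, and keeping the bound uniform in $k$ so that one channel family works for every key.
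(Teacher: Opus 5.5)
Your proposal is correct and is essentially the paper's own argument. The paper obtains this corollary by noting that the derivation of Lemma~\ref{lemma: continuity cost to the same epr} (Fuchs--van de Graaf, the triangle inequality, Fuchs--van de Graaf again, then squaring using $\delta(\lambda)+\sqrt{\varepsilon(\lambda)}\leq 1$) goes through key by key with the same efficient channel family, because the uniform definition quantifies over all $k$ with $k$-independent bounds $\varepsilon(\lambda)$ and $\delta(\lambda)$; your explicit sign check before squaring and your $k$-independence check for $\varepsilon'$ simply spell out what the paper leaves implicit.
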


With this continuity relation of $\hat E_C^\varepsilon$, we can now show the result that establishes the connection between the trace distance of two families of states and their gap in computational entanglement. Namely, we prove a statement relating the small trace distance between states in two families to a gap in one-shot computational entanglement cost and one-shot computational distillable entanglement of the corresponding families of states. 

\begin{lemma}[TD and entanglement gap]\label{lemma: connection of td and cd}
    Let \(\lambda \in\mathbb{N}\), $\varepsilon:\mathbb N\to [0,1]$ be a vanishing at infinity function, $c,d:\mathbb N\to \mathbb N$ be arbitrary functions, and \(n:\mathbb N\to\mathbb N\) be a polynomially bounded function.
     Let $\{\rho^\lambda_0\}_{\lambda}$ and $\{\rho_1^\lambda\}_{\lambda}$ be two families of $n(\lambda)$-qubit bipartite states, such that $\hat E_C^\varepsilon(\{\rho^\lambda_0\}_\lambda)\leq c$ and $\hat E_D^\varepsilon(\{\rho_1^\lambda\}_\lambda)\geq d$. Assume that $\frac{1}{2}\|\rho_1^\lambda-\rho_0^\lambda\|_1\leq\delta(\lambda)$ for all \(\lambda\), and $\lim_{\lambda\to\infty} \delta(\lambda)=  0$.
     Then,
    \begin{equation*}
        \exists \lambda_\star \in\mathbb N \quad\text{such that}\quad\forall\lambda\geq\lambda_\star,\quad c(\lambda)\geq d(\lambda).
    \end{equation*}
\end{lemma}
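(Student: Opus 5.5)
The plan is to sandwich a single information-theoretic quantity for $\rho_1^\lambda$ between $d$ and $c$. Throughout we work with $\lambda$ large enough that all errors are small.

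\textbf{Step 1: transfer the cost bound from $\rho_0$ to $\rho_1$.} Since $\varepsilon$ and $\delta$ both vanish at infinity, we have $\delta(\lambda)+\sqrt{\varepsilon(\lambda)}\leq 1$ for $\lambda$ beyond some threshold. On that range, Lemma~\ref{lemma: continuity cost to the same epr} applies to $\hat E_C^\varepsilon(\{\rho_0^\lambda\}_\lambda)\leq c$ and gives
\[
\hat E_C^{\varepsilon'}(\{\rho_1^\lambda\}_\lambda)\leq c,\qquad \varepsilon'(\lambda)=\left(\delta(\lambda)+\sqrt{\varepsilon(\lambda)}\right)\left(2-\delta(\lambda)-\sqrt{\varepsilon(\lambda)}\right).
\]
This $\varepsilon'$ is also vanishing at infinity. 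The definition of $\hat E_C$ is stated for all $\lambda$, so for small $\lambda$ we can simply redefine the error there, or restrict attention to the tail. Either way, only large $\lambda$ matters for the conclusion.

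\textbf{Step 2: drop computational efficiency on both sides.} An efficient LOCC family is in particular an LOCC channel at each $\lambda$. So $\hat E_C^{\varepsilon'}(\{\rho_1^\lambda\})\leq c$ implies $E_C^{\varepsilon'(\lambda)}(\rho_1^\lambda)\leq c(\lambda)$, and $\hat E_D^{\varepsilon}(\{\rho_1^\lambda\})\geq d$ implies $E_D^{\varepsilon(\lambda)}(\rho_1^\lambda)\geq d(\lambda)$. Both hold pointwise in $\lambda$ and follow from Definitions~\ref{def:Ed} and~\ref{def:Ec}, since the supremum and infimum range over larger sets.

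\textbf{Step 3: compare distillation and cost, where the main care is needed.} We apply Proposition~\ref{Th: ent distil and cost} with $\varepsilon_1=\varepsilon(\lambda)$ and $\varepsilon_2=\varepsilon'(\lambda)$. Remark~\ref{remark: relation for eps negl} is phrased for equal errors, but the argument is identical here. Set $\varepsilon^*=(\sqrt{\varepsilon}+\sqrt{\varepsilon'})^2$. Because both errors vanish, there is $\lambda_\star$ beyond which $\varepsilon^*<1/2$, so the correction $\log_2\frac{1}{1-\varepsilon^*}$ lies strictly between $0$ and $1$. Then
\[
d(\lambda)\leq E_D^{\varepsilon}(\rho_1^\lambda)\leq E_C^{\varepsilon'}(\rho_1^\lambda)+\log_2\frac{1}{1-\varepsilon^*}< c(\lambda)+1.
\]
Since $d$ and $c$ are integer-valued, this gives $d(\lambda)\leq c(\lambda)$ for all $\lambda\geq\lambda_\star$.

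Two technical points need checking. First, $E_D$ and $E_C$ are integer-valued, or at least the relevant witnesses $d(\lambda)$ and $c(\lambda)$ are integers, so the floor argument is valid; the supremum defining $E_D$ is attained over $\mathbb N$ and bounded because the input is finite-dimensional. Second, $\lambda_\star$ must be taken as the maximum of the threshold from Step 1 and the threshold making $\varepsilon^*<1/2$.
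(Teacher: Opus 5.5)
Your proof is correct and follows the paper's route. You transfer the cost bound to $\rho_1$ via Lemma~\ref{lemma: continuity cost to the same epr}, drop efficiency on both sides, and compare $E_D$ with $E_C$ through Proposition~\ref{Th: ent distil and cost} plus integrality of $c,d$. The paper instead first weakens the distillation bound to the larger error $\varepsilon'\geq\varepsilon$ so that both errors match, whereas you apply the proposition with mismatched errors directly, and you also handle the hypothesis $\delta+\sqrt{\varepsilon}\leq 1$ (which only holds eventually) that the paper leaves implicit.
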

\begin{proof}
    To establish the claim of this lemma, we first relate the assumption on the computational one-shot distillable entanglement of the family $\{\rho_1^\lambda\}_\lambda$ to the requirement on the one-shot computational entanglement cost of the same family. 
    We then utilize the continuity statement for one-shot computational cost, as proved in Lemma~\ref{lemma: continuity cost to the same epr}, to obtain the desired result. 

    First, note, with a slight abuse of notation, that for any family of bipartite mixed states $\{\rho_{AB}^\lambda\}_\lambda$,
    \begin{equation}\label{eq:comp inequalities}
        \hat{E}^\varepsilon_C(\{\rho_{AB}^\lambda\}_\lambda)\geq E_C^\varepsilon(\{\rho_{AB}^\lambda\}_\lambda)\geq E^\varepsilon_D(\{\rho_{AB}^\lambda\}_\lambda)\geq\hat E^\varepsilon_D(\{\rho_{AB}^\lambda\}_\lambda),
    \end{equation}
    where we define the function \(E^\varepsilon_{D(C)}(\{\rho_{AB}^\lambda\}_\lambda) := E^\varepsilon_{D(C)}(\rho_{AB}^\lambda)\) pointwise for all \(\lambda\). 
    We prove every inequality in the chain, starting from the rightmost one.
    The first inequality, $E^\varepsilon_D(\{\rho_{AB}^\lambda\})\geq\hat E^\varepsilon_D(\{\rho_{AB}^\lambda\})$, holds due to a trivial inclusion of the class of efficient LOCC channels in the class of general LOCC channels. Therefore, for any valid lower bound $d$ on $\hat E^\varepsilon_D(\{\rho_{AB}^\lambda\})$ (Definition~\ref{def: comp dist}), for all values of $\lambda$, it holds that $E^\varepsilon_D(\rho_{AB}^\lambda)\geq d(\lambda)$~\cite{ABV23}. 
    The second inequality, $E_C^\varepsilon(\{\rho_{AB}^\lambda\})\geq E^\varepsilon_D(\{\rho_{AB}^\lambda\})$, holds for large enough $\lambda$ as a result of Proposition~\ref{Th: ent distil and cost} combined with the assumption of $\varepsilon(\lambda)$ being a vanishing at infinity function (see Remark~\ref{remark: relation for eps negl}). 
    The third inequality again follows from the inclusion of efficient LOCC channels in general LOCC channels, hence,  for any valid upper bound $c$ on $\hat E_C^\varepsilon(\{\rho_{AB}^\lambda\}_\lambda)$ (Definition~\ref{def: comp cost}), for all values of $\lambda$, $c(\lambda)\geq E^\varepsilon_C(\rho^\lambda_{AB})$~\cite{ABV23}.

    We now use the error-continuity of \(\hat E_C\) shown in  Lemma~\ref{lemma: continuity cost to the same epr}.
    By assumption, the trace distance between the states of the families $\{\rho^\lambda_0\}_\lambda$ and $\{\rho_1^\lambda\}_\lambda$ is upper-bounded as \(\frac{1}{2}\|\rho_0^\lambda-\rho_1^{\lambda}\|_1\leq\delta(\lambda)\) for  $\lambda\in\mathbb N$.
    Therefore, if $\hat{E}_C^\varepsilon(\{\rho_0^\lambda\}_\lambda)\leq c$, then
    \begin{equation}\label{eq:Ec 1}
         \hat{E}_C^{\varepsilon^\prime}(\{\rho_1^\lambda\}_\lambda)\leq c,
    \end{equation}
    where $\varepsilon^\prime (\lambda) = \left(\delta(\lambda)+\sqrt{\varepsilon(\lambda)}\right)\left(2-\sqrt{\varepsilon(\lambda)}-\delta(\lambda)\right)$.

    Finally, recall that, by assumption, $\hat E_D^\varepsilon(\{\rho_1^\lambda\}_\lambda)\geq d$. 
    Also, the assumption of $\varepsilon$ being vanishing at infinity implies that $\varepsilon^\prime(\lambda)$ is greater than $\varepsilon(\lambda)$ for large enough $\lambda$. Therefore, the valid lower bound $\hat E^{\varepsilon^\prime}_D(\{\rho_1^\lambda\})\geq d$ must hold. Then, from the chain of inequalities in Equation~\eqref{eq:comp inequalities},
    \begin{equation}\label{eq:Ec 2}
        \exists\lambda_\star\in\mathbb N\quad\text{such that}\quad\forall\lambda\geq\lambda_\star\quad c(\lambda)\geq d(\lambda),
    \end{equation}
    where $c$ is the valid upper bound on $\hat{E}_C^{\varepsilon^\prime}(\{\rho_1^\lambda\}_\lambda)$ in Equation~\eqref{eq:Ec 1}. This concludes the proof.
\end{proof}

Furthermore, the same result holds if the entanglement of the family \(\{\rho_1^\lambda\}_\lambda\) were quantified by the information-theoretic distillable entanglement instead of its computational version (motivated by the inefficiently-distillable pseudoentanglement --- Definition~\ref{def: pseudo-entanglement GE}).
This is stated in Corollary~\ref{cor: connection of td and cd_in}, below.
\begin{corollary}\label{cor: connection of td and cd_in}
      Let \(\lambda \in\mathbb{N}\), $\varepsilon:\mathbb N\to [0,1]$ be a vanishing at infinity function, $c,d:\mathbb N\to \mathbb N$ be arbitrary functions, and \(n:\mathbb N\to\mathbb N\) be a polynomially bounded function.
     Let $\{\rho^\lambda_0\}_{\lambda}$ and $\{\rho_1^\lambda\}_{\lambda}$ be two families of $n(\lambda)$-qubit bipartite states, such that $\hat E_C^\varepsilon(\{\rho^\lambda_0\}_\lambda)\leq c$ and $E_D^\varepsilon(\rho_1^\lambda)\geq d(\lambda)$ for all $\lambda$. Assume that $\frac{1}{2}\|\rho_1^\lambda-\rho_0^\lambda\|_1\leq\delta(\lambda)$ for all \(\lambda\) and $\lim_{\lambda\to\infty} \delta(\lambda)=  0$.
     Then,
    \begin{equation*}
        \exists \lambda_\star\in\mathbb N \quad\text{such that}\quad\forall\lambda\geq\lambda_\star,\quad c(\lambda)\geq d(\lambda).
    \end{equation*}
\end{corollary}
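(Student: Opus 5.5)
The plan is to reuse the proof of Lemma~\ref{lemma: connection of td and cd} almost verbatim, noting that the only place where the computational distillable entanglement of $\{\rho_1^\lambda\}_\lambda$ entered was through the rightmost inequality of the chain~\eqref{eq:comp inequalities}, i.e.\ to pass from $\hat E_D^\varepsilon$ to $E_D^\varepsilon$. Since here we are already given $E_D^\varepsilon(\rho_1^\lambda)\geq d(\lambda)$ for all $\lambda$, that step is simply skipped. We then only need the inequalities $\hat E_C^{\varepsilon'}\geq E_C^{\varepsilon'}\geq E_D^{\varepsilon'}$.

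First, apply the error-continuity Lemma~\ref{lemma: continuity cost to the same epr} to $\{\rho_0^\lambda\}_\lambda$ and $\{\rho_1^\lambda\}_\lambda$. Because $\delta$ and $\varepsilon$ vanish at infinity, the condition $\delta(\lambda)+\sqrt{\varepsilon(\lambda)}\leq 1$ holds for large $\lambda$. For the finitely many small $\lambda$ the conclusion is not needed, so one may restrict to those $\lambda$ or pad the error function there. This yields an efficient family $\{\hat\Gamma^\lambda\}_\lambda$ with $1-\operatorname{F}(\hat\Gamma^\lambda(\Phi^{\otimes c(\lambda)}),\rho_1^\lambda)\leq\varepsilon'(\lambda)$, where $\varepsilon'=(\delta+\sqrt{\varepsilon})(2-\delta-\sqrt{\varepsilon})$ also vanishes at infinity. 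Since efficient LOCC channels are in particular LOCC channels, Definition~\ref{def:Ec} gives $E_C^{\varepsilon'}(\rho_1^\lambda)\leq c(\lambda)$ pointwise.

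Second, combine this with the distillation bound via Proposition~\ref{Th: ent distil and cost}, taking $\varepsilon_1=\varepsilon$ and $\varepsilon_2=\varepsilon'$. We get $d(\lambda)\leq E_D^{\varepsilon}(\rho_1^\lambda)\leq E_C^{\varepsilon'}(\rho_1^\lambda)+\log_2\frac{1}{1-(\sqrt{\varepsilon}+\sqrt{\varepsilon'})^2}$. Both errors vanish, so for $\lambda$ beyond some $\lambda_\star$ we have $(\sqrt\varepsilon+\sqrt{\varepsilon'})^2<1/2$, and the logarithmic term is strictly less than $1$. Since $d$, $c$ and $E_C^{\varepsilon'}$ are integer valued, as in Remark~\ref{remark: relation for eps negl}, it follows that $d(\lambda)\leq E_C^{\varepsilon'}(\rho_1^\lambda)\leq c(\lambda)$.

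The step needing the most care is this second one. Remark~\ref{remark: relation for eps negl} is stated for equal errors, but here the errors are different: $\varepsilon$ for distillation and $\varepsilon'$ for dilution. I would therefore invoke Proposition~\ref{Th: ent distil and cost} directly with the two distinct errors, rather than arguing monotonicity of $E_D$ in $\varepsilon$ as the lemma's proof does. It also remains to check that the supremum in Definition~\ref{def:Ed} is attained as an integer bound. This holds because $d(\lambda)$ is itself a feasible integer value.
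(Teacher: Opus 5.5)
Your proof is correct and is essentially the paper's argument: the paper reruns the proof of Lemma~\ref{lemma: connection of td and cd} starting from the inequalities $\hat E_C^{\varepsilon'}\geq E_C^{\varepsilon'}\geq E_D^{\varepsilon'}$. The only cosmetic difference is how the two errors are reconciled: you apply Proposition~\ref{Th: ent distil and cost} with $(\varepsilon_1,\varepsilon_2)=(\varepsilon,\varepsilon')$, whereas the paper uses $\varepsilon'\geq\varepsilon$ together with monotonicity of $E_D$ in the error; also, your closing worry about attainment of the supremum is unnecessary, since $d(\lambda)\leq E_D^{\varepsilon}(\rho_1^\lambda)$ is exactly the hypothesis.
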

\begin{proof}
    The proof is the same as in Lemma~\ref{lemma: connection of td and cd}, but starting from the second-to-last inequality of Equation~\eqref{eq:comp inequalities} (ignoring the rightmost one).
\end{proof}

\section{Fully-Computational Pseudoentanglement is Sufficient for EFI Pairs}\label{sec: pseudo abv to efi}
Here, we prove the first of our two main theorems, which states that the existence of fully-computational pseudoentanglement~\cite{ABV23} (Definition~\ref{def: pseudo-entanglement ABV}) implies the existence of EFI pairs (Definition~\ref{def: EFI pair}).
We do this by giving an explicit construction of an efficient algorithm that prepares an EFI pair, based on the two families in the pseudoentanglement structure, \(\{k,\psi_{AB}^k\}_k\) and \(\{k,\phi_{AB}^k\}_k\), assuming that the states in both families are preparable in polynomial time.

This result, stated in Theorem~\ref{thm:pe_to_efi}, draws its technical complexity mostly from two preliminary results.
First, Lemma~\ref{lemma: connection of td and cd}, which connects the entanglement gap in the pseudoentanglement structure with the trace distance between the states in the families.
Second, Lemma~\ref{lemma: td of mixtures} which relates the trace distance between the different quantum states in two families with the trace distance between the averages over the families. 
Before demonstrating Lemma~\ref{lemma: td of mixtures}, we state the auxiliary Lemma~\ref{lemma: lemma 4}, which will be used in its proof.
\begin{lemma}[{\cite[Lemma 4]{BK02}}]\label{lemma: lemma 4}
    Let $M$ be a positive semidefinite matrix of the form
\begin{equation*}
    M = \begin{pmatrix} a & b^\dagger \\ b & c \end{pmatrix},
\end{equation*}
where $a, b, c$ are matrices.
Write the block decomposition of $\sqrt M$ as
\begin{equation*}
    \sqrt{M} = \begin{pmatrix} x & y^\dagger \\ y & z\end{pmatrix}.
\end{equation*}
Then, $\|y\|_2^2 \leqslant\|b\|_1$.
\end{lemma}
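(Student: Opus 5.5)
We need to prove $\|y\|_2^2\le\|b\|_1$ for the off-diagonal block $y$ of $\sqrt M$, where $M\ge 0$ has off-diagonal block $b$.

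\emph{Step 1: express $b$ through the blocks of $\sqrt M$.} Squaring the block form of $\sqrt M$ gives
\begin{equation*}
b = yx + zy .
\end{equation*}
Since $\sqrt M$ is positive semidefinite, its diagonal blocks $x$ and $z$ are positive semidefinite. The goal is then to lower-bound $\|yx+zy\|_1$ by $\operatorname{Tr}(y^\dagger y)=\|y\|_2^2$.

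\emph{Step 2: choose a suitable unitary.} The natural route is the variational characterization $\|b\|_1\ge|\operatorname{Tr}(Ub)|$ for any unitary (or contraction) $U$ of the appropriate shape. Take the polar decomposition $y=V|y|$, and test against $U=V^\dagger$, extended to a partial isometry if the blocks are rectangular. Then
\begin{equation*}
\operatorname{Tr}(V^\dagger yx)=\operatorname{Tr}(|y|x), \qquad \operatorname{Tr}(V^\dagger zy)=\operatorname{Tr}(V^\dagger zV|y|) .
\end{equation*}
Both terms are traces of products of positive semidefinite operators, since $V^\dagger zV\ge0$, so each is nonnegative. This yields
\begin{equation*}
\|b\|_1\ \ge\ \operatorname{Tr}(|y|x)+\operatorname{Tr}(|y|\,V^\dagger zV).
\end{equation*}
By itself this does not produce $\operatorname{Tr}|y|^2$, so an additional input is needed.

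\emph{Step 3: use positivity of $\sqrt M$.} Positivity of $\sqrt M$ implies $\begin{pmatrix}x & y^\dagger\\ y & z\end{pmatrix}\ge0$, hence $y=z^{1/2}Kx^{1/2}$ for some contraction $K$. Define
\begin{equation*}
\tilde z := V^\dagger zV ,
\end{equation*}
which is positive semidefinite on the support of $|y|$, and note that the compressed matrix $\begin{pmatrix}x&|y|\\|y|&\tilde z\end{pmatrix}$ is positive semidefinite. Testing this matrix against the vectors $(v,-v)$ gives
\begin{equation*}
x+\tilde z\ \ge\ 2|y| \quad\text{(in quadratic form, on the diagonal).}
\end{equation*}
Therefore
\begin{equation*}
\operatorname{Tr}\bigl(|y|(x+\tilde z)\bigr)\ \ge\ \operatorname{Tr}\bigl(|y|^{1/2}\,2|y|\,|y|^{1/2}\bigr)\ =\ 2\|y\|_2^2\ \ge\ \|y\|_2^2 .
\end{equation*}
This uses the standard fact that $A\ge B$ implies $\operatorname{Tr}(PA)\ge\operatorname{Tr}(PB)$ for $P\ge0$.

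\emph{Main obstacle.} The delicate point is justifying that the compressed block matrix is positive semidefinite when $x$ and $z$ live on spaces of different dimensions. This requires conjugating $\sqrt M$ by $\mathrm{diag}(\mathbb I,V^\dagger)$, which is a contraction, so positivity is preserved. Bookkeeping is also needed for rectangular $y$, where one extends $V$ to a partial isometry. Once these two points are handled, the inequality $\|y\|_2^2\le\|b\|_1$ follows, in fact with a spare factor of $2$.
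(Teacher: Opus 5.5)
Your proof is correct. The paper gives no proof of this statement; it imports it as a black box from \cite{BK02}. So your argument supplies a self-contained justification rather than an alternative to one in the text.

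The chain $\|b\|_1\ge\operatorname{Tr}(V^\dagger b)=\operatorname{Tr}\bigl(|y|(x+V^\dagger zV)\bigr)\ge 2\operatorname{Tr}|y|^2$ is sound:
\begin{itemize}
\item The first step is H\"older's inequality with $\|V\|_\infty\le 1$.
\item The middle equality uses $b=yx+zy$ together with $V^\dagger y=|y|$.
\item The last step uses $D\sqrt M D^\dagger=\begin{pmatrix}x&|y|\\|y|&V^\dagger zV\end{pmatrix}\ge 0$ for $D=\mathrm{diag}(\mathbb I,V^\dagger)$. This holds because $V^\dagger y=|y|=y^\dagger V$ for the polar partial isometry.
\end{itemize}

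What you flag as the main obstacle is not actually an obstacle.
\begin{itemize}
\item A congruence $A\mapsto DAD^\dagger$ preserves positivity for every $D$; being a contraction plays no role.
\item No extension of $V$ is needed. The canonical partial isometry of the polar decomposition already maps the space on which $x$ acts to the space on which $z$ acts, and it satisfies the two identities above.
\item All four blocks of the compressed matrix act on the same space. Testing against $(v,-v)$ therefore yields the full operator inequality $x+V^\dagger zV\ge 2|y|$, not merely a statement ``on the diagonal''.
\item $V^\dagger zV$ is positive semidefinite on the whole space, not just on the support of $|y|$.
\item The factorization $y=z^{1/2}Kx^{1/2}$ is never used and can be deleted.
\end{itemize}

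Your route also yields a sharper constant: you prove $\|y\|_2^2\le\tfrac12\|b\|_1$. The factor is tight: for $M=\tfrac12\begin{pmatrix}1&1\\1&1\end{pmatrix}$ one has $\sqrt M=M$, $\|y\|_2^2=\tfrac14$ and $\|b\|_1=\tfrac12$. In the paper this would halve the PGM error bound in the proof of Lemma~\ref{lemma: td of mixtures}. That is harmless there, since only negligibility of the bound matters.
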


\begin{lemma}[From pairwise distances to distance of averages]\label{lemma: td of mixtures}
    Let \(\lambda\in\mathbb{N}\), $\kappa,m:\mathbb N\rightarrow\mathbb N$ be polynomially bounded functions, and   
    $\{k,\psi_{AB}^k\}_{k\in\{0,1\}^{\kappa(\lambda)}}$ and $\{k,\phi_{AB}^k\}_{k\in\{0,1\}^{\kappa(\lambda)}}$ be two families of \(m(\lambda)\)-qubit bipartite quantum states and corresponding keys.
    Assume that there exists some positive $p \in\operatorname{Poly}$ such that $\frac{1}{2}\|\psi_{AB}^k-\phi_{AB}^{k^\prime}\|_1\geq{1}/{p(\lambda)}$ for all pairs of $k,k^\prime\in\{0,1\}^{\kappa(\lambda)}$. Then, there exists a positive $q\in\operatorname{Poly}$ such that there exists $\varepsilon\in\operatorname{Negl}$ for which
    \begin{equation*}
        \frac{1}{2}\left\|\mathbb E_k\Big[{\psi^k}_{AB}^{\otimes q(\lambda)}\Big] -\mathbb E_{k^\prime}\Big[{\phi^{k^\prime}}_{AB}^{\otimes q(\lambda)}\Big] \right\|_1 \geq1-\varepsilon(\lambda).
    \end{equation*}
\end{lemma}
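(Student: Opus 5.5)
The plan is to reduce the claim to a binary discrimination task and then bound the error of an explicit measurement. Let $N:=2^{\kappa(\lambda)}$. Set $\rho_0:=\mathbb E_k[{\psi^k}^{\otimes q}]$ and $\rho_1:=\mathbb E_{k'}[{\phi^{k'}}^{\otimes q}]$. By the Holevo-Helstrom bound (Proposition~\ref{prop: holevo-helstrom}) with priors $1/2$, we have $p_{\mathrm{succ}}(\rho_0,\rho_1)=\tfrac12\bigl(1+\tfrac12\|\rho_0-\rho_1\|_1\bigr)$. Hence $\tfrac12\|\rho_0-\rho_1\|_1 = 1-2P_{\mathrm{err}}^{\mathrm{opt}}$, and it suffices to exhibit some measurement whose probability of guessing the wrong family is negligible. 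The optimal error is at most that of any specific measurement.

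\textbf{Measurement.} I would use the Pretty Good Measurement on the full ensemble of $2N$ states $\{\tfrac{1}{2N},\,{\psi^k}^{\otimes q}\}_k\cup\{\tfrac{1}{2N},\,{\phi^{k'}}^{\otimes q}\}_{k'}$, and output the family label of the guessed index. Write $\sigma_i$ for the $i$-th state, $p_i=1/(2N)$, and $A_i:=\sqrt{p_i\sigma_i}$. Form the block Gram matrix $G=(A_i^\dagger A_j)_{i,j}$, which is positive semidefinite. The standard PGM analysis gives $\Pr[\text{guess } j\mid i]\,p_i=\|(\sqrt G)_{ji}\|_2^2$. Therefore the total probability of guessing the wrong family is $\|y\|_2^2$, where $y$ is the off-diagonal block of $\sqrt G$ under the family-$0$/family-$1$ split.

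\textbf{Bounding the error.} Lemma~\ref{lemma: lemma 4} applied to $G$ gives $\|y\|_2^2\le\|b\|_1$, where $b$ is the cross-family block with entries $b_{k'k}=\tfrac{1}{2N}\sqrt{{\phi^{k'}}^{\otimes q}}\sqrt{{\psi^{k}}^{\otimes q}}$. By the triangle inequality over blocks, $\|b\|_1\le\sum_{k,k'}\|b_{k'k}\|_1$. Each term is computed using multiplicativity of fidelity (Proposition~\ref{prop: mult fidelity}):
\[
\tfrac{1}{2N}\bigl\|\sqrt{{\phi^{k'}}^{\otimes q}}\sqrt{{\psi^{k}}^{\otimes q}}\bigr\|_1=\tfrac{1}{2N}\operatorname{F}(\phi^{k'},\psi^k)^{q/2}.
\]
Fuchs-van de Graaf (Proposition~\ref{prop: fvdg}) together with the hypothesis gives $\operatorname{F}(\phi^{k'},\psi^k)\le 1-\operatorname{TD}^2\le 1-1/p(\lambda)^2$. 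Summing over the $N^2$ pairs yields
\[
P_{\mathrm{err}}\le \tfrac{N}{2}\bigl(1-1/p^2\bigr)^{q/2}\le 2^{\kappa(\lambda)-1}e^{-q/(2p^2)}.
\]

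\textbf{Choice of $q$.} Take $q(\lambda):=2p(\lambda)^2(\kappa(\lambda)+\lambda)$, which is a polynomial. Then $P_{\mathrm{err}}\le e^{-\lambda}$, which is negligible. Consequently $\tfrac12\|\rho_0-\rho_1\|_1\ge 1-2e^{-\lambda}=:1-\varepsilon(\lambda)$.

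\textbf{Main obstacle.} The delicate step is the PGM bookkeeping for mixed states. One has to justify the identity between the family-confusion probability and $\|y\|_2^2$ using the block Gram operator $G=A^\dagger A$ with operator-valued entries, rather than scalar overlaps of pure states. One also has to confirm that Lemma~\ref{lemma: lemma 4} applies with the blocks taken as the whole families. Once that identity is in place, the remaining estimates are routine, and the exponential decay in $q$ absorbs the $2^{\kappa}$ union factor.
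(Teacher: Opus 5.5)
Your proposal is correct and follows essentially the same route as the paper: the Holevo--Helstrom reduction, a PGM over all $2^{\kappa(\lambda)+1}$ $q$-copy states, the block Gram matrix with Lemma~\ref{lemma: lemma 4}, the triangle inequality over blocks, and multiplicativity of fidelity combined with Fuchs--van de Graaf, with the same choice $q=2p^2(\kappa+\lambda)$. Your factorization $A_i=\sqrt{p_i\sigma_i}$ in place of spectral coordinates is an inessential variation, and there is one harmless slip: both off-diagonal blocks of $\sqrt G$ contribute to the wrong-family probability, so it equals $\|y\|_2^2+\|y^\dagger\|_2^2=2\|y\|_2^2$, which gives $P_{\mathrm{err}}\le 2^{\kappa(\lambda)}e^{-q/(2p^2)}$ as in the paper and is still at most $e^{-\lambda}$ for your $q$.
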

\begin{proof}
    We divide the proof into three steps. In the first step, we reduce the claim of the lemma to the task of distinguishing states from two different families. In the second and third steps, using the Gram matrix approach inspired by~\cite{BK02,M07,M19}, we show that the probability of guessing the wrong family when using a PGM~\cite{HJ96,HW94} can be made negligible with an appropriate choice of a polynomial  $q:\mathbb N\rightarrow\mathbb N$.
    As in Definition~\ref{def: pseudo-entanglement ABV}, we denote the averages over the families for each \(\lambda\) as $\overline{\psi^{\otimes q(\lambda)}_{{A}{B}}}^\lambda := \mathbb E_k\left[{\psi_{AB}^k}{}^{\otimes q(\lambda)}\right]$ and $\overline{\phi^{\otimes q(\lambda)}_{{A}{B}}}^\lambda:= \mathbb E_k\left[{\phi_{AB}^k}{}^{\otimes q(\lambda)}\right]$.
    Additionally, without loss of generality, we associate the bit-string keys with natural numbers, i.e.,
        \(K^\lambda = \{0,1\}^{\kappa(\lambda)} \equiv \{1,2,\dots, 2^{\kappa(\lambda)}\} = [2^{\kappa(\lambda)}]\).

    \medskip
    {\textit{TD to distinguishing:}} We start by mapping the consequent of the implication to an equivalent statement, from the distance between \(\overline{\psi^{\otimes q(\lambda)}_{{A}{B}}}^\lambda\) and \(\overline{\phi^{\otimes q(\lambda)}_{{A}{B}}}^\lambda\) to distinguishing the states.
    Consider the Holevo-Helstrom bound (Proposition~\ref{prop: holevo-helstrom}) 
    \begin{equation}
        p_{\mathrm{succ}}\left(\overline{\psi^{\otimes q(\lambda)}_{{A}{B}}}^\lambda , \overline{\phi^{\otimes q(\lambda)}_{{A}{B}}}^\lambda \right) = \frac{1}{2}\left(1+\frac{1}{2}\left\|\overline{\psi^{\otimes q(\lambda)}_{{A}{B}}}^\lambda -\overline{\phi^{\otimes q(\lambda)}_{{A}{B}}}^\lambda \right\|_1\right),
    \end{equation}
    where $p_{\mathrm{succ}}(\rho,\sigma)$ denotes the success probability in distinguishing $\rho$ and $\sigma$
    using a binary POVM $\{\Lambda, \mathbb I - \Lambda\}$ on a source emitting the two states with equal prior probabilities (\(p=1/2\)). Rearranging the terms yields
    \begin{equation}\label{eq: direct proof. trace distance HH}
        \frac{1}{2}\left\|\overline{\psi^{\otimes q(\lambda)}_{{A}{B}}}^\lambda  - \overline{\phi^{\otimes q(\lambda)}_{{A}{B}}}^\lambda \right\|_1 = 2\left(p_{\mathrm{succ}}\left(\overline{\psi^{\otimes q(\lambda)}_{{A}{B}}}^\lambda , \overline{\phi^{\otimes q(\lambda)}_{{A}{B}}}^\lambda  \right)-\frac{1}{2}\right).
    \end{equation}
    Then, let $\tilde p_{\mathrm{succ}}$ be some suboptimal success probability of distinguishing the states with an associated error probability 
    \begin{equation}
        \tilde p_{\mathrm{err}}\left(\overline{\psi^{\otimes q(\lambda)}_{{A}{B}}}^\lambda ,\overline{\phi^{\otimes q(\lambda)}_{{A}{B}}}^\lambda \right):=1-\tilde p_{\mathrm{succ}}\left(\overline{\psi^{\otimes q(\lambda)}_{{A}{B}}}^\lambda ,\overline{\phi^{\otimes q(\lambda)}_{{A}{B}}}^\lambda \right).
    \end{equation}
    Since we have that $p_{\mathrm{succ}}\left(\overline{\psi^{\otimes q(\lambda)}_{{A}{B}}}^\lambda ,\overline{\phi^{\otimes q(\lambda)}_{{A}{B}}}^\lambda \right) \geq \tilde p_{\mathrm{succ}}\left(\overline{\psi^{\otimes q(\lambda)}_{{A}{B}}}^\lambda ,\overline{\phi^{\otimes q(\lambda)}_{{A}{B}}}^\lambda \right)$, Equation~\eqref{eq: direct proof. trace distance HH} becomes
    \begin{equation}\label{eq: direct proof. lower bound on td HH}
         \frac{1}{2}\left\|\overline{\psi^{\otimes q(\lambda)}_{{A}{B}}}^\lambda  - \overline{\phi^{\otimes q(\lambda)}_{{A}{B}}}^\lambda \right\|_1\geq 2\left(\frac{1}{2}-\tilde p_\mathrm{err}\left(\overline{\psi^{\otimes q(\lambda)}_{{A}{B}}}^\lambda ,\overline{\phi^{\otimes q(\lambda)}_{{A}{B}}}^\lambda \right)\right).
    \end{equation}
    Hence, to prove the lemma, given $\overline{\psi^{\otimes q(\lambda)}_{{A}{B}}}^\lambda $ and $\overline{\phi^{\otimes q(\lambda)}_{{A}{B}}}^\lambda $, it is sufficient to provide an upper bound on the error of distinguishing the states.
    In turn, this translates to a new experiment in which a state is sampled at random from either the family $\{{\psi^{k\, \otimes q(\lambda)}}\}_{k\in K^\lambda}$ or $\{{\phi^{k\,\otimes q(\lambda)}}\}_{k\in K^\lambda}$, and asking which family the chosen state was sampled from.

    \medskip
    {\textit{Bound \(p_\mathrm{err}\) on single copies:}} Consider the problem of discriminating families of states consisting of only a single copy of the initial states from $\{{\psi_{AB}^k}\}_{k\in K^\lambda}$ and $\{{\phi_{AB}^k}\}_{k\in K^\lambda}$.
    Let $\Sigma^\lambda$ denote the joint family of all states in both families
     \begin{equation}
             \Sigma^\lambda := \{\sigma_{AB}^i\}_{i\in [2^{\kappa(\lambda)+1}]} := \{\psi_{AB}^k\}_{k\in K^\lambda} \cup \{\phi_{AB}^k\}_{k\in K^\lambda}.
     \end{equation}
     We now employ the PGM as the distinguisher between all states in $\Sigma^\lambda$. 
     For completeness, we  recall the construction of the PGM, which we denote as $M^\lambda$. 
     For each $\lambda\in\mathbb N$, $M^\lambda:=\{\mu^i\}_{i=1}^{2^{\kappa(\lambda)+1}}$, where 
     \begin{equation}
         \mu^i := S^{-1/2} \sigma^i_{AB} S^{-1/2},
     \end{equation}
     for \(i\in[2^{\kappa(\lambda)+1}]\), with $S := \sum_{i=1}^{2^{\kappa(\lambda)+1}}\sigma_{AB}^i$. 
     Here, $S^{-1/2}$ denotes the pseudo-inverse of the square root of $S$.
     Notice that since all $\sigma_{AB}^i$ are quantum states, they are positive semidefinite operators, hence, $S$ is positive semidefinite as well. We note that in general the sum of all effect operators $\mu^i$, $i\in[2^{\kappa(\lambda)+1}]$, does not equal the identity operator on the corresponding Hilbert space, since $S$ may not have the full support. However, without loss of generality, in all of the following we may assume that $\sum_i\mu^i=\mathbb I$ for all $\lambda$ as one can always complete the PGM to a measurement. For example, one can modify the PGM as follows: $\tilde\mu^1:=\mu^1+(\mathbb I-\Pi_{\operatorname{supp}S})$ and $\tilde\mu^i:=\mu^i$ for all $ i\neq1$, where $\Pi_{\operatorname{supp}S}$ is the projector on the support of $S$. Completing the PGM in this way does not affect any of the following success or error probabilities, since all states $\sigma^i_{AB}$ act on the support of $S$. 
     For a detailed discussion on the properties of the PGM, we refer to~\cite{BK02}.

   We now define the probability of error of the PGM for distinguishing between two families of states $\{{\psi_{A B}^k}\}_{k\in K^\lambda}$ and $\{{\phi_{AB}^k}\}_{k\in K^\lambda}$.
     Let $J^\lambda_b$ (\(b\in\{0,1\}\)) be the set of indices in \([2^{\kappa(\lambda)+1}]\) corresponding to the states in $\{{\psi_{AB}^k}\}_{k\in K^\lambda}$ and $\{{\phi_{AB}^k}\}_{k\in K^\lambda}$, i.e.,
     \begin{align}
        J^\lambda_0 &:=\left\{j\in[2^{\kappa(\lambda)+1}]:\sigma^j\in \{{\psi_{AB}^k}\}_{k\in K^\lambda}\right\} \subset [2^{\kappa(\lambda) + 1}],\\
        J^\lambda_1 &:=\left\{j\in[2^{\kappa(\lambda)+1}]:\sigma^j\in \{{\phi_{AB}^k}\}_{k\in K^\lambda}\right\} \subset [2^{\kappa(\lambda) + 1}].        
    \end{align}
    Then, let the probability of error for detecting states in the wrong family using the PGM $M^\lambda$ be defined as
    \begin{equation}\label{eq: direct proof. p_err}
         p^{(b)}_\mathrm{err(PGM)}(M^\lambda):=  \frac{1}{|J^\lambda_b|}\sum_{j\in J^\lambda_b}\sum_{i\notin J^\lambda_b}\operatorname{Tr}(\mu^i\sigma^j_{AB}),
    \end{equation}
   for \(\mu^i\in M^\lambda\) and $b\in\{0,1\}$.

   To upper-bound $p^{(b)}_\mathrm{err(PGM)}(M^\lambda)$, we use the Gram matrix approach similar to~\cite{BK02,M07,M19}. 
   First, consider the spectral decomposition of each state $\sigma_{AB}^i = \sum_j \lambda^{i}_j\ketbra{\chi^{i}_j}{\chi^{i}_j}$. 
   Second, for all pairs $i,j\in[2^{\kappa(\lambda)+1}]$ define a matrix $G^{(ij)}$  as
   \begin{equation}
       G^{(ij)}:=\sum_{k=1}^{\operatorname{rank}(\sigma^i_{AB})}\sum_{\ell=1}^{\operatorname{rank}(\sigma^j_{AB})}\sqrt{\lambda^i_{k}}\sqrt{\lambda^j_{l}} \braket{ \chi^i_{k}}{ \chi^j_{\ell}} \ketbra{k}{\ell},
   \end{equation}
   where $|k\rangle,|\ell\rangle$ are the elements of the canonical basis. 
   Notice that, for all indices $i,j\in [2^{\kappa(\lambda)+1}]$,  ${G^{(ji)}}^\dagger = G^{(ij)}$ since
   \begin{equation}\label{eq: direct proof. hermitatian of a block}
       G^{(ij)} = \sum_{k,\ell}\sqrt{\lambda^i_{k}}\sqrt{\lambda^j_{\ell}} \braket{ \chi^i_{k}}{\chi^j_\ell} \ketbra{k}{\ell} = \sum_{k,\ell} \sqrt{\lambda^i_{k}}\sqrt{\lambda^j_{\ell}} \;\overline{\braket{\chi^j_{\ell}}{\chi^i_{k}}} \left(\ketbra{\ell}{k}\right)^\dagger = {G^{(ji)}}^\dagger. 
   \end{equation}
   The Gram matrix $G^\lambda$ is then defined as $G^\lambda := (G^{(ij)})_{i,j=1}^{2^{\kappa(\lambda)+1}}$. 
   We remark that the resulting matrix is Hermitian and positive semidefinite with square root in block form~\cite{M07,M19}. 
   The structure of $G^\lambda$ has the form
   \begin{align}\begin{split}
       G^\lambda &:= {\footnotesize \begin{pNiceArray}{cccc|cccc}
    G^{(1,1)} & \cdots & G^{(1,2^{\kappa(\lambda)})} && G^{(1,2^{\kappa(\lambda)}+1)} & \cdots & G^{(1,2^{\kappa(\lambda)+1})}\\
   G^{(2,1)}  & \cdots & G^{(2,2^{\kappa(\lambda)})} && G^{(2,2^{\kappa(\lambda)}+1)} & \cdots & G^{(2,2^{\kappa(\lambda)+1})} \\
   \vdots & \vdots  & \vdots && \vdots & \vdots & \vdots  \\ 
   G^{(2^{\kappa(\lambda),1})} & \cdots & G^{(2^{\kappa(\lambda)},2^{\kappa(\lambda)})} && G^{(2^{\kappa(\lambda)},2^{\kappa(\lambda)}+1)} & \cdots & G^{(2^{\kappa(\lambda)},2^{\kappa(\lambda)+1})}
   \\
  \hline
  G^{(2^{\kappa(\lambda)}+1,1)} & \cdots & G^{(2^{\kappa(\lambda)}+1,2^{\kappa(\lambda)})} && G^{(2^{\kappa(\lambda)}+1,2^{\kappa(\lambda)}+1}) & \cdots & G^{(2^{\kappa(\lambda)}+1,2^{\kappa(\lambda)+1})}
  \\
  \vdots  & \vdots & \vdots &&  \vdots & \vdots & \vdots  \\ 
G^{(2^{\kappa(\lambda)+1},1)} & \cdots & G^{(2^{\kappa(\lambda)+1},2^{\kappa(\lambda)})} && G^{(2^{\kappa(\lambda)+1},2^{\kappa(\lambda)}+1)}  & \cdots & G^{(2^{\kappa(\lambda)+1},2^{\kappa(\lambda)+1})}
    \end{pNiceArray}}
    \\
    &=:  {\footnotesize\begin{pNiceArray}{c|c}
        G_{00} & G_{01} \\
        \hline 
        G_{10} & G_{11} 
    \end{pNiceArray}
    = \begin{pNiceArray}{c|c}
        G_{00} & G_{10}^\dagger \\
        \hline 
        G_{10} & G_{11} 
    \end{pNiceArray}.}
    \label{eq: direct proof. daggered block decomp}
   \end{split}
   \end{align}
   
   Recall the definition of $p_{\mathrm{err(PGM)}}^{(b)}$ in Equation~\eqref{eq: direct proof. p_err}.
   It can be rewritten in terms of the Schatten 2-norm (Frobenius norm) of the corresponding blocks of $\sqrt{G}^\lambda$ using \(
       \operatorname{Tr}(\mu^i\sigma^j_{AB}) = \|\sqrt{G}^{(ij)}\|_2^2
    \),
    for all
    \(i,j\in [2^{\kappa(\lambda)+1}]\)~\cite[Section 2]{M19}:
    \begin{align}
       p_{\mathrm{err(PGM)}}^{(b)}(M^\lambda) &= \frac{1}{|J^\lambda_b|}\sum_{j\in J^\lambda_b}\sum_{i\notin J^\lambda_b}\operatorname{Tr}(\mu^i\sigma^j_{AB}) =\frac{1}{|J^\lambda_b|}\sum_{j\in J^\lambda_b}\sum_{i\notin J^\lambda_b}\|\sqrt{G}^{(ij)}\|_2^2. \label{eq: direct proof. p_err second form}
   \end{align}
    It is thus sufficient to upper-bound the sum of \(\|\sqrt{G}^{(ij)}\|_2^2\). 
    Since $G^\lambda$ can be written as a \(2\)-by-\(2\) block matrix and it is positive semidefinite, from Lemma~\ref{lemma: lemma 4},
    \begin{equation}\label{eq: direct proof. bound on blocks of G}
        \|\sqrt G_{10}\|_2^2 \leq \|G_{10}\|_1,
    \end{equation}
    where $\sqrt{G}_{10}$ is the corresponding block of $\sqrt{G}^\lambda$.
    We note that such a form exists due to $G^\lambda$ being Hermitian and positive semidefinite, which, in turn, implies that $\sqrt{G}^\lambda$ is positive semidefinite and Hermitian~\cite{KW24}. 
    For the explicit form of $\sqrt{G}^\lambda$, we refer the reader to~\cite{M19}.
    Note that similar results hold for $G_{01}$ since $G_{01} = G_{10}^\dagger$ and Hermitian conjugation does not affect Schatten \(p\)-norms~\cite{KW24}. 

   Now, consider the left-hand side of Equation~\eqref{eq: direct proof. bound on blocks of G}. 
   Since the squared Frobenius norm (Definition~\ref{def:shattenpnorm}) of an operator is the sum of the squares of its entries,
   \begin{equation} \label{eq: direct proof. sqrt G_10 bound}
       \|\sqrt G_{10}\|_2^2 = \sum_{i\in J^\lambda_1}\sum_{j\in J^\lambda_0}\|\sqrt{G}^{(ij)}\|_2^2.
   \end{equation}
   For the right-hand side of Equation~\eqref{eq: direct proof. bound on blocks of G},
   \begin{align}\begin{split}
       \left\| G_{10}  \right\|_1 = \left\|\sum_{i\in J^\lambda_1}\sum_{j\in J^\lambda_0}\ketbra{i}{j}\otimes G^{(ij)} \right\|_1 
       &\leq \sum_{i\in J^\lambda_1}\sum_{j\in J^\lambda_0}\left\|\ketbra{i}{j}\otimes G^{(ij)}\right\|_1
       \\
       & = \sum_{i\in J^\lambda_1}\sum_{j\in J^\lambda_0} \|G^{(ij)}\|_1\|\ketbra{i}{j}\|_1
       \\
       &=\sum_{i\in J^\lambda_1}\sum_{j\in J^\lambda_0} \|G^{(ij)}\|_1, \label{eq: direct proof. G_10 bound}
   \end{split}\end{align}
   where the inequality follows from the triangle inequality property of the trace norm, and the following equality results from its multiplicativity under the tensor product. 

   Recomputing Equation~\eqref{eq: direct proof. bound on blocks of G} from Equations~\eqref{eq: direct proof. G_10 bound} and~\eqref{eq: direct proof. sqrt G_10 bound} yields
   \begin{equation}\label{eq: direct proof. bound on p_err (1)}
       p_{\text{err(PGM)}}^{(1)}(M^\lambda) = \frac{1}{|J^\lambda_1|}\sum_{i\in J^\lambda_1}\sum_{j\in J^\lambda_0}\|\sqrt{G}^{(ij)}\|_2^2 \leq \frac{1}{|J^\lambda_1|}\sum_{i\in J^\lambda_1}\sum_{j\in J^\lambda_0} \|G^{(ij)}\|_1.
   \end{equation}
   Finally, for all \(i,j\in[2^{\kappa(\lambda)+1}]\), as in~\cite{M19},
   \begin{align}\begin{split}\label{eq: direct proof. 1-norm G}
       \|G^{(ij)}\|_1 =\left\|\left(\sum_k \sqrt{\lambda^i_{ k}}\left|\chi^i_{ k}\right\rangle\left\langle\chi^i_{ k}\right|\right)\left(\sum_\ell \sqrt{\lambda^j_{\ell}}\left|\chi^j_{\ell}\right\rangle\left\langle\chi^j_{\ell}\right|\right)\right\|_1&=\left\|\sqrt{\sigma_{AB}^i} \sqrt{\sigma_{AB}^j}\right\|_1 \\&=\sqrt{\operatorname{F}}(\sigma_{AB}^i,\sigma_{AB}^j),
   \end{split}\end{align}
   where $\sqrt{\operatorname{F}}(\sigma_{AB}^i,\sigma_{AB}^j)$ denotes the square root of the fidelity (Definition~\ref{def: fidelity}).
    Thus,
    \begin{equation}\label{eq: direct proof. bound on p_err. final}
       p_{\mathrm{err(PGM)}}^{(1)}(M^\lambda) \leq \frac{1}{|J^\lambda_1|}\sum_{i\in J^\lambda_1}\sum_{j\in J^\lambda_0} \sqrt{\operatorname{F}}\left(\sigma_{AB}^i,\sigma_{AB}^j\right) = \frac{1}{|J^\lambda_1|}\sum_{i=1}^{2^{\kappa(\lambda)}} \sum_{j=1}^{2^{\kappa(\lambda)}}\sqrt{\operatorname{F}}\left(\psi^i_{AB}, \phi^j_{AB}\right).
   \end{equation}
   The same may be obtained for $p_{\mathrm{err(PGM)}}^{(0)}(M^\lambda)$, replacing $G_{10}$ and $\sqrt G_{10}$ with $G_{01}$ and $\sqrt G_{01}$, respectively,
   \begin{equation}
       p_{\mathrm{err(PGM)}}^{(0)}(M^\lambda) \leq\frac{1}{|J^\lambda_0|} \sum_{i=1}^{2^{\kappa(\lambda)}} \sum_{j=1}^{2^{\kappa(\lambda)}}\sqrt{\operatorname{F}}\left(\psi^j_{AB}, \phi^i_{AB}\right).
   \end{equation}

    Hence, for a single copy, the probability of the PGM outputting the wrong answer may be bounded as
   \begin{align}\begin{split}
       p_{\mathrm{err(PGM)}}(M^\lambda) &= \frac{1}{2}\left[ p_{\mathrm{err(PGM)}}^{(0)}(M^\lambda) + p_{\mathrm{err(PGM)}}^{(1)}(M^\lambda)\right] \\
       &\leq \frac{1}{|J^\lambda_0|}\sum_{i=1}^{2^{\kappa(\lambda)}} \sum_{j=1}^{2^{\kappa(\lambda)}}\sqrt{\operatorname{F}}\left(\psi^i_{AB}, \phi^j_{AB}\right)
       \\
       &\leq2^{\kappa(\lambda)}\sqrt{\left(1-\left(\frac{1}{p(\lambda)}\right)^2\right)}\label{eq: direct proof. bound on total p_{err}},
\end{split}\end{align}
where the first inequality follows by plugging $|J_0^\lambda|=|J_1^\lambda|=2^{\kappa(\lambda)}$. The last inequality follows by the assumption of the lemma and the Fuchs-van de Graaf inequality (Proposition~\ref{prop: fvdg}), since, if for all \(k,k^\prime\in K^\lambda\) \(
    \frac{1}{2}\|\psi^k_{AB} - \phi_{AB}^{k^\prime}\|_1\geq {1}/{p(\lambda)}\), then \( \operatorname{F}(\psi^k_{AB},\phi_{AB}^{k^\prime}) \leq 1 - \left({1}/{p(\lambda)}\right)^2
\).

   \medskip
   {\textit{Bound \(p_\mathrm{err}\) and TD on polynomial copies:}} 
    Notice that, on its own, the bound in Equation~\eqref{eq: direct proof. bound on total p_{err}} is not negligible in $\lambda$.
    However, by extending the above single-copy discrimination result to  $q \in\operatorname{Poly}$ copies of the states,  $\{{\psi_{AB}^{k\, \otimes q(\lambda)}}\}_k$ and $\{{\phi_{AB}^{k\, \otimes q(\lambda)}}\}_k$, an analogous derivation yields the PGM $ N^{\lambda}$ such that
\begin{align}\begin{split}
    p_{\mathrm{err(PGM)}}( N^{\lambda}) &\leq \frac{1}{|J^\lambda_0|}\sum_{i=1}^{2^{\kappa(\lambda)}} \sum_{j=1}^{2^{\kappa(\lambda)}}\sqrt{\operatorname{F}}\left({\psi^i}^{\otimes q(\lambda)}_{AB}, {\phi^j}^{\otimes q(\lambda)}_{AB}\right)\\
    &\leq 2^{\kappa(\lambda)}\left(1-\left(\frac{1}{p(\lambda)}\right)^2\right)^{\frac{q(\lambda)}{2}}  
    \\
    & \leq 2^{\kappa(\lambda)}e^{-\frac{q(\lambda)}{2p^2(\lambda)}} \\
    &= e^{\kappa(\lambda)\operatorname{ln}2}e^{-\frac{q(\lambda)}{2p^2(\lambda)}} \\
    &= e^{\left(-\frac{q(\lambda)}{2p^2(\lambda)} + \kappa(\lambda)\operatorname{ln}2\right)},
\end{split}\end{align}
where the last inequality follows from \((1+x)^r\leq e^{rx}\) (\(x\geq-1, r\geq0\)).
Therefore, taking the polynomial $q(\lambda) = 2 p^2(\lambda)(\kappa(\lambda)+\lambda)$ is sufficient to make the resulting $p_{\mathrm{err(PGM)}}( N^\lambda)$ negligible in $\lambda$. 

\smallskip

Finally, construct a binary POVM $\Lambda^\lambda_0=\sum_{i\in J^\lambda_0}\nu^{i}$ and $\Lambda^\lambda_1 = \sum_{i\in J^\lambda_1}\nu^{i}$, where $\nu^{i}\in N^{\lambda}$ for all \(i\in[2^{\kappa(\lambda)+1}]\). 
That is, take the sum of the effect operators of the PGM $N^{\lambda}$ corresponding to the states in the families $\{{\psi_{AB}^{k\, \otimes q(\lambda)}}\}_k$ and $\{{\phi_{AB}^{k\,\otimes q(\lambda)}}\}_k$, mapping each to $\Lambda_0$ and $\Lambda_1$, respectively. 
Note that this indeed constitutes a valid measurement, since $\Lambda^\lambda_0 + \Lambda^\lambda_1 = \sum_{i=1}^{2^{\kappa(\lambda)+1}}\nu^{i} = \mathbb I^\lambda$, because the constructed PGM is a measurement.
Positivity of $\Lambda^\lambda_0$ and $\Lambda^\lambda_1$ is a consequence of the positivity of $ \nu^{i}$ for all $i\in[2^{\kappa(\lambda)+1}]$. 
Hence, plugging $p_\mathrm{err(PGM)}( N^{\lambda})$ as the suboptimal error probability considered in Equation~\eqref{eq: direct proof. lower bound on td HH} yields the desired result
\begin{equation}
    \frac{1}{2}\left\|\overline{\psi^{\otimes q(\lambda)}_{{A}{B}}}^\lambda  - \overline{\phi^{\otimes q(\lambda)}_{{A}{B}}}^\lambda \right\|_1\geq 2\left(\frac{1}{2} - \varepsilon(\lambda)\right) \geq 1 -\tilde \varepsilon(\lambda),
\end{equation}
 where $\varepsilon,\tilde \varepsilon\in\operatorname{Negl}$. This establishes the proof of the lemma.
\end{proof}

We recall that, in Theorem~\ref{thm:pe_to_efi}, and in the remainder of the paper, we assume that both families in the definitions of pseudoentanglement are efficiently preparable by a uniform QPT algorithm. 
While the definitions of pseudoentanglement for mixed states in the computational setting (Definition~\ref{def: pseudo-entanglement ABV} and~\ref{def: pseudo-entanglement GE}) do not require the reference family to be efficiently preparable, in practice all proposed constructions of pseudoentanglement have this property~\cite{ABV23,GE24,ABF24,BF24}.

\begin{theorem}[Fully-computational pseudoentanglement $\Rightarrow$ EFI pairs]\label{thm:pe_to_efi}
    Let \(\lambda\in\mathbb N\), $\varepsilon:\mathbb N\rightarrow[0,1]$ be a vanishing at infinity function.
    Let $\{k,\psi_{AB}^k\}_{k}$, $\{k,\phi_{AB}^k\}_{k}$ be the two families in the fully-computational pseudoentanglement structure as in Definition~\ref{def: pseudo-entanglement ABV}, with $c(\lambda)<d(\lambda)$. Assume there also exist a uniform QPT algorithm $\mathcal A$ taking as input \(k\) and outputting $\psi^k_{AB}$ and $\mathcal B$ taking as input \(k\) and outputting $\phi^k_{AB}$.
    Then, there exists an EFI pair as in Definition~\ref{def: EFI pair}.
\end{theorem}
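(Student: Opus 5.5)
The construction is the candidate pair from the technical overview,
\[
\rho_0^\lambda := \mathbb E_k\big[{\psi^k_{AB}}^{\otimes q(\lambda)}\big],\qquad \rho_1^\lambda := \mathbb E_k\big[{\phi^k_{AB}}^{\otimes q(\lambda)}\big],
\]
with $q\in\operatorname{Poly}$ fixed later by Lemma~\ref{lemma: td of mixtures}. We verify the three conditions of Definition~\ref{def: EFI pair} in turn.

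\emph{Efficient preparation.} On input $(1^\lambda,b)$, sample $k\leftarrow\{0,1\}^{\kappa(\lambda)}$ uniformly, run $\mathcal A(k)$ (if $b=0$) or $\mathcal B(k)$ (if $b=1$) $q(\lambda)$ times, and output the resulting registers, discarding $k$. This is uniform QPT because $\kappa$ and $q$ are polynomial.

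\emph{Computational indistinguishability.} This is exactly item (iii) of Definition~\ref{def: pseudo-entanglement ABV} instantiated with our polynomial $q$, so nothing needs to be proved.

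\emph{Statistical distance.} This is the main obstacle. The plan is to apply Lemma~\ref{lemma: td of mixtures}, whose hypothesis is a uniform pairwise bound $\frac12\|\psi^k-\phi^{k'}\|_1\geq 1/p(\lambda)$ over \emph{all} pairs $k,k'$. The intended way to obtain it is to use the entanglement gap through Lemma~\ref{lemma: connection of td and cd}, in keyed form (Corollary~\ref{cor: continuity cost uniform} supplies the uniform continuity).

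\begin{enumerate}[(a)]
\item Suppose, toward a contradiction, that for some sequence of $\lambda$ there are pairs $(k_\lambda,k'_\lambda)$ with $\frac12\|\psi^{k_\lambda}-\phi^{k'_\lambda}\|_1\to0$.
\item Build single-state families $\rho_0^\lambda:=\psi^{k_\lambda}$ and $\rho_1^\lambda:=\phi^{k'_\lambda}$, and hardwire the keys into the uniform dilution and distillation channels. The non-uniform key choice costs nothing here: the chain~\eqref{eq:comp inequalities} only needs $c\geq E_C^\varepsilon$ and $E_D^\varepsilon\geq d$ pointwise, and these already follow from the uniform bounds holding for every key.
\item With $\varepsilon$ vanishing, the argument of Lemma~\ref{lemma: connection of td and cd} yields $c(\lambda)\geq d(\lambda)$ along that sequence, contradicting $c<d$.
\end{enumerate}

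This contradiction gives only $\liminf_\lambda \min_{k,k'}\frac12\|\psi^k-\phi^{k'}\|_1>0$, i.e.\ a constant lower bound $\delta_0$ for large $\lambda$. That is stronger than $1/p(\lambda)$ with $p$ constant, so Lemma~\ref{lemma: td of mixtures} applies.

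There is one subtlety. Because $c,d$ are integers and $\varepsilon'$ depends only on $\delta$ and $\varepsilon$, I would make the step quantitative instead of arguing by subsequences. Using Proposition~\ref{Th: ent distil and cost}, an integer gap $d\geq c+1$ forces $\log_2\frac{1}{1-(\sqrt{\varepsilon}+\sqrt{\varepsilon'})^2}\geq1$. Hence $\sqrt\varepsilon+\sqrt{\varepsilon'}\geq 1/\sqrt2$, which bounds $\delta$ below by an explicit constant once $\varepsilon$ is small. This gives the uniform pairwise bound directly for all sufficiently large $\lambda$.

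With the pairwise bound in hand, Lemma~\ref{lemma: td of mixtures} supplies $q(\lambda)=2p^2(\lambda)(\kappa(\lambda)+\lambda)$ such that $\operatorname{TD}(\rho_0^\lambda,\rho_1^\lambda)\geq1-\mathrm{negl}(\lambda)$, which is certainly noticeable. This completes all three EFI conditions.
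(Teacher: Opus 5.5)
Your proof is correct. It follows the same skeleton as the paper: the same mixture construction, the same verification of preparation and indistinguishability, and for statistical distance the same chain. That chain is error-continuity of $\hat E_C^\varepsilon$ (Lemma~\ref{lemma: continuity cost to the same epr}), then Proposition~\ref{Th: ent distil and cost}, then Lemma~\ref{lemma: td of mixtures}.

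Where you genuinely differ is in how the pairwise bound is extracted.

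\textbf{The paper's route.} It argues by contrapositive over all polynomials $p$. Lemma~\ref{lemma: connection of td and cd} is stated for families with a trace-distance bound at every $\lambda$. So the paper builds stepwise families $\psi^\star,\phi^\star$, with stepwise channels and $n_A^\star,n_B^\star,c^\star,d^\star,\varepsilon^\star,p^\star$, to feed it. The result is an inverse-polynomial pairwise bound.

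\textbf{Your route.} You observe that the argument is pointwise in $\lambda$ and in the pair $(k,k')$. The uniform bounds hold for every key, so hardwiring $k$ and $k'$ gives $E_C^{\varepsilon'}(\phi^{k'})\leq c(\lambda)$ and $E_D^{\varepsilon}(\phi^{k'})\geq d(\lambda)$. Your quantitative step then gives, for every pair and every $\lambda$ with $\varepsilon(\lambda)$ small, the explicit bound $\frac12\|\psi^k-\phi^{k'}\|_1\geq 1-\sqrt{\varepsilon}-\sqrt{1-(1/\sqrt2-\sqrt{\varepsilon})^2}$, which tends to $1-1/\sqrt2$. The case $\delta+\sqrt{\varepsilon}>1$, excluded by the continuity lemma's hypothesis, is trivially fine.

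\textbf{What your route buys.} It removes the stepwise bookkeeping. It also settles the small technicality in your steps (a)--(c): the subsequence families are not defined at every $\lambda$, as the lemma's hypothesis requires, and that is exactly what the paper's stepwise construction patches. It yields a constant rather than an inverse-polynomial separation, so $q$ can be taken $O(\kappa+\lambda)$.

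\textbf{One small misattribution.} Corollary~\ref{cor: continuity cost uniform} compares $\rho^k$ with $\sigma^k$ under the same key, so it does not cover cross-key pairs $(\psi^k,\phi^{k'})$. What you actually use, correctly, is Lemma~\ref{lemma: continuity cost to the same epr} applied to the hardwired single states.
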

\begin{proof}
To prove this theorem, we individually show the three properties of an EFI pair.
Efficient preparation and computational indistinguishability are both trivial consequences of the assumptions of the theorem. 
To show the statistical distance requirement, we leverage the results of Lemma~\ref{lemma: td of mixtures} and Lemma~\ref{lemma: connection of td and cd}.

\medskip
\textit{Efficient preparation:} By assumption, $\{k,\psi_{AB}^k\}_k$ and $\{k,\phi_{AB}^k\}_k$ are efficiently preparable by two uniform QPT algorithms, $\psi_{AB}^k \leftarrow \mathcal{A}(1^\lambda,k)$ and  $\phi_{AB}^k \leftarrow \mathcal{B}(1^\lambda,k)$.
Then, sample a random \(k\in K^\lambda\), use \(\mathcal{A}\) and \(\mathcal{B}\) to prepare \(q\in \operatorname{Poly}\) copies of the corresponding state, and finally forget \(k\).
The EFI pair \((\rho_0^\lambda,\rho_1^\lambda)\) is given as the tensor product of the \(q(\lambda)\) copies of the outputs of \(\mathcal{A}\) and \(\mathcal{B}\),
\begin{align}
    \label{eq:THM1-eficonstr0}
    \rho_0^\lambda &= \overline{\psi^{\otimes q(\lambda)}_{{A}{B}}}^\lambda \leftarrow \bigotimes_{1}^{q(\lambda)} \mathcal{A}(1^\lambda, k);\\
    \label{eq:THM1-eficonstr1}
    \rho_1^\lambda &= \overline{\phi^{\otimes q(\lambda)}_{{A}{B}}}^\lambda \leftarrow \bigotimes_{1}^{q(\lambda)} \mathcal{B}(1^\lambda, k).
\end{align}
Finally, let the uniform QPT algorithm \(\rho_b^\lambda \leftarrow \mathcal{C}(1^\lambda, b)\) be the EFI pair generator that outputs \(\rho_0^\lambda\) when \(b=0\) and \(\rho_1^\lambda\) when \(b=1\).
This ensures property \textit{(i)} in Definition~\ref{def: EFI pair} for the pair $(\rho_0^\lambda, \rho_1^\lambda)$. 

\medskip
\textit{Computational indistinguishability:}
The two families $\{k,\psi_{AB}^k\}_k$ and $\{k,\phi_{AB}^k\}_k$ are computationally indistinguishable, as required by property  \textit{(iii)} of Definition~\ref{def: pseudo-entanglement ABV}, i.e., $\overline{\psi^{\otimes q(\lambda)}_{{A}{B}}}^\lambda \approx \overline{\phi^{\otimes q(\lambda)}_{{A}{B}}}^\lambda$.
This directly translates to the two elements of the EFI pair in Equations~\eqref{eq:THM1-eficonstr0} and~\eqref{eq:THM1-eficonstr1}, as required by Definition~\ref{def: EFI pair} \textit{(iii)}.

\medskip
\textit{Statistical distance:}
To demonstrate that the trace distance between \(\rho_0^\lambda\) and \(\rho_1^\lambda\) is large (\(\operatorname{TD}(\rho_0^\lambda,\rho_1^\lambda)\geq 1/p(\lambda)\)), we first prove that an entanglement gap (\(c<d\)) implies that, for large enough $\lambda$, for every \(k,k'\in K^\lambda\) the trace distance between the corresponding states \(\psi_{AB}^k,\phi_{AB}^{k'}\) is large.
Then, we use Lemma~\ref{lemma: td of mixtures} to go from this pairwise distance to the distance between the two uniform mixtures over all \(k,k'\).

For this, we instead start with the equivalent contrapositive of the first argument.
If the trace distance is smaller than the inverse of any polynomial even for a single pair of elements, then there cannot be an entanglement gap between the families $\{k,\psi_{AB}^k\}_k$ and $\{k,\phi_{AB}^k\}_k$.
Formally, given $\hat E^\varepsilon_C (\{k,\psi_{AB}^k\}_k) \leq c$ and $\hat E^\varepsilon_D (\{k,\phi_{AB}^k\}_k) \geq d$, if for all $p\in\operatorname{Poly}$, for all \(\lambda \in \mathbb{N}\), there exists a \(\lambda^\star\geq \lambda\) and a pair of keys \(k,k' \in K^{\lambda^\star}\), such that 
\begin{equation}\label{eq: negl st dist}
         \frac{1}{2}\|\psi_{AB}^k - \phi_{AB}^{k^\prime}\|_1<\frac{1}{p(\lambda^\star)},
\end{equation}
then there exist infinitely many \(\mu\in\mathbb{N}\) such that \(c(\mu) \geq d(\mu)\).

First, take any non-constant positive polynomial $p\in\operatorname{Poly}$. 
Then, for each \(\lambda\in\mathbb{N}\), consider the smallest \(\lambda^\star \geq \lambda\) and the corresponding \(k,k'\in K^{\lambda^\star}\) and \(\psi_{AB}^k,\phi_{AB}^{k'}\), that fulfills Equation~\eqref{eq: negl st dist} (if there is more than one pair \((k,k')\), pick the one with the smallest keys).
Construct the families $\{{\psi^\star}^\lambda_{AB}\}_{\lambda}$ and $\{{\phi^\star}_{AB}^{\lambda}\}_{\lambda}$ as follows. (This construction is required since \(c,d\) are only defined for families and not single states.)
For \(\lambda=1\), let \({\psi^\star}^1_{AB} = \psi_{AB}^{0}\) and \({\phi^\star}^1_{AB} = \phi_{AB}^0\) (wlog, \(k,k'=0\)); then for \(\lambda>1\)
\begin{align}\label{eq: step-wise}
{\psi^\star}^\lambda_{AB} =
\begin{cases}
     \psi_{AB}^k & \text{if}\quad \lambda = \lambda^\star,\\ 
    {\psi^\star}^{\lambda-1}_{AB} & \text{if}\quad \lambda < \lambda^\star.
\end{cases}\qquad 
{\phi^\star}^\lambda_{AB} =
\begin{cases}
     \phi_{AB}^{k'} & \text{if}\quad \lambda = \lambda^\star,\\ 
    {\phi^\star}^{\lambda-1}_{AB} & \text{if}\quad \lambda < \lambda^\star.
\end{cases}
\end{align}
For convenience, let $\Lambda^\star\subset\mathbb N$ denote the set of ``bad indices'', consisting of all $\lambda>1$ for which Equation~\eqref{eq: negl st dist} holds. Namely, 
\begin{equation}\label{eq: bad indices}
    \Lambda^\star:=\left\{\lambda>1\,|\,\exists k,k^\prime\in K^\lambda,\, \frac{1}{2}\|\psi_{AB}^k - \phi_{AB}^{k^\prime}\|_1<\frac{1}{p(\lambda)} \right\}.
\end{equation}
Therefore, in the definitions of one-shot computational distillable entanglement and cost (Definitions~\ref{def: comp dist} and~\ref{def: comp cost}), the same construction as in Equation~\eqref{eq: step-wise} applies to the families of distillation and dilution channels, the functions \(n^\star_A,n^\star_B\) (which are still polynomially bounded), the upper (resp.\ lower) bound \(c^\star\) (resp.\ \(d^\star\)) and the error of distillation $\varepsilon^\star$. The families of distillation and dilution channels are constructed by repeating the selecting procedure in Equation~\eqref{eq: step-wise}:
\begin{align}\label{eq: step-wise channels}
{\Gamma^\star}^{\lambda}_{D} =
\begin{cases}
     \Gamma^1_{D}(0,\cdot) & \text{if}\quad \lambda = 1,\\ 
     \Gamma^\lambda_{D}(k^\prime,\cdot)& \text{if}\quad \lambda \in\Lambda^\star,\\
    {\Gamma^\star}^{\lambda-1}_{D} & \text{otherwise}.
\end{cases}\qquad 
{\Gamma^\star_C}^\lambda =
\begin{cases}
     \Gamma^1_{C}(0,\cdot) & \text{if}\quad \lambda = 1,\\ 
     \Gamma^\lambda_{C}(k,\cdot)& \text{if}\quad \lambda \in\Lambda^\star,\\
    {\Gamma^\star_C}^{\lambda-1} & \text{otherwise}.
\end{cases}
\end{align}
where $\Lambda^\star\subset \mathbb N$ is defined in Equation~\eqref{eq: bad indices} and $\{\Gamma_D^\lambda\}_\lambda,\{\Gamma_C^\lambda\}_\lambda$ are efficient families of distillation and dilution channels achieving valid lower bound $d$ and upper bound $c$, respectively. 
 
The construction of the remaining functions requires a slight modification. First, we provide an explicit construction for the functions $n_A^\star,n_B^\star$. For each $\lambda\in\mathbb N$,
\begin{align}\label{eq: step-wise n_a/n_b}
n^\star_{A}(\lambda) =
\begin{cases}
    n_A(1) & \text{if}\quad \lambda = 1,
    \\ 
     n_A(\lambda) & \text{if}\quad \lambda \in \Lambda^\star,\\ 
    n^\star_A(\lambda-1) & \text{otherwise}.
\end{cases}\qquad 
n^\star_{B}(\lambda)  =
\begin{cases}
     n_B(1) & \text{if}\quad \lambda = 1,
    \\ 
     n_B(\lambda) & \text{if}\quad \lambda \in \Lambda^\star,\\ 
    n^\star_B(\lambda-1) & \text{otherwise}.
\end{cases}
\end{align}
The functions $n_A^\star,n_B^\star$ are polynomially bounded, since $n_A$ and $n_B$, which are polynomially bounded by Definition~\ref{def: pseudo-entanglement ABV}, asymptotically upper-bound $n_A^\star$ and $n_B^\star$, respectively. A similar argument is used to infer that the families of distillation and dilution  channels, \(\{{\Gamma_D^{\star\lambda}}\}_{\lambda}\) and \(\{{\Gamma_C^{\star\lambda}}\}_{\lambda}\), are efficient. Therefore, the  upper (resp.\ lower) bound \(c^\star\) (resp.\ \(d^\star\)) holds, i.e.,
\begin{equation}\label{eq:THM1-gaptau}
    \hat E^{\varepsilon^\star}_C(\{{\psi^\star}^{\lambda}_{AB}\}_\lambda)\leq c^\star \quad\text{and}\quad \hat E^{\varepsilon^\star}_D(\{{\phi^\star}^{\lambda}_{AB}\}_\lambda)\geq d^\star,
\end{equation}
where the error of distillation $\varepsilon^\star$ is defined below, and the corresponding valid upper and lower bounds, $c^\star$ and $d^\star$, are defined in a similar way as in Equation~\eqref{eq: step-wise n_a/n_b}. For each $\lambda\in\mathbb N$,
\begin{align}\label{eq: step-wise c/d}
c^\star(\lambda) =
\begin{cases}
    c(1) & \text{if}\quad \lambda = 1,
    \\ 
     c(\lambda) & \text{if}\quad \lambda \in \Lambda^\star,\\ 
    c^\star(\lambda-1) & \text{otherwise}.
\end{cases}\qquad 
d^\star(\lambda) =
\begin{cases}
    d(1) & \text{if}\quad \lambda = 1,
    \\ 
     d(\lambda) & \text{if}\quad \lambda \in \Lambda^\star,\\ 
    d^\star(\lambda-1) & \text{otherwise}.
\end{cases}
\end{align}

Second, by construction, for all elements in the families $\{{\psi^\star}^\lambda_{AB}\}_{\lambda}$ and $\{{\phi^\star}_{AB}^{\lambda}\}_{\lambda}$,
\begin{equation}\label{eq:THM1-tauTD}
    \frac{1}{2}\left\|{\psi^\star}^{\lambda}_{AB} - {\phi^\star}_{AB}^{\lambda} \right\|_1 \leq \frac{1}{p^\star(\lambda)},
\end{equation}
where $p^\star$ is defined stepwise, repeating the construction described in Equation~\eqref{eq: step-wise}. Analogously to the construction in Equation~\eqref{eq: step-wise n_a/n_b}, we define functions $p^\star$ and $\varepsilon^\star$, for each $\lambda\in\mathbb N$, as
\begin{align}\label{eq: step-wise p and eps}
 p^\star (\lambda)=
    \begin{cases}
    1 & \text{if}\quad \lambda = 1, \\
    p(\lambda) & \text{if}\quad \lambda \in \Lambda^\star,\\
      p^\star(\lambda-1) & \text{otherwise}.
    \end{cases}\qquad 
 \varepsilon^\star (\lambda)=
    \begin{cases}
    1 & \text{if}\quad \lambda = 1,\\
    \varepsilon(\lambda) & \text{if}\quad \lambda \in \Lambda^\star,\\ 
      \varepsilon^\star(\lambda-1) & \text{otherwise}.
    \end{cases}
\end{align}

Observe that the stepwise constructions of $p^\star$ and $\varepsilon^\star$ in Equation~\eqref{eq: step-wise p and eps}, while potentially slowing their respective growth and decay, preserve asymptotic behavior of the original functions $p$ and $\varepsilon$, respectively. That is, $\varepsilon^\star$ and $1/p^\star$ both converge to zero as $\lambda$ goes to infinity. Therefore, for large enough $\lambda$, by Lemma~\ref{lemma: connection of td and cd}, since the trace distance is upper-bounded by a vanishing at infinity function, the bounds fulfill \(c^\star \geq d^\star\) for \(\lambda\) larger than some \(\mu\in\mathbb{N}\).
Furthermore, from the definition of fully-computational pseudoentanglement (Definition~\ref{def: pseudo-entanglement ABV}), as a consequence of the definitions of computational distillable entanglement and cost (Definitions~\ref{def: uni comp dist},~\ref{def: uni comp cost}), it is required that the upper and lower bounds hold for all the states in the initial keyed families $\{k,\psi_{AB}^k\}_k$ and $\{k,\phi_{AB}^k\}_k$.
In particular, they must hold for the states used to define each \(\psi^\star_{AB} = \psi_{AB}^k\) and \(\phi^\star_{AB} = \phi_{AB}^{k'}\) and the respective families  \(\{{\psi^\star}^{\lambda}_{AB}\}_\lambda\) and \(\{{\phi^\star}^{\lambda}_{AB}\}_\lambda\).
Therefore, at each of the (infinitely many) values \(\lambda^\star \geq \mu\), the corresponding upper bound on \(\hat E^\varepsilon_C(\{{k,\psi}^{k}_{AB}\}_k)\leq c\) and lower bound on \(\hat E^\varepsilon_D(\{{k,\phi}^{k}_{AB}\}_k)\geq d\), must consider that 
\(c(\lambda^\star) \geq d(\lambda^\star)\), which proves the contrapositive. Therefore, the original direction holds:
if for all \(\lambda\in\mathbb{N}\), \(c(\lambda) < d(\lambda)\), then there exists a $p\in\operatorname{Poly}$ and \(\lambda_0\in\mathbb{N}\) such that for all \(\lambda\geq \lambda_0\) and for all $k,k^\prime\in\{0,1\}^{\kappa(\lambda)}$, 
\begin{equation}
    \frac{1}{2}\|\psi_{AB}^k - \phi_{AB}^{k^\prime}\|_1 \geq \frac{1}{p(\lambda)}.
\end{equation}

Finally, from Lemma~\ref{lemma: td of mixtures}, given that the families $\{k,\psi^k_{AB}\}_k$ and $\{k,\phi_{AB}^k\}_k$ have pairwise trace distance larger than some inverse polynomial, taking their averages over \(q(\lambda)\) copies (\(q\in\operatorname{Poly}\)) gives that, for \(\lambda\) large enough, there exists a \(\varepsilon\in\operatorname{Negl}\) such that
\begin{equation}
    \frac{1}{2}\left\|\overline{\psi^{\otimes q(\lambda)}_{{A}{B}}}^\lambda  - \overline{\phi^{\otimes q(\lambda)}_{{A}{B}}}^\lambda \right\|_1 \geq1-\varepsilon(\lambda).
\end{equation}

Therefore, the states in the proposed EFI pair, $\rho^\lambda_0$ and $\rho_1^\lambda$, are statistically far from each other. That is, there exists a polynomial $p\in\operatorname{Poly}$, such that, for large enough \(\lambda\), \(\operatorname{TD}(\rho_0^\lambda, \rho_1^\lambda) \geq \frac{1}{p(\lambda)}\).
This fulfills the statistical distance requirement of EFI pairs (Definition~\ref{def: EFI pair} \textit{(ii)}) and concludes the proof.
\end{proof}

\section{Equivalence Between Inefficiently-Distillable Pseudoentanglement and EFI Pairs}\label{sec: equiv}
We now demonstrate the second of our two main theorems: the equivalence between inefficiently-distillable pseudoentanglement~\cite{GE24} (Definition~\ref{def: pseudo-entanglement GE}) and EFI pairs (Definition~\ref{def: EFI pair}).
Our starting point is the implication proved in~\cite{GE24} where inefficiently-distillable pseudoentanglement is necessary for EFI pairs to exist.
This is achieved by using the existence of EFI pairs to construct a uniform QPT algorithm that outputs two families of states that are computationally indistinguishable, but that exhibit an entanglement gap.
We state their result in Lemma~\ref{lemma:efi implies pe ge} and refer to the original work for the full proof~\cite[Theorem 3.1]{GE24}.

\begin{lemma}[EFI pairs \(\Rightarrow\) Inefficiently-distillable pseudoentanglement~{\cite[Theorem 3.1]{GE24}}]\label{lemma:efi implies pe ge}
    Let $\lambda\in\mathbb N$ be a security parameter, $n:\mathbb N\to\mathbb N$ be a polynomially bounded function, and an $\varepsilon(\lambda)\in O(2^{-\lambda})$.\\
    If there exists an EFI pair outputting \(n(\lambda)\)-qubit states, then there exist two families of states, $\{\psi^\lambda_{AB}\}_\lambda$ and $\{\phi^\lambda_{AB}\}_\lambda$, consisting of $(n(\lambda)+1)$-qubit bipartite mixed states, such that $E_D^{\varepsilon}(\{\phi^\lambda_{AB}\}_\lambda)\geq 1$ and $\hat E_C^\varepsilon(\{\psi^\lambda_{AB}\}_\lambda) = 0$, and $\{{\psi_{AB}^{\lambda\,\otimes q(\lambda)}}\}_\lambda\approx \{{\phi_{AB}^{\lambda\,\otimes q(\lambda)}}\}_\lambda$ for all $q\in{\operatorname{Poly}}$.
\end{lemma}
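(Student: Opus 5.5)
Since this is the result of~\cite[Theorem 3.1]{GE24}, the plan is to reproduce their construction from an EFI pair $(\rho_0^\lambda,\rho_1^\lambda)_\lambda$ on $n(\lambda)$ qubits. First, amplify the pair. By Definition~\ref{def: EFI pair}~\textit{(ii)}, $\operatorname{TD}(\rho_0^\lambda,\rho_1^\lambda)\geq 1/p(\lambda)$. Taking $t(\lambda)\in\operatorname{Poly}$ copies, a standard hybrid argument preserves computational indistinguishability. Fuchs-van de Graaf (Proposition~\ref{prop: fvdg}) together with multiplicativity of fidelity (Proposition~\ref{prop: mult fidelity}) then pushes the trace distance to $1-O(2^{-\lambda})$ for $t\approx p^2\lambda$. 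After relabelling, I may therefore assume the pair is $O(2^{-\lambda})$-close to perfectly distinguishable.

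Second, build the two families with a single extra qubit. Let Alice hold a qubit $A$ and Bob hold the $n$-qubit register $B$. Set
\begin{equation*}
\psi_{AB}^\lambda=\tfrac12\big(\ketbra{0}{0}_A\otimes\rho_0^\lambda+\ketbra{1}{1}_A\otimes\rho_1^\lambda\big),
\end{equation*}
and let $\phi_{AB}^\lambda$ be the state obtained by running the purified EFI generator coherently, controlled on $\ket{+}_A$, and tracing out Bob's ancillas. The state $\psi^\lambda$ is classically correlated and efficiently preparable by LOCC: Alice samples $b$, sends it to Bob, and Bob runs $\mathcal A(1^\lambda,b)$. This gives $\hat E_C^0=0$ exactly. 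Computational indistinguishability of many copies follows because $\phi$ differs from $\psi$ only by the phase coherence on $A$. Detecting that coherence requires interfering the branches $b=0,1$, which by a reduction (the swap/EFI-to-commitment style argument of~\cite{BCQ23}) would break the hiding of the EFI pair. Polynomially many copies are again handled by a hybrid argument over copies, using efficient preparability of both states.

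Third, show $E_D^{\varepsilon}(\phi^\lambda)\geq 1$ with $\varepsilon\in O(2^{-\lambda})$. The argument uses the near-orthogonality of the supports of $\rho_0,\rho_1$. Bob applies the (inefficient) Helstrom measurement coherently and writes $b$ into a qubit $B'$. This maps $\phi$ to a state $O(2^{-\lambda})$-close in fidelity to $\tfrac{1}{\sqrt2}(\ket{00}+\ket{11})_{AB'}$ tensored with junk. The junk is then uncomputed or decoupled via Uhlmann's theorem, since the purifications of the two branches are close to orthogonal. Tracing out the junk yields one EPR pair with error $O(2^{-\lambda})$.

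I expect the main obstacle to be the computational indistinguishability step. One must argue rigorously that the coherent superposition cannot be efficiently distinguished from the mixture, which is where the EFI hiding property is genuinely used. The amplification and the inefficient distillation are comparatively routine.
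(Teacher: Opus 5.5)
The paper only cites \cite{GE24} for this lemma, so your argument has to stand on its own. Your amplification step and the zero-cost LOCC preparation of $\psi^\lambda$ are fine. The coherent $\phi^\lambda$, however, breaks both remaining steps.

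Write the purified generator as $\ket{\psi_b}_{BY}$, where $Y$ is the discarded work register. The off-diagonal block of $\phi^\lambda_{AB}$ is $\frac12\operatorname{Tr}_Y\ketbra{\psi_0}{\psi_1}$. By Uhlmann's theorem, $\|\operatorname{Tr}_Y\ketbra{\psi_0}{\psi_1}\|_1=\sqrt{\operatorname{F}(\rho_0^Y,\rho_1^Y)}$. No party acts on $Y$, so near-orthogonality of the branches on $B$ lets Bob learn $b$, but it does not let him decouple the junk. Decoupling requires $\rho_0^Y\approx\rho_1^Y$, which the EFI definition does not provide. For instance, if the generator merely leaves a copy of $b$ in a discarded work qubit, then $\phi^\lambda=\psi^\lambda$ is separable and $E_D^\varepsilon(\phi^\lambda)=0$ for every $\varepsilon<1/2$.

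Indistinguishability fails too, because EFI hiding constrains only the marginals $\rho_b^\lambda$, not whether the two branches can be efficiently interfered. Let $\rho_b^\lambda$ be the classical state $(r,G(s)\oplus b\cdot r)$, with uniform $r\in\{0,1\}^{2\lambda}$, uniform $s\in\{0,1\}^{\lambda}$, and a length-doubling PRG $G$. This is an EFI pair with trace distance at least $1-2^{-\lambda}$. With $Y$ holding $(r,s)$ one has $\rho_0^Y=\rho_1^Y$, so your distillation step works here. But $\ket{\psi_1}=V_B\ket{\psi_0}$ for the efficient unitary $V$ that XORs $r$ into the second register. Applying $V$ controlled on $A$ and then measuring $A$ in the $X$ basis gives outcome $+$ with probability $1$ on $\phi^\lambda$ and $1/2$ on $\psi^\lambda$.

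In general your two steps pull against each other. Coherence needs an Uhlmann unitary on Bob's side mapping one branch to the other, while hiding the coherence needs that unitary to be hard to implement. That is a binding-type property, and an arbitrary EFI pair does not give it to you without further commitment machinery.

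The easy repair is to drop coherence altogether. With Bell states $\Phi^\pm_{AB_1}$ and Bob holding the EFI register, take $\phi^\lambda=\frac12(\Phi^+_{AB_1}\otimes\rho_0^\lambda+\Phi^-_{AB_1}\otimes\rho_1^\lambda)$ and $\psi^\lambda=\frac12(\ketbra{00}{00}+\ketbra{11}{11})_{AB_1}\otimes\rho_0^\lambda$. Then:
\begin{enumerate}
\item $\psi^\lambda$ is efficiently LOCC-preparable from no EPR pairs.
\item $\phi^{\lambda\,\otimes q}\approx\psi^{\lambda\,\otimes q}$ follows from a one-challenge reduction, in which the reduction prepares the Bell part itself, plus a hybrid over copies.
\item Bob's Helstrom measurement with outcome $x$, followed by $Z^x$ on $B_1$, gives EPR fidelity $\frac12\left(1+\operatorname{TD}(\rho_0^\lambda,\rho_1^\lambda)\right)=1-O(2^{-\lambda})$ after amplification.
\end{enumerate}
This uses one extra qubit compared with the count in the statement. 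That is harmless, since your amplification already changes the qubit count.
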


In Lemma~\ref{lemma: pse imply efi GE}, we show that the existence of inefficiently-distillable pseudoentanglement implies the existence of EFI pairs. This effectively shows that the existence of one is necessary and sufficient for the existence of the other, which we then state in Theorem~\ref{thm:ipe_equiv_efi}.

\begin{lemma}[Inefficiently-distillable pseudoentanglement \(\Rightarrow\) EFI pairs]\label{lemma: pse imply efi GE}
    Let \(\lambda\in\mathbb N\), $\varepsilon:\mathbb N\rightarrow(0,1)$ be a vanishing at infinity function.
    Let $\{\psi_{AB}^\lambda\}_{\lambda}$, $\{\phi_{AB}^\lambda\}_{\lambda}$ be the two families in the inefficiently-distillable pseudoentanglement structure as in Definition~\ref{def: pseudo-entanglement GE}, with $c(\lambda)<d(\lambda)$. Assume there also exist a uniform QPT algorithm $\mathcal A$ generating $\{\psi^\lambda_{AB}\}_\lambda$ and a uniform QPT algorithm $\mathcal B$ generating $\{\phi_{AB}^\lambda\}_\lambda$. 
    Then, there exists an EFI pair as in Definition~\ref{def: EFI pair}.
\end{lemma}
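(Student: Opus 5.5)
We take the candidate pair to be $(\rho_0^\lambda,\rho_1^\lambda):=(\psi_{AB}^\lambda,\phi_{AB}^\lambda)$, the pseudoentanglement families themselves, with no mixing over keys and no amplification. Efficient preparation holds by assumption: the generator $\mathcal C(1^\lambda,b)$ runs $\mathcal A(1^\lambda)$ if $b=0$ and $\mathcal B(1^\lambda)$ if $b=1$, giving Definition~\ref{def: EFI pair}~\textit{(i)}. Computational indistinguishability is property~\textit{(iii)} of Definition~\ref{def: pseudo-entanglement GE} with $q\equiv 1$, giving Definition~\ref{def: EFI pair}~\textit{(iii)}. The only nontrivial part is the noticeable statistical distance.

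For \textit{(ii)}, we argue by contradiction using Corollary~\ref{cor: connection of td and cd_in}. Suppose no polynomial $p$ satisfies $\operatorname{TD}(\psi^\lambda,\phi^\lambda)\ge 1/p(\lambda)$ for all large $\lambda$. Taking $p(\lambda)=\lambda$, we get infinitely many $\lambda$ with $\operatorname{TD}(\psi^\lambda,\phi^\lambda)<1/\lambda$. Along this infinite set $\Lambda^\star=\{\lambda_1<\lambda_2<\dots\}$, we build subfamilies as in the stepwise construction in the proof of Theorem~\ref{thm:pe_to_efi}:
\[
{\psi^\star}^\lambda=\psi^{\lambda_j},\qquad {\phi^\star}^\lambda=\phi^{\lambda_j},\qquad \text{where } \lambda_j=\max\{\mu\in\Lambda^\star:\mu\le\lambda\}
\]
(and set the value to $\lambda_1$ below $\lambda_1$). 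The functions $c^\star,d^\star,\varepsilon^\star,\delta^\star$, the qubit counts, and the dilution channels are reindexed in the same way.

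This step needs several checks, and we expect it to be the main (though bookkeeping-level) obstacle, exactly as in the proof of Theorem~\ref{thm:pe_to_efi}. The reindexed dilution channels remain an efficient family, since their size is at most $p(\lambda_j)\le p(\lambda)$ for monotone $p$; hence $\hat E_C^{\varepsilon^\star}(\{{\psi^\star}^\lambda\})\le c^\star$. The inequality $E_D^{\varepsilon^\star}({\phi^\star}^\lambda)\ge d^\star(\lambda)$ holds pointwise. Both $\varepsilon^\star$ and $\delta^\star(\lambda):=1/\lambda_j$ still vanish at infinity. The condition $\delta^\star+\sqrt{\varepsilon^\star}\le1$ required by Lemma~\ref{lemma: continuity cost to the same epr} holds for large $\lambda$, and a finite prefix can be discarded or patched.

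Corollary~\ref{cor: connection of td and cd_in} then gives $c^\star(\lambda)\ge d^\star(\lambda)$ for all large $\lambda$. Hence $c(\lambda_j)\ge d(\lambda_j)$ for some $\lambda_j\in\Lambda^\star$, contradicting $c<d$. So there exist $p\in\operatorname{Poly}$ and $\lambda_0$ with $\operatorname{TD}(\rho_0^\lambda,\rho_1^\lambda)\ge 1/p(\lambda)$ for all $\lambda\ge\lambda_0$, which is Definition~\ref{def: EFI pair}~\textit{(ii)}.

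Because the definition carries no keys, Lemma~\ref{lemma: td of mixtures} is not needed here. If a stronger (near-$1$) distance is wanted, one could additionally take $q$ copies and use multiplicativity of fidelity (Proposition~\ref{prop: mult fidelity}) with Fuchs-van de Graaf, but noticeable distance is all the EFI definition requires.
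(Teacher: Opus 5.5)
Your proposal is correct and follows essentially the paper's proof. It uses the same candidate pair $(\psi_{AB}^\lambda,\phi_{AB}^\lambda)$, obtains efficient preparation and indistinguishability ($q\equiv 1$) directly from the assumptions, and gets noticeable distance from the same contrapositive stepwise-subfamily argument as in Theorem~\ref{thm:pe_to_efi}, with Lemma~\ref{lemma: connection of td and cd} replaced by Corollary~\ref{cor: connection of td and cd_in} and no need for Lemma~\ref{lemma: td of mixtures} since there are no keys; your explicit handling of the monotone resource bound and of the finite prefix where $\delta^\star+\sqrt{\varepsilon^\star}>1$ is more careful than the paper's sketch.
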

\begin{proof}
     Similarly to the proof of Theorem~\ref{thm:pe_to_efi}, we verify each property of an EFI pair (Definition~\ref{def: EFI pair}) for the pair $(\psi^\lambda_{AB}, \phi^\lambda_{AB})$. 
     First, we address the efficiency of preparing a given pair, which is given by assumption.
     Second, we go over the computational indistinguishability of the states in the candidate pair, which follows directly.
     Third, we address the statistical distance requirement between the states $\psi^\lambda_{AB}$ and $\phi^\lambda_{AB}$, leveraging properties \textit{(i)} and \textit{(ii)} of Definition~\ref{def: pseudo-entanglement GE}.

        \medskip
        \textit{Efficient preparation:} By assumption, the families \(\{\psi_{AB}^\lambda\}_\lambda\), \(\{\phi_{AB}^\lambda\}_\lambda\) are efficiently preparable, i.e., there exist two uniform QPT algorithms \(\psi_{AB}^\lambda \leftarrow \mathcal{A}(1^\lambda)\) and \(\phi_{AB}^\lambda \leftarrow \mathcal{B}(1^\lambda)\).
        Then, let the uniform QPT algorithm \(\rho_b^\lambda \leftarrow \mathcal{C}(1^\lambda, b)\) be the EFI pair generator that calls \(\rho_0^\lambda = \psi^\lambda_{AB} \leftarrow \mathcal{A}(1^\lambda)\) when \(b=0\) and \(\rho_1^\lambda = \phi^\lambda_{AB} \leftarrow \mathcal{B}(1^\lambda)\) when \(b=1\).
        This guarantees property \textit{(i)} in Definition~\ref{def: EFI pair} for $(\rho_0^\lambda, \rho_1^\lambda)$. 

        \medskip
        \textit{Computational indistinguishability:} 
        By definition of inefficiently-distillable pseudoentanglement, the two families  $\{\psi_{AB}^\lambda\}_\lambda$ and $\{\phi_{AB}^\lambda\}_\lambda$  are computationally indistinguishable, i.e., $\{{\psi^{\lambda\,\otimes q(\lambda)}_{AB}}\}_\lambda \approx \{{\phi^{\lambda\,\otimes q(\lambda)}_{AB}}\}_\lambda$.
        This, for \(q(\lambda)=1\), directly translates to the two elements of the EFI pair as in Definition~\ref{def: EFI pair} \textit{(iii)}.

        \medskip
        \textit{Statistical distance:} 
        Since the families  \(\{\psi_{AB}^\lambda\}_\lambda\), \(\{\phi_{AB}^\lambda\}_\lambda\) can be viewed as families directly indexed by the security parameter \(\lambda\), the property \textit{(ii)} in Definition~\ref{def: EFI pair} for $(\rho_0^\lambda, \rho_1^\lambda)$ follows from the same argument as in the proof of Theorem~\ref{thm:pe_to_efi}, with a modification of replacing Lemma~\ref{lemma: connection of td and cd} with Corollary~\ref{cor: connection of td and cd_in}. 

        \smallskip
        This concludes the proof.
    \end{proof}

\begin{theorem}[Inefficiently-distillable pseudoentanglement \(\Leftrightarrow\) EFI pairs]\label{thm:ipe_equiv_efi}
    Let \(\lambda\in\mathbb N\), and a $\varepsilon:\mathbb N\rightarrow(0,1)$ with $\varepsilon(\lambda)\in O(2^{-\lambda})$.
    The existence of inefficiently-distillable $(\varepsilon,c,d)$-pseudoentanglement as in Definition~\ref{def: pseudo-entanglement GE}, with $c(\lambda)<d(\lambda)$ and such that there exist uniform QPT algorithms outputting both families of states (\(\{ \psi_{AB}^\lambda\}_\lambda\) and \(\{\phi_{AB}^\lambda\}_\lambda\)), is both a necessary and a sufficient condition for the existence of EFI pairs as in Definition~\ref{def: EFI pair}.
\end{theorem}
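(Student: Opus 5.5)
The plan is to prove the two directions separately, each by invoking a result already established in the paper, and then to check that the parameter regimes agree.

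\emph{Sufficiency} (pseudoentanglement $\Rightarrow$ EFI). Assume inefficiently-distillable $(\varepsilon,c,d)$-pseudoentanglement with $c<d$, $\varepsilon\in O(2^{-\lambda})$, and both families preparable by uniform QPT algorithms. Since $\varepsilon\in O(2^{-\lambda})$ is negligible, it is in particular vanishing at infinity and takes values in $(0,1)$. Hence every hypothesis of Lemma~\ref{lemma: pse imply efi GE} holds, and that lemma directly yields an EFI pair, namely $(\psi_{AB}^\lambda,\phi_{AB}^\lambda)_\lambda$ itself. Nothing further is needed here.

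\emph{Necessity} (EFI $\Rightarrow$ pseudoentanglement). Start from an EFI pair and apply Lemma~\ref{lemma:efi implies pe ge} (\cite[Theorem 3.1]{GE24}). This gives families $\{\psi_{AB}^\lambda\}_\lambda$ and $\{\phi_{AB}^\lambda\}_\lambda$ of $(n(\lambda)+1)$-qubit states satisfying:
\begin{itemize}
    \item $\hat E_C^\varepsilon(\{\psi_{AB}^\lambda\}_\lambda)=0$, so $c\equiv 0$ is a valid upper bound;
    \item $E_D^\varepsilon(\phi_{AB}^\lambda)\geq 1$ for all $\lambda$, so $d\equiv 1$;
    \item computational indistinguishability for every polynomial number of copies;
    \item all of this with $\varepsilon\in O(2^{-\lambda})$.
\end{itemize}
These are exactly conditions (i)--(iii) of Definition~\ref{def: pseudo-entanglement GE}, with the gap $c(\lambda)=0<1=d(\lambda)$.

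The remaining requirement for necessity is that both families are uniformly QPT-preparable, as the theorem demands. For this I will recall the construction in \cite{GE24}. Each state is built from the EFI generator $\mathcal{A}(1^\lambda,b)$ together with a single EPR pair or a classically correlated bit: roughly, a controlled preparation of $\rho_b$ conditioned on a bit that is either maximally entangled or merely classically correlated across $A\!:\!B$. Both variants are produced by a polynomial-size circuit that calls $\mathcal{A}$, so they are uniformly efficient.

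I expect the main obstacle to be only bookkeeping. First, the efficient-preparation property of the \cite{GE24} families is not stated explicitly in Lemma~\ref{lemma:efi implies pe ge}, so I would justify it by pointing to their explicit construction, as the paper already does when it says all known constructions are efficiently preparable. Second, the error function $\varepsilon$ must be the same $O(2^{-\lambda})$ in both directions. This is consistent, since Lemma~\ref{lemma: pse imply efi GE} only requires $\varepsilon$ to be vanishing at infinity, a strictly weaker condition than $\varepsilon\in O(2^{-\lambda})$.
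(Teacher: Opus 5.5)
Your proposal is correct and follows the paper's proof. The paper likewise gets sufficiency from Lemma~\ref{lemma: pse imply efi GE}, using that $\varepsilon\in O(2^{-\lambda})$ is vanishing at infinity, and necessity from Lemma~\ref{lemma:efi implies pe ge} with $c\equiv 0<1\equiv d$; your explicit check that the \cite{GE24} families are uniformly QPT-preparable is something the paper leaves implicit (it only asserts it in the related-work discussion), so it is useful bookkeeping rather than a different route.
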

\begin{proof}
    The proof follows directly from Lemmas~\ref{lemma:efi implies pe ge} and~\ref{lemma: pse imply efi GE}.
\end{proof}

\section{Discussion}
In this work, we investigate the relation between two operational versions of pseudoentanglement, introduced in~\cite{ABV23} and extended in~\cite{GE24}, and EFI pairs~\cite{BCQ23}.
While the landscape of pseudoentanglement definitions is broad, given how recently these definitions were established, we aim to study the versions that entail an operational meaning.
As a central focus of our work, the nature of cryptography and cryptographic protocols carries such an operational meaning, closely related to the LOCC under the computational constraints framework first defined in~\cite{ABV23}.
Moreover, the difference between the studied definitions~\cite{ABV23,GE24} regarding the efficiency of the distillable entanglement of the high-entanglement family further highlights the fundamental difference between efficient and inefficient verification, which has been recurring in the landscape of minimal assumptions for cryptography (e.g., for one-way state generators or one-way puzzles).

In the seminal work of~\cite{ABV23}, the first relations between fully-computational pseudoentanglement and cryptography were established.
Particularly, it was shown how to construct pseudoentanglement from OWFs.
Still, relating pseudoentanglement with weaker primitives in cryptography, such as EFI pairs, was left as an open question in~\cite{ABV23}. 
We answer this question by proving that EFI pairs can be constructed from pseudoentanglement, when the underlying families are uniformly efficiently preparable, and are hence necessary for the existence of pseudoentanglement (Theorem~\ref{thm:pe_to_efi}).
Moreover, it had previously been shown that the existence of a relaxed definition of pseudoentanglement, inefficiently-distillable pseudoentanglement, is necessary for the existence of EFI pairs, with the converse being left as an open question~\cite{GE24}.
Here, we also respond to this question affirmatively, showing that the existence of EFI pairs is both necessary and sufficient for the existence of efficiently preparable but inefficiently-distillable pseudoentanglement (Theorem~\ref{thm:ipe_equiv_efi}).

In proving these theorems, we introduce the study of the continuity of computational entanglement by establishing a technical result demonstrating the error-continuity of the computational entanglement cost (Lemma~\ref{lemma: continuity cost to the same epr}).
This is the first continuity relation to be shown regarding computational entanglement measures.
Moreover, we show that an entanglement gap between the computational distillable entanglement and cost of two families of states depends on a large trace distance between the states of those same families (Lemma~\ref{lemma: connection of td and cd}).
Additionally, we introduce a binary quantum state discrimination task that asks to distinguish between states sampled uniformly from one of two families, and we prove a sufficient bound on the trace distance between mixtures of states of respective families (Lemma~\ref{lemma: td of mixtures}). 
This lemma demonstrates a general amplification result that is exponential in the number of copies, in line with the usual amplification results in cryptography.
We believe that these newly established connections advance the understanding of the role of physical pseudoresources in computational cryptography, particularly in studying its minimal assumptions.
Furthermore, the techniques we developed for the proofs of these relations may also be of independent interest, with applications not only in cryptography but also in other fields of quantum information theory.

Through a more physical lens,~\cite{ABV23} discusses how pseudoentanglement reflects properties of the AdS/CFT correspondence.
Exploring this connection to bridge results from physics and computational complexity, and vice versa, could prove a relevant endeavor.
There, another path would be to link the existence of computational cryptography to emerging phenomena arising from fundamental hypotheses in high-energy physics.
Moreover, our results also fit in a wide realm connecting efficient computation and physics, from the strong Church-Turing thesis to more recent results exploring how computational restrictions affect entanglement theory~\cite{ABV23}, including new limit laws for pure-state computational entanglement~\cite{LRIJ25}, properties of computational entanglement measures~\cite{RLE25}, and theories of efficient measurements \cite{YHK25}. 

Lastly, our work leaves important open questions. 
First, it remains open whether it is possible to build more powerful cryptographic primitives from fully-computational pseudoentanglement, as our work does not establish how powerful fully-computational pseudoentanglement actually is.
Nevertheless, inefficiently-distillable pseudoentanglement is as weak as EFI pairs.
Showing the equivalence or a degree of separation between the different pseudoentanglement definitions is still a relevant question.
Independently, relating physical phenomena and hypotheses to computational hardness conjectures, as well as analyzing the feasibility and experimental realizations of either version of pseudoentanglement, are also interesting directions on the physical side.

\bibliographystyle{halpha}
\bibliography{bib}

\end{document}